\documentclass[english,12pt]{article}
\usepackage[letterpaper,top=1.9cm, bottom=1.9cm, left=1.75cm, right=1.75cm]{geometry}
\usepackage[T1]{fontenc}
\usepackage[latin9]{inputenc}
\usepackage{babel}
\usepackage{amsmath}
\usepackage{amsthm}
\usepackage{amsfonts}
\usepackage{layout}
\usepackage{latexsym}
\usepackage[retainorgcmds]{IEEEtrantools}
\usepackage{graphicx}
\usepackage{color}
\usepackage{dsfont}
\usepackage{appendix}
\usepackage{hyperref}
\usepackage{bbm}
\usepackage{bm}
\usepackage{exscale}
\usepackage[normalem]{ulem}
\usepackage{mathrsfs}
 \usepackage{enumerate}

\usepackage{float}

\newcommand{\ud}{\,\mathrm{d}}
\newcommand{\R}{\mathbb{R}}

\definecolor{DarkGreen}{rgb}{0.2,0.6,0.2}

\def\red#1{\textcolor{red}{#1}}

\def\eps{\varepsilon}

\def\CBV{\text{\sl CBV}}

\numberwithin{equation}{section}

\def\Ind#1{{\mathbbmss 1}_{_{\scriptstyle #1}}}

\def\ua{\uparrow}

\def\wh{\widehat}
\def\wt{\widetilde}
\def\bbar{\overline}

\def\ignore#1{}

\def\bR{{\mathbb R}}

\def\bN{\mathbb N}

\def\bT{\mathbb T}

 

\title{\textbf{Model-free portfolio theory\\ and its functional master formula}}
\author{ \normalsize Alexander Schied\footnote{Department of Statistics and Actuarial Science, University of Waterloo, {\tt aschied@uwaterloo.ca}}\qquad  Leo Speiser\setcounter{footnote}{3}\footnote{Department of Mathematics, University of Mannheim, {\tt speiser.leo@gmail.com}} \qquad  Iryna Voloshchenko\setcounter{footnote}{6}\footnote{
Department of Mathematics, University of Mannheim,  {\tt irynaice@gmail.com} \hfill\break A.S.~gratefully acknowledges partial  supported by Deutsche Forschungsgemeinschaft  DFG through RTG 1953 and  by  the
 Natural Sciences and Engineering Research Council of Canada through grant RGPIN-2017-0405.\hfill\break
 I.V.~gratefully acknowledges support by Deutsche Forschungsgemeinschaft DFG through the Research Training Group RTG 1953.}}

\date{\normalsize First version: June 10, 2016\\
\normalsize This version: May 23, 2018}

\begin{document}	

\maketitle 

\begin{abstract} 
We use pathwise It\^o calculus  to prove  two strictly pathwise versions of the master formula in Fernholz' stochastic portfolio theory.  Our first version is set within the framework of F\"ollmer's pathwise It\^o calculus and works for portfolios generated from functions that may depend on the current states of the market portfolio and an additional path  of finite variation. The second version is formulated within the functional pathwise It\^o calculus of Dupire (2009) and Cont \& Fourni\'e (2010) and allows for portfolio-generating functionals that may depend additionally on the entire path of the market portfolio. Our results are illustrated by several examples and shown to work on empirical market data. 
\end{abstract}

\noindent{\bf Keywords:} Pathwise It\^o calculus; F\"ollmer integral; functional It\^o formula; portfolio analysis; market portfolio; portfolio generating functionals;  functional master formula on path space; entropy weighting

\newtheorem{thm}{Theorem}[section]
\newtheorem{lemma}[thm]{Lemma}
\newtheorem{prop}[thm]{Proposition}
\newtheorem{cor}[thm]{Corollary}
 \theoremstyle{definition}
\newtheorem{definition}[thm]{Definition}
\newtheorem{rem}[thm]{Remark}
\newtheorem{ex}[thm]{Example}
\newtheorem{pr}[thm]{Problem}
\newtheorem{assump}[thm]{Assumption}

\theoremstyle{plain}

\section{Introduction} 

The purpose of this paper is twofold. On the one hand, it deals with an extension of the master formula in stochastic portfolio theory to path-dependent portfolio generating functions. On the other hand, it yields a new case study in which continuous-time trading strategies can be constructed in a probability-free manner by means of pathwise It\^o calculus.

Stochastic portfolio theory (SPT) was  introduced  by Fernholz~\cite{F_div, F_pfgenfct, F};  see also  Karatzas and Fernholz~\cite{FK08} for an overview. Its goal is to construct investment strategies that outperform a certain reference portfolio such as the market portfolio $\mu(t)$; see, e.g.,~\cite{FKK05,VK15,KR16,Ruf18}. Here, our focus is mainly on functionally generated portfolios, which in standard SPT are generated from functions $G(t,\mu(t))$ of the current state, $\mu(t)$, of the market portfolio. The performance of the functionally generated portfolio relative to the market portfolio can be described in a very convenient way by the so-called \emph{master formula} of SPT. See Strong~\cite{Strong2014} for an extension of the master formula to the case in which $G$ may additionally depend on the current state of  a continuous trajectory of bounded variation. 

The first contribution of this paper concerns the basis for the modeling framework of SPT. While price processes for SPT are usually modeled as It\^o processes, it has often been remarked that both sides of the  master formula can be understood in a strictly pathwise manner. So the question arises to what extent a stochastic model is actually needed in setting up SPT. Do price processes really need to be modeled as It\^o processes driven by Brownian motion, or is it possible to relax this condition and consider more general processes, perhaps even beyond the class of semimartingales? Can one get rid of the nullsets that are inherent in stochastic models? That is,  can one prove the master formula strictly path by path?

Our approach gives   affirmative answers  to the questions raised in the preceding paragraph. To this end, we show that SPT can be formulated within the pathwise It\^o calculus introduced by F\"ollmer~\cite{Ito_F} and  further developed to path-dependent functionals by Dupire~\cite{Dupire} and Cont and Fourni\'e~\cite{CF, CF13}. Thus, the only assumption on the trajectories of the price evolution is that they are continuous and admit  quadratic variations and covariations in the sense of~\cite{Ito_F}. This assumption is  satisfied by all typical sample paths of a continuous semimartingale but also by non-semimartingales, such as fractional Brownian motion with Hurst index $H\ge1/2$ and many deterministic fractal curves~\cite{Mishura,Schied_Takagi}.
In this sense, our paper is also a contribution to \emph{robust finance}, which aims to reduce the reliance on a probabilistic model and, thus,  to model uncertainty; see, e.g.,~\cite{bick,cox,Davis,Schied07,Schied13,SchiedVoloshchenko} for similar analyses on other financial problems. Robustness results for discrete-time SPT were previously obtained  
also by Pal and Wong~\cite{Pal16}, where the relative performance of portfolios with respect to a certain benchmark 
is analyzed using the discrete-time energy-entropy framework~\cite{Palforthcoming,Wong15a,Wong15b}. 

To discuss the second contribution of this paper, note that 
in practice portfolios are often constructed not just from current market prices or capitalizations but also from past data, such as econometric estimates,  moving or rolling averages, running maxima, realized covariance, Bollinger bands, etc. It is therefore a natural question whether it is possible to develop a  master formula for portfolios that are generated by functionals of the entire past evolution, $\mu^t:=(\mu(s\wedge t))_{s\ge0}$, of the market portfolio and maybe also other factors. In this paper, we give an affirmative answer to this question. Our main result, Theorem~\ref{Path-dependent pathwise master formula}, contains a master formula for portfolios that are generated by sufficiently smooth functionals of the form $G(t,\mu^t,A^t)$, where $A$ is an additional $m$-dimensional continuous trajectory of bounded variation that may depend on $\mu$ in an adaptive manner.  We then turn to analyzing several concrete examples for portfolios that are generated by functions of mixtures of current market portfolio weights and their moving averages. Our analysis is carried out both on a mathematical level and with empirical market data.

The paper is organized as follows. In Section~\ref{non-pathdep setting section}, we first provide a master formula based on F\"ollmer's \cite{Ito_F} pathwise It\^o calculus. It works for  portfolios generated by functions that, as in Strong \cite{Strong2014}, may depend on the current states of the market portfolio and an additional continuous trajectory $A$ of bounded variation. In this case, both the formulations and the proofs of many results from SPT, including the master formula, can be extended relatively easily to the pathwise setting. In Section~\ref{path-dependent master formula section}, we extend the results from Section~\ref{non-pathdep setting section} to portfolios that may depend on the entire past evolutions of the market portfolio and $A$. To this end, we use the pathwise functional It\^o calculus developed by Dupire~\cite{Dupire} and Cont and Fourni\'e~\cite{CF}. The main difficulty in achieving this extension is that now the It\^o integral is based on \lq\lq Riemann sums'' involving approximations of the integrator path. Therefore, a functional dependence on the integrator paths must be retained in the integrands, and   new arguments are needed so as to prove, e.g., the corresponding master formula. Our above-mentioned examples are discussed in Section~\ref{Examples section}. All proofs are given in Section~\ref{Proofs section}, and key concepts from functional It\^o calculus are recalled in the Appendix.

\goodbreak

\section{Statement of main results} \label{Sec2}

Throughout this paper, we work in a strictly pathwise setting. Our goal is to derive, first, a pathwise master formula for Stochastic Portfolio Theory (SPT)~\cite{F_div, F_pfgenfct, F, FK08}. To this end, we will use the pathwise It\^o calculus developed by F\"ollmer~\cite{Ito_F}. In a second step, we will use the functional extension of pathwise It\^o calculus, as developed by Dupire~\cite{Dupire} and Cont and Fourni\'e~\cite{CF}, so as to extend the pathwise master formula also to path-dependent functionals.

In the sequel, we consider a financial market model  consisting of $d$ risky assets and a locally riskless bond. 
The price
of the bond 
is given by 
$$ B(t) = \exp\Big(\int_0^tr(s)\,\ud s\Big),
$$
where  $r:[0,\infty) \rightarrow \R$ is a measurable short rate function  satisfying 
 $\int_{0}^{T} \vert r(s) \vert \, \mathrm{d}s < \infty$ for all $T>0$. The prices of the risky assets are described by a single $d$-dimensional continuous  path $S:[0,\infty)\to\mathbb R^d$. We emphasize that no probabilistic assumptions are made on the dynamics of $r$ and $S$. All that we require is that the components of $S$, besides being strictly positive, admit continuous covariations  in the pathwise sense proposed by F\"ollmer~\cite{Ito_F}. To recall this notion, let $\left(\mathbb{T}_n\right)_{n\in\mathbb N}$ be a refining sequence of partitions  of $[0,\infty)$, which will remain fixed for the remainder of this paper.
 That is, for fixed $n$, the partition  $\bT_n=\{t_0,t_1,\dots\}$ is such that $0= t_0<t_1<\dots$ and $t_k\to\infty$ as $k\to\infty$. Moreover, we have $\bT_1\subset\bT_2\subset\cdots,$ and the mesh of $\bT_n$ tends to zero on each compact interval as $n\ua\infty$. For fixed $n$, it will be convenient to denote the successor of $t\in\bT_n$ by $t'$, i.e., 
$$t'=\min\{u\in\bT_n\,|\,u>t\}.
$$
We then assume that for   $1\leq i,j\leq d$ and  $t\geq 0$ the sequence
\begin{equation}\label{xin}
 \sum_{s\in \mathbb{T}_n\atop s\le t}  \left(S_i(s')-S_i(s)\right)\left(S_j(s')-S_j(s)\right)
\end{equation}
converges to a finite limit, called the pathwise covariation of $S_i$ and $S_j$ and denoted by $[ S_i,S_j] (t)$, such that $t\to [ S_i,S_j] (t)$ is continuous. As usual, we write $[S_i]:=[S_i,S_i]$ and call this the pathwise quadratic variation of the real-valued path $S_i$. The class of all  continuous functions $S:[0,\infty)\to\R^d$ satisfying the preceding requirement will be denoted by $QV^d$. As mentioned above, we will require in addition that $S_i(t)>0$ for all $i$ and $t$. The corresponding subset of $QV^d$ will be denoted by $QV_+^d$.

Note that $QV^d$ depends strongly on the choice of the partition sequence $(\mathbb{T}_n)_{n\in\mathbb N}$ and is typically not a vector space~\cite{Schied_Takagi}. Moreover, polarization of the sums in \eqref{xin} implies that $[ S_i,S_j] $ exists for $S_i,S_j\in QV^1$, if and only if the quadratic variation $[S_i+S_j]$ exists and that
\begin{equation}\label{polarization identity}
[S_i,S_j](t)=\frac12\Big([S_i+S_j](t)-[S_i](t)-[S_j](t)\Big).
\end{equation}
Thus, for $d>1$, the assumption that the covariations $[S_i,S_j]$ exist cannot be reduced to the existence of the quadratic variations $[S_i]=[S_i,S_i]$.  If $S$ is a typical sample path of a continuous semimartingale then $[S_i,S_j]$ clearly coincides with the quadratic variation taken in the usual sense---provided that this sample path does not belong to a certain nullset, which in turn depends strongly on the partition sequence $(\mathbb{T}_n)_{n\in\mathbb N}$. As a matter of fact, the union of all these nullsets is equal to the entire sample space $\Omega$, because for every continuous function there exists some refining sequence of partitions along which this function has vanishing quadratic variation; see \cite[p.~47]{Freedman}.   Some authors, such as \cite{Ananova} or \cite{CF}, make the dependence of the pathwise quadratic variation (and the pathwise It\^o integral) on the sequence of partitions $\left(\mathbb{T}_n\right)_{n\in\mathbb N}$ explicit by including a corresponding symbol in the notation.   We refrain from this so as not to burden the notation.

According to~\cite{Ito_F},  the requirement $S\in QV^d$ guarantees that It\^o's formula   holds in a pathwise sense.  By taking those functions that appear naturally inside the It\^o integral from It\^o's formula, we arrive at a natural class of  admissible integrands for It\^o integrals with integrator $S$. In Section~\ref{non-pathdep setting section}, we will use the corresponding integration theory to develop a strictly pathwise theory of SPT and state the corresponding master formula. In this context, it will be possible to extend the arguments used, e.g., in~\cite{FK08}. In Section~\ref{path-dependent master formula section}, we will then consider a further extension to path-dependent portfolios that appear in the functional extension of It\^o's formula given recently by Dupire~\cite{Dupire} and Cont and Fourni\'e~\cite{CF}. Here, additional care will be needed in setting up the problem formulation, and a new proof strategy will be needed for the corresponding master formula.

\subsection{The master formula within F\"ollmer's pathwise It\^o calculus}\label{non-pathdep setting section}

We refer to~\cite{Sondermann} and \cite[Section 3]{Schied13} for background on Föllmer's pathwise It\^o calculus, including an English translation of~\cite{Ito_F} as provided in the Appendix of~\cite{Sondermann}. For open sets $U\subset\mathbb R^d$ and $V\subset \mathbb R^n$, the class  $C^{2,1}(U,V)$ will consist of all functions $f:U\times V\to\mathbb R$ that are continuously differentiable in $( x,a)\in U\times V$ and twice continuously differentiable in $x\in U$.  We will write $f_{x_k}=\partial f/\partial x_k$ and  $f_{a_k}=\partial f/\partial a_k$ for the partial derivatives of $f$ with respect to the $k^{\text{th}}$ components of $x=(x_1,\dots, x_d)$ and  $ a=(a_1,\dots, a_n)$, respectively. The gradient of $f$ in direction $ x=(x_1,\dots, x_d)$ will be denoted by 
$\nabla_{  x}f=\big(f_{ x_1},\dots, f_{x_d}\big)$,
 and we will write $f_{x_k,x_m}$ for the  second partial derivatives with respect to the components $x_k$ and $x_m$ of the vector $ x$. The Euclidean inner product of  two vectors $  v$ and $ w$ will be denoted by $ v\cdot w$. We let $\mathbb R^d_+$ be set of all those vectors in $\mathbb R^d$ that have only strictly positive components. The space $\CBV([0,T],V)$ will consist of all continuous  functions $A:[0,T]\to V$ whose components are of bounded variation.  
 The following definition is taken from \cite[Definition 11]{Schied13}.

\begin{definition}
A function $ \xi:[0,\infty)\to\bR^d$ is called an \emph{admissible integrand for $S$} if for each $T>0$ there exists $n\in\bN$,  open sets $U\subset\mathbb R^d$ and   $V\subset\mathbb R^n$,  a function $f\in C^{2,1}(U,V)$, and  $ A\in \CBV([0,T],V)$ such that $S(t)\in U$ and  $\xi(t)=\nabla_{  x}f(S(t),A(t))$ for $0\le t\le T$.
\end{definition}

If $\xi$ is an admissible integrand for $S$, then Föllmer's pathwise It\^o formula, e.g., in the form of \cite[Theorem 9]{Schied13}, implies that for every $T>0$ the  pathwise It\^o integral exists as the following limit of Riemann sums:
\begin{equation}\label{Foellmer integral Riemann sums}
\int_0^T\xi(t)\ud S(t)=\lim_{n\uparrow\infty}\sum_{t\in\mathbb T_n,\,t\le T}\xi(t)\cdot(S(t')-S(t)).
\end{equation}
This integral will be called the \emph{F\"ollmer integral} in the sequel.
Suppose that $\xi$ is an admissible integrand for $S$ and $\eta$ is a real-valued measurable function on $[0,\infty)$ such that $\int_{0}^{t} \vert \eta(s) \vert \, \mathrm{d} \vert B \vert (s) < \infty $ for all $t>0$, where $\ud |B|(s)$ denotes Stieltjes integration with respect to the total variation of $B$. Then the pair $(\xi, \eta )$ is called a \emph{trading strategy}. As usual, the interpretation is that  $\xi_{i}(t)$ corresponds to the number of shares held  in the $i^{\text{th}}$ stock at time $t$ and  $\eta(t)$ is the number of shares held in the riskless bond at time $t$.

\begin{definition} 
Let $\xi$ be an admissible integrand for $S$ and $\eta$ a real-valued measurable function on $[0,\infty)$ such that $\int_{0}^{T} \vert \eta(s) \vert \, \mathrm{d} \vert B \vert (s) < \infty $ for all $T>0$.  The trading strategy $(\xi, \eta )$ is called \emph{self-financing} if the associated wealth $
V(t) := \xi(t) \cdot S(t) + \eta(t) B(t) 
$
satisfies the identity
\begin{equation}
V(t) = V(0) + \int_0^t \xi(s) \, \mathrm{d} S(s) + \int_0^t \eta(s) \,\mathrm{d}B(s),\quad t\geq 0.
\label{Vsf}\end{equation}
\end{definition}

Note that the first integral in \eqref{Vsf} is a Föllmer integral, whereas $\int_0^t \eta(s) \,\mathrm{d}B(s)$ can be understood in the Riemann--Stieltjes sense, since both $\eta(t)$ and $B(t)$ are continuous functions of $t$. Indeed, $V(t)$ is continuous by the continuity of $t\mapsto \int_0^t \xi(s) \, \mathrm{d} S(s)$ and  \eqref{Vsf}, the continuity of $\xi(t)$ follows from the definition of an admissible integrand, and therefore $\eta(t)=(V(t)-\xi(t)\cdot S(t))/B(t)$ is continuous as well.

\begin{lemma}\label{log lemma}
 Suppose that $X$ is a continuous function from $[0,\infty)$ to $\R_+^d$. 
Then $X\in QV^d_+$ if and only if $\log X:=(\log X_1,\dots, \log X_d)^\top\in QV^d$. In this case,
\begin{equation}\label{log covariation id}
[\log X_i,\log X_j](t)=\int_0^t\frac1{X_i(s)X_j(s)}\ud [X_i,X_j](s).
\end{equation}
\end{lemma}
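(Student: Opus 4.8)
The plan is to treat both implications by a single linearization, exploiting that $\log$ and its inverse $\exp$ are smooth and that, on any compact time interval, a continuous strictly positive path takes values in a compact subset of $\R^d_+$ bounded away from the boundary. The technical engine will be the elementary pathwise approximation statement that, for $Z\in QV^d$ and continuous $g$,
\[
\sum_{s\in\bT_n,\,s\le t} g(Z(s))\,(Z_i(s')-Z_i(s))(Z_j(s')-Z_j(s))\ \longrightarrow\ \int_0^t g(Z(s))\,\ud[Z_i,Z_j](s)\qquad(n\ua\infty).
\]
I would justify this by polarization \eqref{polarization identity}, which reduces it to the case $i=j$, where the summands are nonnegative: the partial sums are then the distribution functions of discrete measures whose values converge to the continuous increasing function $[Z_i]$, so the convergence is locally uniform (monotone functions with a continuous limit converge uniformly on compacts), and Riemann--Stieltjes integration of the continuous integrand $g(Z(\cdot))$ passes to the limit. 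This is also the fact underlying F\"ollmer's It\^o formula as quoted via \cite[Theorem 9]{Schied13}.

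For the forward implication I would assume $X\in QV^d_+$, fix $t>0$, and note that on $[0,t]$ each $X_i$ is continuous and strictly positive, hence valued in a compact interval $[c_i,C_i]\subset(0,\infty)$. The mean value theorem gives, on each partition interval $[s,s']$, a point $\theta_i(s)$ between $X_i(s)$ and $X_i(s')$ with $\log X_i(s')-\log X_i(s)=\theta_i(s)^{-1}(X_i(s')-X_i(s))$. Substituting this into the defining sum \eqref{xin} for $[\log X_i,\log X_j]$ and writing $\theta_i(s)^{-1}\theta_j(s)^{-1}=X_i(s)^{-1}X_j(s)^{-1}+r_n(s)$, the leading term converges to $\int_0^t (X_i(s)X_j(s))^{-1}\,\ud[X_i,X_j](s)$ by the approximation statement with $g(x)=(x_ix_j)^{-1}$, continuous on the relevant compact set. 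The remainder I would control via $\bigl|\sum_s r_n(s)(X_i(s')-X_i(s))(X_j(s')-X_j(s))\bigr|\le \bigl(\sup_s|r_n(s)|\bigr)\sum_s|X_i(s')-X_i(s)|\,|X_j(s')-X_j(s)|$: the last sum stays bounded by Cauchy--Schwarz and the existence of $[X_i](t),[X_j](t)$, while $\sup_s|r_n(s)|\to0$ because $|\theta_i(s)-X_i(s)|\le|X_i(s')-X_i(s)|\to0$ uniformly in $s$ (uniform continuity of $X$ on $[0,t]$ and vanishing mesh) together with uniform continuity of $(a,b)\mapsto(ab)^{-1}$ on $[c_i,C_i]\times[c_j,C_j]$. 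Hence $[\log X_i,\log X_j](t)$ exists and equals \eqref{log covariation id}; continuity in $t$ follows since the integrand is continuous and $\ud[X_i,X_j]$ is atomless. This gives $\log X\in QV^d$.

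For the converse I would suppose $Y:=\log X\in QV^d$ and set $X_i=e^{Y_i}$, so that positivity is automatic. Repeating the argument verbatim with $\exp$ in place of $\log$---using $X_i(s')-X_i(s)=e^{\phi_i(s)}(Y_i(s')-Y_i(s))$ with $\phi_i(s)$ between $Y_i(s)$ and $Y_i(s')$, the approximation statement with $g(y)=e^{y_i+y_j}$, and the same Cauchy--Schwarz and uniform-continuity control---shows that $[X_i,X_j]$ exists, is continuous, and equals $\int_0^t X_i(s)X_j(s)\,\ud[Y_i,Y_j](s)$; thus $X\in QV^d_+$. The two identities are mutually inverse, as they must be, since $\ud[\log X_i,\log X_j]=(X_iX_j)^{-1}\ud[X_i,X_j]$ rearranges to $\ud[X_i,X_j]=X_iX_j\,\ud[\log X_i,\log X_j]$.

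I expect the main obstacle to be bookkeeping in the approximation statement rather than any deep difficulty: making precise that the discrete covariation measures integrate continuous functions to the correct Riemann--Stieltjes limit (cleanly handled by polarizing to the monotone case and invoking locally uniform convergence of distribution functions) and pinning down the uniform error control. Everything else reduces to the mean value theorem and the uniform continuity of the paths on compact intervals.
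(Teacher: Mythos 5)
Your argument is correct, but it follows a genuinely different route from the paper's. The paper handles the one-dimensional statement by citing \cite[Proposition 2.2.10]{Sondermann}, and then obtains the covariations by a detour: it applies the pathwise It\^o formula to $f(x_1,x_2)=\log x_1+\log x_2$, invokes the stability results of \cite{Schied13} (Remark 8 and Proposition 12, which give the quadratic variation of a F\"ollmer integral as $\sum_{k,\ell}\int f_{x_k}f_{x_\ell}\,\ud[S_k,S_\ell]$), and finally polarizes via \eqref{polarization identity} to extract $[\log X_i,\log X_j]$. You instead prove the covariation identity head-on from the defining sums: a mean-value-theorem linearization $\log X_i(s')-\log X_i(s)=\theta_i(s)^{-1}(X_i(s')-X_i(s))$, a Cauchy--Schwarz bound on the error term using uniform continuity of $X$ on compacts, and the weak-convergence fact that the discrete covariation measures integrate continuous functions to the correct Riemann--Stieltjes limit (which is indeed the engine behind F\"ollmer's It\^o formula, and your reduction of it to the monotone case $i=j$ via polarization is sound). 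What your route buys is self-containedness --- you do not presuppose the It\^o formula or the quadratic-variation-of-integrals machinery, and you effectively reprove the one-dimensional Sondermann result along the way; what the paper's route buys is brevity, since it reuses exactly the lemmas it needs elsewhere. Two small points of care in your write-up: the weights $g(Z(s))(Z_i(s')-Z_i(s))^2$ are nonnegative only because your $g=(x_ix_j)^{-1}$ is positive --- the cleaner statement is that the measures $\sum_s(Z_i(s')-Z_i(s))^2\delta_s$ are positive with distribution functions converging pointwise to the continuous $[Z_i]$, hence locally uniformly, hence weakly; and since the successor $s'$ of the last partition point $s\le t$ may exceed $t$, your compact range $[c_i,C_i]$ should be taken over a slightly larger interval $[0,t+\delta]$. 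Neither affects the validity of the proof.
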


The preceding lemma implies in particular that $\log S=(\log S_1,\dots,\log S_d)^\top\in QV^d$. 
The following  lemmas are standard in stochastic calculus. In  our pathwise setting, however, their proofs need some additional care.

\begin{lemma}\label{pi lemma}
A function $\pi:[0,\infty)\to\R^d$ is an admissible integrand for $\log S$ if and only if $\tfrac{\pi(t)}{S(t)}:= \left(  \tfrac{\pi_{1}(t)}{S_{1}(t)},\ldots,\tfrac{\pi_{d}(t)}{S_{d}(t)}  \right)^\top$ is an admissible integrand for $S$. In this case, 
\begin{equation}\label{log S change eq}
\int_0^t\pi(s)\ud \log S(s)=\int_0^t\frac{\pi(s)}{S(s)}\ud S(s)-\frac12\sum_{i=1}^d\int_0^t\frac{\pi_i(s)}{S^2_i(s)}\ud [S_i](s).
\end{equation}
\end{lemma}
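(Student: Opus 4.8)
The plan is to exploit the fact that the componentwise logarithm $x\mapsto\log x=(\log x_1,\dots,\log x_d)$ is a $C^\infty$-diffeomorphism from $\R^d_+$ onto $\R^d$, with inverse $y\mapsto e^y:=(e^{y_1},\dots,e^{y_d})$. First I would settle the equivalence of admissibility. Fix $T>0$. If $\pi$ is an admissible integrand for $\log S$, there are open sets $U'\subset\R^d$, $V\subset\R^n$, a function $g\in C^{2,1}(U',V)$, and $A\in\CBV([0,T],V)$ with $\log S(t)\in U'$ and $\pi(t)=\nabla_x g(\log S(t),A(t))$. Setting $U:=\{e^y:y\in U'\}$, which is open in $\R^d$ and contains each $S(t)$, and $f(x,a):=g(\log x,a)$, the chain rule gives $f_{x_i}(x,a)=g_{y_i}(\log x,a)/x_i$, so that $\nabla_x f(S(t),A(t))=\pi(t)/S(t)$; moreover $f\in C^{2,1}(U,V)$ since $g\in C^{2,1}$ and $\log$ is smooth in $x$ and leaves $a$ untouched. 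Hence $\pi/S$ is admissible for $S$. The converse is identical upon setting $g(y,a):=f(e^y,a)$.

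For the integral identity I would apply F\"ollmer's pathwise It\^o formula twice and subtract. By Lemma~\ref{log lemma}, $\log S\in QV^d$, so the It\^o formula applies to $g$ along $\log S$ and to $f$ along $S$. Writing out the two expansions of $g(\log S(t),A(t))-g(\log S(0),A(0))$ --- and noting that $f(S(t),A(t))=g(\log S(t),A(t))$ together with $\nabla_a f=\nabla_a g$ --- the bounded-variation integrals $\int_0^t\nabla_a g\,\ud A$ coincide and cancel. This leaves
\begin{equation*}
\int_0^t \pi\,\ud\log S+\frac12\sum_{i,j}\int_0^t g_{y_iy_j}\,\ud[\log S_i,\log S_j]=\int_0^t\frac{\pi}{S}\,\ud S+\frac12\sum_{i,j}\int_0^t f_{x_ix_j}\,\ud[S_i,S_j],
\end{equation*}
where the second derivatives are evaluated at $(\log S(s),A(s))$ and $(S(s),A(s))$, respectively.

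The remaining step is a Hessian computation combined with the covariation identity of Lemma~\ref{log lemma}. Differentiating $f_{x_i}=g_{y_i}(\log x,a)/x_i$ once more yields $f_{x_ix_j}=g_{y_iy_j}/(x_ix_j)-\delta_{ij}\,g_{y_i}/x_i^2$. Substituting $\ud[\log S_i,\log S_j]=\ud[S_i,S_j]/(S_iS_j)$ from \eqref{log covariation id} makes the $g_{y_iy_j}$-terms on the two sides match identically, so that all second-order contributions cancel except for the residual diagonal term $-\tfrac12\sum_i\int_0^t (g_{y_i}/S_i^2)\,\ud[S_i]$. Since $g_{y_i}(\log S(s),A(s))=\pi_i(s)$, this is precisely $-\tfrac12\sum_i\int_0^t(\pi_i/S_i^2)\,\ud[S_i]$, which establishes \eqref{log S change eq}.

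The main obstacle is bookkeeping rather than a conceptual difficulty: one must keep the arguments $(\log S(s),A(s))$ and $(S(s),A(s))$ carefully separated, and the whole identity rests on the exact cancellation of the quadratic-variation terms. That cancellation works only because Lemma~\ref{log lemma} converts $\ud[\log S_i,\log S_j]$ into $\ud[S_i,S_j]/(S_iS_j)$ with precisely the factor produced by the chain rule in the Hessian, leaving behind exactly the It\^o correction term $-\tfrac12\sum_i\int_0^t(\pi_i/S_i^2)\,\ud[S_i]$.
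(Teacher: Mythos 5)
Your proof of the admissibility equivalence is exactly the paper's argument: compose the generating function with the componentwise logarithm (resp.\ exponential), apply the chain rule, and observe that the gradient picks up the factor $1/x_i$. For the integral identity \eqref{log S change eq}, however, you take a genuinely different route. The paper applies F\"ollmer's It\^o formula once, to the logarithm, obtaining the differential identity $\ud\log S_i(t)=S_i(t)^{-1}\ud S_i(t)-\tfrac12 S_i(t)^{-2}\ud[S_i](t)$, and then invokes the associativity rule for pathwise It\^o integrals (Theorem~13 of \cite{Schied13}) to integrate $\pi$ against both sides; the whole step is two lines but leans on that associativity theorem as a black box. You instead apply the It\^o formula twice --- to $g$ along $\log S$ and to $f=g(\log\cdot,\cdot)$ along $S$ --- and subtract, carrying out the Hessian computation $f_{x_ix_j}=g_{y_iy_j}/(x_ix_j)-\delta_{ij}g_{y_i}/x_i^2$ and using \eqref{log covariation id} (plus only the elementary associativity of Riemann--Stieltjes integrals) to cancel the second-order terms. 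Your cancellation is correct and yields exactly the residual term $-\tfrac12\sum_i\int_0^t(\pi_i/S_i^2)\,\ud[S_i]$. What your approach buys is independence from the F\"ollmer-integral associativity rule, at the price of a longer bookkeeping computation; what the paper's approach buys is brevity, and the associativity rule is in any case needed elsewhere (e.g.\ in the proofs of Lemmas~\ref{portfolio value lemma} and~\ref{Vpiloglemma}), so the paper loses nothing by citing it here. Both arguments are valid.
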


The preceding lemma gives rise to a special class of self-financing trading strategies:

\begin{lemma}\label{portfolio value lemma}Suppose that $\pi$ is an admissible integrand for $\log S$. If we let
\begin{equation}
 \label{ItoDiffEqLSGcadlag}
\begin{split} \lefteqn{V^{\pi} (t):=}\\
&\exp \left(   \int_0^t \dfrac{\pi(s)}{S(s)} \,\mathrm{d} S(s) - \dfrac{1}{2} \sum\limits_{i,j = 1}^{d} \int_0^t \dfrac{\pi_{i}(s) \pi_{j}(s)}{S_{i}(s)S_{j}(s)} \,\mathrm{d} [ S_{i}, S_{j} ] (s)+\int_0^t \Big(  1- \sum\limits_{i=1}^{d} \pi_{i}(s)  \Big) r(s) \,\mathrm{d}s \right),
 \end{split}
\end{equation} 
then 
\begin{align}\label{handelsstrXIETA}
 \xi_{i}(t) := \dfrac{\pi_{i}(t)V^{\pi}(t)}{S_{i}(t)}, \quad i = 1,\ldots,d,\quad \text{and}\quad\eta(t) := \dfrac{\big( 1 - \sum_{i =1}^{d} \pi_{i}(t) \big) V^{\pi} (t) }{B(t)}
  \end{align} 
defines a self-financing strategy with wealth $V^\pi$. \end{lemma}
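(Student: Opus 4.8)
The plan is to verify the two assertions separately: that the wealth of $(\xi,\eta)$ coincides with $V^\pi$, and that $(\xi,\eta)$ is self-financing. The first is immediate. Inserting \eqref{handelsstrXIETA} into the definition of the wealth gives $\xi(t)\cdot S(t)+\eta(t)B(t)=\sum_{i=1}^d\pi_i(t)V^\pi(t)+\big(1-\sum_{i=1}^d\pi_i(t)\big)V^\pi(t)=V^\pi(t)$, so the wealth is indeed $V^\pi$, and in particular $V^\pi(0)=1$. It then remains to establish the self-financing identity \eqref{Vsf} for $V=V^\pi$.

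For the self-financing part, the idea is to recognize $V^\pi$ as a pathwise Dol\'eans--Dade (stochastic) exponential and to verify the required integral equation by F\"ollmer's pathwise It\^o formula. By Lemma \ref{pi lemma}, $\pi/S$ is an admissible integrand for $S$, so the F\"ollmer integral $M(t):=\int_0^t \frac{\pi(s)}{S(s)}\ud S(s)$ is well defined, lies in $QV^1$, and I would compute its quadratic variation as $[M](t)=\sum_{i,j=1}^d\int_0^t\frac{\pi_i(s)\pi_j(s)}{S_i(s)S_j(s)}\ud [S_i,S_j](s)$. Writing the exponent in \eqref{ItoDiffEqLSGcadlag} as $M(t)+A^*(t)$ with $A^*(t):=-\tfrac12[M](t)+\int_0^t\big(1-\sum_{i=1}^d\pi_i(s)\big)r(s)\ud s$, I observe that $A^*$ is continuous and of bounded variation: $[M]$ is nondecreasing, and the remaining term is absolutely continuous since $\pi$ is continuous and $r\in L^1_{\mathrm{loc}}$. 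Thus $V^\pi(t)=G(M(t),A^*(t))$ with $G(m,a)=e^{m+a}\in C^{2,1}(\R,\R)$, where $M\in QV^1$ and $A^*\in\CBV$.

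Applying F\"ollmer's It\^o formula to $G$ along the path $(M,A^*)$ and using $G_m=G_a=G_{mm}=G$ yields
\begin{equation*}
V^\pi(t)=1+\int_0^t V^\pi(s)\ud M(s)+\int_0^t V^\pi(s)\ud A^*(s)+\tfrac12\int_0^t V^\pi(s)\ud [M](s).
\end{equation*}
The defining choice of $A^*$ makes the $\tfrac12\ud[M]$ term cancel exactly against the $-\tfrac12\ud[M]$ contained in $\ud A^*$---this is the Dol\'eans--Dade cancellation---leaving $V^\pi(t)=1+\int_0^t V^\pi\ud M+\int_0^t V^\pi\big(1-\sum_i\pi_i\big)r\ud s$. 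To conclude, I would invoke the chain rule (associativity) for F\"ollmer integrals to rewrite $\int_0^t V^\pi(s)\ud M(s)=\int_0^t V^\pi(s)\frac{\pi(s)}{S(s)}\ud S(s)=\int_0^t\xi(s)\ud S(s)$, and use $\ud B(s)=r(s)B(s)\ud s$ together with \eqref{handelsstrXIETA} to identify $\int_0^t V^\pi(1-\sum_i\pi_i)r\ud s=\int_0^t\eta(s)\ud B(s)$, which is precisely \eqref{Vsf}.

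The steps requiring genuine pathwise care---and which I expect to be the main obstacle---are the two facts about the F\"ollmer integral $M$: the quadratic-variation formula for $[M]$ and the chain rule $\int V^\pi\ud M=\int V^\pi\frac{\pi}{S}\ud S$. Both are automatic in the probabilistic theory but here must be derived from the Riemann-sum definition \eqref{Foellmer integral Riemann sums}. The chain rule is moreover what \emph{gives meaning} to $\int\xi\ud S$ and thus justifies calling $(\xi,\eta)$ an admissible self-financing strategy, since $\xi=V^\pi\,\pi/S$ is path-dependent through $V^\pi$ and is not literally of the gradient form required in the definition of an admissible integrand; the associativity result is exactly what repairs this gap.
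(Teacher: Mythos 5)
Your proposal is correct and follows essentially the same route as the paper: both identify $V^\pi$ as a pathwise Dol\'eans--Dade exponential $\exp\bigl(Z(t)-\tfrac12[Z](t)+R(t)\bigr)$ with $Z=\int\pi/S\,\ud S$, apply F\"ollmer's pathwise It\^o formula to obtain $\ud V^\pi=V^\pi\ud Z+V^\pi\ud R$, and then invoke the associativity rule for the F\"ollmer integral (and for the Riemann--Stieltjes integral) to identify $V^\pi\,\ud Z=\xi\,\ud S$ and $V^\pi\,\ud R=\eta\,\ud B$. The two technical points you flag---the quadratic-variation formula for the F\"ollmer integral and the associativity/admissibility of $\xi=V^\pi\pi/S$---are exactly the ones the paper discharges by citing \cite[Remark 8, Propositions 12 and Theorem 13]{Schied13}.
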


The preceding lemma justifies the following definition.

\begin{definition}\label{portfolio}
 A function $\pi:[0,\infty)\to\mathbb R^d$ is called a \emph{portfolio for $S$} if it is an admissible integrand for $\log S$   and if 
\begin{equation}\label{portfolioeq}\pi_{1}(t) + \cdots + \pi_{d}(t) = 1,\quad t\geq 0.
\end{equation} 
 A portfolio for $S$ is called \emph{long-only} if $\pi_i(t)\geq 0$ for all $t$. The function $V^\pi$ from \eqref{ItoDiffEqLSGcadlag} will be called the \emph{portfolio value} of $\pi$ with unit initial wealth.
 \end{definition}

As in \cite[Section 2]{FK08},  we normalize the market, i.e., we suppose that at any time $t$ each stock has only one share outstanding. Then the stock prices $S_{i}(t)$ represent the capitalizations of the individual companies.

\begin{lemma}\label{market portfolio lemma} The quantities
$$\mu_{i}(t) := \dfrac{S_{i}(t)}{S_{1}(t) + \cdots + S_{d}(t)}, \quad i = 1,\ldots,d,
$$
form a portfolio for $S$, called the \emph{market portfolio}. The corresponding portfolio value with unit initial wealth is given by
\begin{equation}\label{market portfolio value eq}
V^\mu(t)=\frac{S_1(t)+\cdots +S_d(t)}{S_1(0)+\cdots +S_d(0)}.
\end{equation}
\end{lemma}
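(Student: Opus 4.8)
The plan is to verify the two defining properties of a portfolio from Definition~\ref{portfolio} and then to read off the portfolio value from Lemma~\ref{portfolio value lemma} by recognizing its logarithm as a pathwise It\^o expansion. Throughout, write $\Sigma(t):=S_1(t)+\cdots+S_d(t)$ for the total capitalization, so that $\mu_i(t)=S_i(t)/\Sigma(t)$. The normalization $\mu_1(t)+\cdots+\mu_d(t)=1$ is immediate, since summing the definitions of $\mu_i$ over $i$ gives $\Sigma(t)/\Sigma(t)=1$. It therefore remains to check that $\mu$ is an admissible integrand for $\log S$.

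For admissibility I would use Lemma~\ref{pi lemma}: it suffices to show that $\mu(t)/S(t)=(\mu_1(t)/S_1(t),\dots,\mu_d(t)/S_d(t))^\top$ is an admissible integrand for $S$. The key observation is that $\mu_i(t)/S_i(t)=1/\Sigma(t)$ for every $i$, so that $\mu(t)/S(t)=\Sigma(t)^{-1}(1,\dots,1)^\top$. This is exactly the gradient of the function $f(x,a):=\log(x_1+\cdots+x_d)$ on $U:=\mathbb R^d_+$, evaluated at $x=S(t)$, because $\nabla_x f(x)=(x_1+\cdots+x_d)^{-1}(1,\dots,1)^\top$. Since $S$ takes values in the open set $\mathbb R^d_+$ and $f$ is smooth there and does not depend on $a$, we have $f\in C^{2,1}(U,V)$ for the inactive second variable, e.g.\ $n=1$, $V=\mathbb R$, and the constant path $A\equiv 0\in\CBV([0,T],V)$. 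Hence $\mu/S=\nabla_x f(S(\cdot),A(\cdot))$ is an admissible integrand for $S$, and Lemma~\ref{pi lemma} then shows that $\mu$ is an admissible integrand for $\log S$. Together with the normalization, $\mu$ is a portfolio for $S$.

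For the portfolio value I would invoke Lemma~\ref{portfolio value lemma} with $\pi=\mu$. Because $\sum_i\mu_i(s)=1$, the bond term $\int_0^t(1-\sum_i\mu_i(s))r(s)\,\ud s$ in \eqref{ItoDiffEqLSGcadlag} vanishes, leaving
\[
\log V^\mu(t)=\int_0^t\frac{\mu(s)}{S(s)}\,\ud S(s)-\frac12\sum_{i,j=1}^d\int_0^t\frac{\mu_i(s)\mu_j(s)}{S_i(s)S_j(s)}\,\ud[S_i,S_j](s).
\]
Using $\mu_i(s)/S_i(s)=1/\Sigma(s)$ again collapses the double sum, since $\mu_i(s)\mu_j(s)/(S_i(s)S_j(s))=\Sigma(s)^{-2}$ for all $i,j$. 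I would then apply F\"ollmer's pathwise It\^o formula \cite[Theorem 9]{Schied13} to the function $f(x)=\log(x_1+\cdots+x_d)$ and the integrator $S$. Its gradient reproduces $\mu(s)/S(s)$, while its second derivatives satisfy $f_{x_ix_j}(S(s))=-\Sigma(s)^{-2}$, so the It\^o formula yields precisely
\[
\log\Sigma(t)-\log\Sigma(0)=\int_0^t\frac{\mu(s)}{S(s)}\,\ud S(s)-\frac12\sum_{i,j=1}^d\int_0^t\frac1{\Sigma(s)^2}\,\ud[S_i,S_j](s).
\]
Comparing the two displays gives $\log V^\mu(t)=\log(\Sigma(t)/\Sigma(0))$, which is \eqref{market portfolio value eq}.

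The computations here are short, so the proof presents no serious obstacle; the only points requiring care are bookkeeping ones. The first is confirming admissibility strictly in the pathwise sense, i.e.\ exhibiting concrete data $(n,U,V,f,A)$ on each interval $[0,T]$ rather than merely differentiating formally; here the strict positivity of $S$ (so that $\log(x_1+\cdots+x_d)$ is smooth on a neighborhood of the compact range $S([0,T])\subset\mathbb R^d_+$) is what makes $f$ admissible. The second is the twofold use of the identity $\mu_i/S_i=1/\Sigma$, which both identifies $\mu/S$ with $\nabla_x\log(x_1+\cdots+x_d)$ and collapses the quadratic-variation double sum, so that the exponent of $V^\mu$ matches the It\^o expansion of $\log\Sigma$ term by term.
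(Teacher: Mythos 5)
Your proof is correct and follows essentially the same route as the paper: both reduce the portfolio value to the pathwise It\^o expansion of $\log(S_1+\cdots+S_d)$ and compare with the exponent in \eqref{ItoDiffEqLSGcadlag}. The only cosmetic difference is in the admissibility step, where you verify that $\mu/S=\nabla_x\log(x_1+\cdots+x_d)$ evaluated at $S$ and invoke Lemma~\ref{pi lemma}, whereas the paper directly exhibits $\mu(t)=\nabla h(\log S(t))$ for $h(x)=\log(e^{x_1}+\cdots+e^{x_d})$; these are the same computation related by the change of variables underlying Lemma~\ref{pi lemma}.
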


As in~\cite{FK08} and~\cite{Strong2014}, we can now consider 
 the pathwise dynamics of a portfolio that is generated by a \emph{portfolio-generating function}. 
 These are $C^{2,1}$-functions that may depend on the current composition $\mu(t)$ of the market  portfolio and on certain other paths $A_0(t),\dots,A_m(t)$ of  (locally) finite variation.   Examples for these additional paths include time, $A_0(t)=t$, running maximum, $A_j(t)=\max_{s\le t}S_i(s)$, moving average, $A_k(t)=\frac1\theta\int_{t-\theta}^tS_i(0\vee s)\,ds$, or realized quadratic variation, $A_\ell(t)=[S_i](t)$; see also Section \ref{Examples section}. We denote by
$\Delta^{d}$
the standard simplex in $\bR^d$ and let $ \Delta^{d}_{+} := \Delta^d\cap\mathbb R^d_+$.

\begin{lemma}\label{G generated portfolio lemma}Let $V\supset  \Delta_{+}^{d}$  and $W\subset \mathbb{R}^ m$  be open sets,  $G$ a strictly positive function in $C^{2,1}(U,W)$, and $A:[0,\infty)\to W$ a continuous function whose restriction to compact intervals is of finite variation. Then 
 \begin{equation} \label{portfolioGeneratedByG}
\pi_{i}(t) =\mu_i(t)+ \mu_i (t)\frac{\partial}{\partial {x_{i}}} \log G(\mu(t),A(t))  - \sum_{j = 1}^{d} \mu_i(t)\mu_{j}(t) \frac{\partial}{\partial {x_{j}}}\log G(\mu(t), A(t)) , \quad 1\leq i \leq d,
\end{equation} 
is a portfolio for $S$ and called the \emph{portfolio generated by} $G$. Moreover, $\pi$ is an admissible integrand for $\log\mu$, where $\log\mu(t):=(\log\mu_1(t),\dots,\log\mu_d(t))$ is defined in analogy to $\log S$.
\end{lemma}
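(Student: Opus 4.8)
The plan is to verify the two assertions separately: first that $\pi$ satisfies the budget constraint \eqref{portfolioeq} and is an admissible integrand for $\log S$ (so that, together, $\pi$ is a portfolio for $S$), and then that it is an admissible integrand for $\log\mu$. Throughout I abbreviate $D_k(t):=\frac{\partial}{\partial x_k}\log G(\mu(t),A(t))$, so that \eqref{portfolioGeneratedByG} reads $\pi_i=\mu_i+\mu_i D_i-\mu_i\sum_{j=1}^d\mu_j D_j$. The constraint \eqref{portfolioeq} is then immediate: summing over $i$ and using $\sum_i\mu_i=1$ from Lemma~\ref{market portfolio lemma} gives
$$\sum_{i=1}^d\pi_i=1+\sum_{i=1}^d\mu_i D_i-\Big(\sum_{i=1}^d\mu_i\Big)\Big(\sum_{j=1}^d\mu_j D_j\Big)=1+\sum_i\mu_i D_i-\sum_j\mu_j D_j=1.$$

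For the admissibility claims the essential observation is that $\mu$ is a smooth function of $\log S$: with the softmax map $m:\R^d\to\Delta_+^d$, $m_i(x):=e^{x_i}/\sum_{k=1}^d e^{x_k}$, one has $\mu(t)=m(\log S(t))$, and a direct computation gives the Jacobian $\partial m_k/\partial x_i=m_k(\delta_{ki}-m_i)$. I would then propose the generating function
$$f(x,a):=\log\Big(\sum_{k=1}^d e^{x_k}\Big)+\log G\big(m(x),a\big),\qquad (x,a)\in\R^d\times W.$$
The first summand is $C^\infty$ with $\nabla_x\log(\sum_k e^{x_k})=m(x)$; for the second, the chain rule together with the Jacobian formula yields $\partial_{x_i}\log G(m(x),a)=m_i(x)\big(D_i-\sum_k m_k(x)D_k\big)$ after evaluating the derivatives of $\log G$ at $(m(x),a)$. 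Evaluating at $(x,a)=(\log S(t),A(t))$, where $m(x)=\mu(t)$, gives $\partial_{x_i}f=\mu_i+\mu_i\big(D_i-\sum_k\mu_k D_k\big)=\pi_i$, i.e.\ $\pi(t)=\nabla_x f(\log S(t),A(t))$. Since $G>0$ and $G\in C^{2,1}(V,W)$ force $\log G\in C^{2,1}(V,W)$, and since $m$ is smooth with values in $\Delta_+^d\subset V$, the composition $f$ lies in $C^{2,1}(\R^d,W)$; taking the open set $\R^d$ (which contains every $\log S(t)$) and the finite-variation path $A|_{[0,T]}\in\CBV([0,T],W)$ exhibits $\pi$ as an admissible integrand for $\log S$. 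With \eqref{portfolioeq} this shows $\pi$ is a portfolio for $S$.

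For admissibility with respect to $\log\mu$ I would use the same idea but with a different potential for the leading $\mu_i$ term, exploiting that $\log\mu(t)$ lies on the manifold $\{x:\sum_k e^{x_k}=1\}$, on which $m(x)=e^x$ componentwise. Set
$$g(x,a):=\sum_{k=1}^d e^{x_k}+\log G\big(m(x),a\big),\qquad(x,a)\in\R^d\times W.$$
Now $\partial_{x_i}\sum_k e^{x_k}=e^{x_i}$, which at $x=\log\mu(t)$ equals $\mu_i(t)$, while the second summand again contributes $\mu_i\big(D_i-\sum_k\mu_k D_k\big)=\pi_i-\mu_i$; hence $\pi(t)=\nabla_x g(\log\mu(t),A(t))$. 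The function $g$ is again in $C^{2,1}(\R^d,W)$ by the same regularity argument, so $\pi$ is an admissible integrand for $\log\mu$.

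The only genuinely delicate technical point is the regularity of the composition $\log G(m(\cdot),\cdot)$: one must check it is jointly $C^1$ and twice continuously differentiable in the first argument, which holds because $m$ is $C^\infty$ and $\log G$ inherits the $C^{2,1}$-property from $G$ (the second $x$-derivatives of the composition involve only $\partial^2_{xx}\log G$ together with smooth derivatives of $m$). The conceptual core, and the step I expect to require the most care to phrase correctly, is recognizing that admissibility demands writing $\pi$ as a \emph{gradient} in the correct variable—not merely as a smooth function—and identifying the respective potentials, $\log\sum_k e^{x_k}$ for $\log S$ and $\sum_k e^{x_k}$ for $\log\mu$, that reproduce the $\mu_i$-term of \eqref{portfolioGeneratedByG}; for $\log\mu$ this also requires noting that the path lies on a lower-dimensional surface, so that the softmax provides the needed smooth extension to an open neighborhood in $\R^d$.
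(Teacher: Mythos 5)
Your proof is correct and follows essentially the same route as the paper: the paper also verifies the budget constraint by direct summation, writes the correction term as the gradient of $\gamma(x,a)=\log G(\nabla h(x),a)$ with $h(x)=\log\sum_k e^{x_k}$ (your softmax potential), and handles admissibility for $\log\mu$ by switching the potential for the $\mu_i$-term to $e^{h(x)}=\sum_k e^{x_k}$, exactly as you do. The only cosmetic difference is that you fold the two summands into a single generating function, whereas the paper treats $\mu$ and $\widetilde\pi=\pi-\mu$ separately and then adds the admissible integrands.
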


 The following theorem extends the \lq\lq master equation\rq\rq\ from Fernholz~\cite{F_pfgenfct} and Strong~\cite{Strong2014} to our strictly pathwise setting. 

\begin{thm}[\textbf{Pathwise master equation}] \label{MasterFormulaCadlag}
For $G$ as in Lemma~\ref{G generated portfolio lemma}, the relative wealth  of the portfolio $\pi$ generated by $G$  with respect to the market portfolio is given by the following  master equation
\begin{equation*} 
\log \left( \dfrac{V^{\pi}(T)}{V^{\mu}(T)} \right) = \log \left( \dfrac{G( \mu(t),A(T))}{G( \mu(0),A(0))} \right) + \mathfrak{g}([0,T])+ \mathfrak{h}([0,T])  , \quad 0 \leq T < \infty,
\end{equation*}
where  the (possibly signed) Radon measures $\mathfrak g$ and $\mathfrak h$ are given by
\begin{equation*}
\mathfrak{g}(\mathrm{d}t) := - \frac12\sum_{i,j=1}^{d}  \frac{G_{x_i,x_j}( \mu(t),A(t))}{ G( \mu(t),A(t))} \ud [\mu_i,\mu_j](t)\quad\text{and}\quad \mathfrak{h}(\mathrm{d}t) := -\sum_{k=1}^m \dfrac{G_{a_k}(\mu(t),A(t))}{G(\mu(t),A(t))}    \ud A_k(t).
\end{equation*}
\end{thm}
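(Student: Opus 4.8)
The plan is to reduce the master equation to a purely algebraic identity between the ``drift'' terms coming from two applications of the pathwise It\^o formula, and then to exploit the constraints $\sum_i\mu_i\equiv\sum_i\pi_i\equiv1$ to make everything collapse. First I would express the relative log-value $\log(V^\pi/V^\mu)$ in a form involving only $\log\mu$. Starting from \eqref{ItoDiffEqLSGcadlag} and using that $\pi$ and $\mu$ are portfolios (so the $r(s)$-term drops because $\sum_i\pi_i=\sum_i\mu_i=1$), Lemma~\ref{pi lemma} together with Lemma~\ref{log lemma} rewrites
\[
\log V^\pi(t)=\int_0^t\pi\,\ud\log S+\tfrac12\sum_i\int_0^t\pi_i\,\ud[\log S_i]-\tfrac12\sum_{i,j}\int_0^t\pi_i\pi_j\,\ud[\log S_i,\log S_j],
\]
and likewise with $\pi$ replaced by $\mu$. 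Subtracting and substituting $\log\mu_i=\log S_i-\log\Sigma$, where $\Sigma:=S_1+\cdots+S_d$, eliminates every occurrence of $\log\Sigma$: the first-order term because $\sum_i(\pi_i-\mu_i)=0$ annihilates the $\ud\log\Sigma$-contribution, and the covariation terms because $\sum_i(\pi_i-\mu_i)=0$ together with $(\sum_i\pi_i)^2=(\sum_i\mu_i)^2$ makes every term containing $[\,\cdot\,,\log\Sigma]$ or $[\log\Sigma]$ cancel. This yields the ``free'' relative-value formula
\[
\log\frac{V^\pi(t)}{V^\mu(t)}=\int_0^t(\pi-\mu)\,\ud\log\mu+\tfrac12\sum_i\int_0^t(\pi_i-\mu_i)\,\ud[\log\mu_i]-\tfrac12\sum_{i,j}\int_0^t(\pi_i\pi_j-\mu_i\mu_j)\,\ud[\log\mu_i,\log\mu_j].
\]

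Next I would bring in $G$. Applying F\"ollmer's pathwise It\^o formula (as in Lemma~\ref{G generated portfolio lemma}, legitimate since $\log\mu\in QV^d$ and $A$ has finite variation, so that the covariations $[\log\mu_i,A_k]$ vanish) to $\log G(\mu(t),A(t))$, and using $(\log G)_{x_i,x_j}=G_{x_i,x_j}/G-(G_{x_i}/G)(G_{x_j}/G)$, gives
\[
\log\frac{G(\mu(T),A(T))}{G(\mu(0),A(0))}=\int_0^T\sum_i\frac{G_{x_i}}{G}\,\ud\mu_i-\mathfrak h([0,T])-\mathfrak g([0,T])-\tfrac12\int_0^T\sum_{i,j}\frac{G_{x_i}G_{x_j}}{G^2}\,\ud[\mu_i,\mu_j].
\]
Comparing this with the target master equation and inserting the free relative-value formula, the whole theorem reduces to showing that the right-hand side of that formula equals $\int_0^T\sum_i(G_{x_i}/G)\,\ud\mu_i-\tfrac12\int_0^T\sum_{i,j}(G_{x_i}G_{x_j}/G^2)\,\ud[\mu_i,\mu_j]$.

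Finally I would verify this identity as an equality of measures. Writing $D_i:=(G_{x_i}/G)(\mu(t),A(t))$ and $\bar D:=\sum_j\mu_j D_j$, formula \eqref{portfolioGeneratedByG} reads $\pi_i-\mu_i=\mu_i(D_i-\bar D)$. Using $\ud\log\mu_i=\ud\mu_i/\mu_i-\tfrac12\ud[\log\mu_i]$, the first-order term of the free formula produces $\sum_i D_i\,\ud\mu_i$ (the $\bar D$-part vanishes because $\sum_i\ud\mu_i=0$) plus a correction $-\tfrac12\sum_i\mu_i(D_i-\bar D)\,\ud[\log\mu_i]$ that cancels the $[\log\mu_i]$-term exactly. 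What remains is the covariation term, so the claim reduces to
\[
\sum_{i,j}(\pi_i\pi_j-\mu_i\mu_j)\,\ud[\log\mu_i,\log\mu_j]=\sum_{i,j}\mu_i\mu_j D_iD_j\,\ud[\log\mu_i,\log\mu_j].
\]
Converting via Lemma~\ref{log lemma} to $\ud[\mu_i,\mu_j]=\mu_i\mu_j\,\ud[\log\mu_i,\log\mu_j]$ and expanding $\pi_i\pi_j-\mu_i\mu_j=\mu_i\mu_j\big[(D_i-\bar D)+(D_j-\bar D)+(D_i-\bar D)(D_j-\bar D)\big]$, the identity follows from the single structural fact $\sum_j\ud[\mu_i,\mu_j]=\ud[\mu_i,\sum_j\mu_j]=\ud[\mu_i,1]=0$, valid because $\sum_j\mu_j\equiv1$ is constant and hence has vanishing covariation with every path. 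Indeed, this kills the two linear-in-$(D-\bar D)$ sums and all the $\bar D$-cross-terms arising from $(D_i-\bar D)(D_j-\bar D)$, leaving precisely $\sum_{i,j}D_iD_j\,\ud[\mu_i,\mu_j]$.

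I expect the main obstacle to be the bookkeeping in the second half of the first step: passing from $\log S$ to $\log\mu$ requires checking that the convexity corrections of $\log V^\pi$ and $\log V^\mu$ reorganize into the clean $\log\mu$-expression, and one must ensure that all intermediate objects (the path $\log\Sigma$ and the various integrands) genuinely lie in the admissible classes so that F\"ollmer's pathwise integral and It\^o formula apply. By contrast, the concluding algebraic collapse is short once the identity $\sum_j\ud[\mu_i,\mu_j]=0$ has been isolated.
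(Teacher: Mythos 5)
Your proposal is correct and, at bottom, runs on the same engine as the paper's proof: both reduce the master equation to comparing the pathwise It\^o expansion of $\log G(\mu(t),A(t))$ (using $(\log G)_{x_i,x_j}=G_{x_i,x_j}/G-g_ig_j$) with a formula for $\log(V^\pi/V^\mu)$ as an integral against $\mu$ plus a covariation correction, and both ultimately rest on the two structural identities $\sum_i \ud\mu_i=0$ (so the F\"ollmer integral of any multiple of $\bm 1$ against $\mu$ has vanishing Riemann sums) and $\sum_j\ud[\mu_i,\mu_j]=0$, the latter being exactly the paper's identity $\sum_j\mu_j\tau^\mu_{ij}(\ud t)=0$ in \eqref{taupieq}. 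The organizational difference lies in the middle: the paper derives the relative-value formula \eqref{relReturn} via the excess growth rate $\gamma^*_\pi$, Lemma~\ref{Vpiloglemma}, the num\'eraire-invariance Lemma~\ref{invarianceProp}, and the relative covariances $\tau^\mu_{ij}$, whereas you substitute $\log S_i=\log\mu_i+\log\Sigma$ directly into the difference of the two portfolio-value expressions and let $\sum_i(\pi_i-\mu_i)=0$ and $(\sum_i\pi_i)^2=(\sum_i\mu_i)^2$ annihilate every $\log\Sigma$-term; this buys a shorter, more self-contained computation at the price of not isolating reusable intermediate statements. Your cancellation bookkeeping checks out (the cross-terms $\sum_i(\pi_i-\mu_i)\,\ud[\log\mu_i,\log\Sigma]$ really do cancel between the linear and quadratic corrections), and your final algebraic collapse reproduces Lemma~\ref{aux int lemma}. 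The one point you should carry out rather than merely flag is the existence, in the pathwise sense, of all covariations invoked in the substitution step --- $QV^d$ is not a vector space, so $[\log\mu_i,\log\Sigma]$ and the membership $(\log\mu,\log\Sigma)\in QV^{d+1}$ must be justified, e.g.\ via \cite[Remark 8 and Proposition 12]{Schied13} as the paper does when establishing Lemma~\ref{market portfolio dynamics lemma}(b); likewise the admissibility of the integrand $(1-\mu^\top g)\bm 1$ needs the one-line argument given in the paper's proof of Lemma~\ref{aux int lemma} before one may assert its F\"ollmer integral vanishes.
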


\subsection{The master formula for path-dependent portfolios}\label{path-dependent master formula section}

Our goal in this section is to extend the pathwise portfolio theory from Section~\ref{non-pathdep setting section} to portfolios that are path-dependent and to prove a corresponding master formula. The main difficulty we are encountering here is that the pathwise It\^o integral is then no longer the limit of ordinary Riemann sums as in 
\eqref{Foellmer integral Riemann sums}. Instead, the integrands in the approximating \lq\lq Riemann sums'' \eqref{Itointcont} involve approximations of the integrator path. Therefore, one either needs to make strong regularity assumptions as in  \cite[Theorem 3.2]{Ananova} so as to be able to replace  \eqref{Itointcont} by standard Riemann sums, or to retain the functional dependence in the integrands. Here, we take the latter route. It will require a proof strategy that is different from the one we use to prove Theorem~\ref{MasterFormulaCadlag}.

Let $S\in QV^d_+$  and recall from the Appendix the basic definitions and notations from pathwise functional It\^o calculus. To keep notations simple and close to~\cite{Dupire,CF,Vol15}, we will work in the sequel with a fixed and finite time horizon $T>0$. It is easy to extend our results to the case $T=\infty$ by using localization.  By $D([0,T],U)$ we denote the usual Skorokhod space of $U$-valued c\`adl\`ag functions.  For an open set $V\subset\mathbb R^m$ and a non-anticipative functional $F:[0,T]\times D([0,T],U)\times\CBV([0,T],V)\to\mathbb R$, we denote by 
$$ \mathscr{D}_0F,\dots,\mathscr D_mF$$
 the corresponding horizontal derivatives defined in \eqref{horizontal derivatives} and  by 
 $$ \nabla_XF=(\partial_i F)_{i=1,\dots, d}$$ the vertical derivative defined in \eqref{vertical derivative}, provided that all these derivatives exist.  The second partial vertical derivatives of $F$ will be denoted by $\partial_{ij}F$. We will also need the space $\mathbb{C}^{1,2}_c(U,W)$, whose definition is recalled in the Appendix. The following definition is based on a suggestion in~\cite{Ananova}.

 \begin{definition}\label{locadmcont} Let $U\subset\mathbb R^d$ be an open set. A functional $\xi:[0,T]\times D([0,T],U)\to\mathbb R^d$ is called an \emph{admissible functional integrand on $U$} if  there exist $m\in\mathbb N$, an open set $W\subset\mathbb R^m$, $A\in\CBV([0,T],W)$, and $F\in\mathbb{C}^{1,2}_c(U,W)$  such that $\xi(t,X)=\nabla_XF(t,X,A)$ for $t\in[0,T]$.  \end{definition}

If $\xi$ is an admissible functional integrand on $U\subset\mathbb R^d$ and and $X\in QV^d$ is   $U$-valued, then  the It\^o integral of $\xi(t,X)$ against $X$ can be defined through \eqref{Itointcont}. Moreover, it can be shown that $t\mapsto\int_0^t\xi(s,X)\ud X(s)$ is a continuous function \cite[Lemma 3.1 (c)]{Vol15}.

\begin{definition}An   admissible functional integrand $\pi$ on $\mathbb R^d$ is called a \emph{functional portfolio} if $\pi_1(t,X)+\cdots+\pi_d(t,X)=1$ for all $t\in[0,T]$ and $X\in D([0,T], \mathbb R^d)$ and if $t\mapsto\int_0^t\pi(s,\log S)\ud\log S$ admits the quadratic variation
\begin{equation}\label{path portolio quadr var eq}
\bigg[\int_0\pi(s,\log S)\ud\log S\bigg](t)=\sum_{i,j=1}^d\int_0^t\pi_i(s,\log S)\pi_j(s,\log S)\ud [\log S_i,\log S_j](s)\end{equation}\end{definition}

In contrast to the case without path dependence, the validity of \eqref{path portolio quadr var eq}
 is no longer clear \emph{a priori}  in the path-dependent setting, which is why  \eqref{path portolio quadr var eq} is included as a requirement in the preceding definition. See \cite[Theorem 2.1]{Ananova} for sufficient conditions on $S$ and $\pi$ under which  \eqref{path portolio quadr var eq} holds. 
 
 \begin{rem}\label{associativity remark}If $\pi$ is a functional portfolio, then 
 $$\frac{\pi(t,\log Y)}{Y(t)}:=\Big(\frac{\pi_1(t,\log Y)}{Y_1(t)},\dots, \frac{\pi_d(t,\log Y)}{Y_d(t)}\Big)^\top,\qquad Y\in D([0,T],\mathbb R^d_+),
 $$
 is an admissible functional integrand on $\mathbb R^d_+$ and we have the following change of variables formula:
 $$\int_0^t\pi(s,\log S)\ud\log S(s)=\int_0^t\frac{\pi(s,\log S)}{S(s)}\ud S(s)-\frac12\sum_{i=1}^d\int_0^t\frac{\pi_i(s,\log S)}{(S_i(s))^2}\ud[S_i,S_i](s).
 $$
 This follows from \cite[Corollary 2.1]{Vol15}. In addition, \eqref{log covariation id} and the associativity of the Riemann--Stieltjes integral (see, e.g., \cite[Theorem I.6b]{Widder}) yield  the identity
$$\int_0^t\pi_i(s,\log S)\pi_j(s,\log S)\ud [\log S_i,\log S_j](s)
=\int_0^t\frac{\pi_i(s,\log S)\pi_j(s,\log S)}{S_i(s)S_j(s)}\ud [ S_i, S_j](s).
$$
 \end{rem}

  If $\pi$ is a functional portfolio, we can in analogy to \eqref{log S change eq} and \eqref{ItoDiffEqLSGcadlag} define the corresponding \emph{portfolio value with unit initial wealth} as
 \begin{equation}\label{functional portfolio value}
 \begin{split}
 V^\pi(t)&:=\exp\Bigg(\int_0^t\pi(s,\log S)\ud\log S(s)\\
 &\qquad+\frac12\sum_{i,j=1}^d\int_0^t\Big(\delta_{ij}\pi_i(s,\log S)-\pi_i(s,\log S)\pi_j(s,\log S)\Big)\ud[\log S_i\log S_j](t)\Bigg)\\
 & =\exp\Bigg(\int_0^t\frac{\pi(s,\log S)}{S(s)}\ud S(s)-\frac12\sum_{i,j=1}^d\int_0^t\frac{\pi_i(s,\log S)\pi_j(s,\log S)}{S_i(s)S_j(s)}\ud [ S_i, S_j](s)\Bigg)
  \end{split}
\end{equation}
where $\delta_{ij}$ is the Kronecker delta  and where we have used Remark \ref{associativity remark} for the second identity.
However, in our path-dependent setting, it is not evident that $V^\pi(t)$ can be obtained, in analogy to Lemma~\ref{portfolio value lemma}, as the wealth of a self-financing trading strategy.
For such a statement, a functional extension of $V^\pi$ is needed.

 \begin{lemma}\label{functional portfolio value lemma}
 For a functional portfolio $\pi$, let $W\subset\mathbb R^m$ be open, $F\in\mathbb C^{1,2}_c(\mathbb R^d,W)$, and $A\in\CBV([0,T],W)$ be such that $\pi(t,X)=\nabla_XF(t,X,A)$. Define for $X\in D([0,T],\mathbb R^d_+)$
 $$\bbar V^\pi(t,X,(A,B^\pi)):=\exp\Big(F(t,\log X,A)-F(0,\log X,A)-B^\pi(t)\Big),
 $$
 where, for $A_0(t)=t$, 
  \begin{align}
 B^\pi(t)&=\sum_{i=0}^m\int_0^t\mathscr{D}_iF(s,\log S,A)\ud A_i( s)\label{Bpi eq}\\
 &\quad + \frac{1}{2}\sum_{i,j=1}^d\int_0^t\Big( (\partial_{ij} F)(s, \log S  ,A)+\pi_i(s,\log S)\pi_j(s,\log S)-\delta_{ij}\pi_i(s,\log S)\Big) \ud [\log S_i,\log S_j](s).\nonumber
  \end{align}
  Then $\bbar V^\pi\in\mathbb C^{1,2}_c(\mathbb R^d_+,W\times\mathbb R)$, the functional $\xi:[0,T]\times D([0,T],\mathbb R^d_+)\to\mathbb R^d$ defined through
  $$\xi_i(t,X)=\frac{\pi_i(t,\log X)V^\pi(t,X,(A,B))}{X_i(t)},\qquad i=1,\dots, d,
  $$
  is an admissible integrand on $\mathbb R^d_+$, and 
  \begin{equation}\label{funct portf sf cond}
  \bbar V^\pi(t,S,(A,B^\pi))=1+\int_0^t\xi(s,S)\ud S(s)=V^\pi(t)\qquad\text{for $0\le t\le T$.}
  \end{equation}
   \end{lemma}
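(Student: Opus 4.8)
The plan is to recognize $\bbar V^\pi=e^{\Phi}$, where $\Phi(t,X,(A,B)):=F(t,\log X,A)-F(0,\log X,A)-B(t)$, and then to read off the three assertions from two applications of the pathwise functional It\^o formula. First I would establish the regularity claim $\bbar V^\pi\in\mathbb C^{1,2}_c(\mathbb R^d_+,W\times\mathbb R)$. Since $x\mapsto\log x$ is smooth on $\mathbb R^d_+$, the change-of-variables rule underlying Remark~\ref{associativity remark} (that is, \cite[Corollary 2.1]{Vol15}) shows that $X\mapsto F(t,\log X,A)$ inherits from $F$ the required horizontal and twice vertical differentiability on $\mathbb R^d_+$; composing with $\exp$ and adding the term $-B(t)$, which is linear in the finite-variation argument, preserves membership in $\mathbb C^{1,2}_c$. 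The subtracted term $F(0,\log X,A)$ depends on the stopped path only through its initial value, so its horizontal derivative and all its vertical derivatives of order $\ge1$ vanish for $t>0$; along a fixed initial condition it is a constant and hence plays no role in the It\^o formula.

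Differentiating by the chain rule I get, vertically,
\[
\partial_i\bbar V^\pi=\bbar V^\pi\,\frac{(\partial_iF)(t,\log X,A)}{X_i(t)}=\frac{\pi_i(t,\log X)\,\bbar V^\pi}{X_i(t)},
\]
which is exactly the asserted $\xi_i$; thus $\xi=\nabla_X\bbar V^\pi$, and its admissibility on $\mathbb R^d_+$ is immediate from Definition~\ref{locadmcont} with the finite-variation path $(A,B^\pi)$. A second differentiation gives
\[
\partial_{ij}\bbar V^\pi=\bbar V^\pi\Big(\frac{\pi_i\pi_j+(\partial_{ij}F)(t,\log X,A)}{X_i(t)X_j(t)}-\delta_{ij}\frac{\pi_i}{X_i(t)^2}\Big),
\]
while for the finite-variation directions $\mathscr D_k\bbar V^\pi=\bbar V^\pi\,(\mathscr D_kF)(t,\log X,A)$ for $k=0,\dots,m$, and the dependence on $B$ through $-B(t)$ produces horizontal derivative $-\bbar V^\pi$ in that direction.

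Next I would apply the functional It\^o formula to $\bbar V^\pi$ along $S$ with finite-variation path $(A,B^\pi)$. Because $\bbar V^\pi(0,S,(A,B^\pi))=e^{-B^\pi(0)}=1$ and $\partial_i\bbar V^\pi=\xi_i$, this yields $\bbar V^\pi(t,S,(A,B^\pi))=1+\int_0^t\xi\,\ud S+R(t)$, with $R$ collecting the half second-order term against $\ud[S_i,S_j]$ together with the first-order terms against $\ud A_k$ and $\ud B^\pi$. The terms $\bbar V^\pi(\mathscr D_kF)\,\ud A_k$ cancel against the first line of \eqref{Bpi eq}; for the rest I use \eqref{log covariation id} in the form $\ud[S_i,S_j]=S_i(s)S_j(s)\,\ud[\log S_i,\log S_j]$, after which the explicit $\partial_{ij}\bbar V^\pi$ above matches the integrand $\partial_{ij}F+\pi_i\pi_j-\delta_{ij}\pi_i$ of the second line of \eqref{Bpi eq} term by term. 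Hence $R\equiv0$, giving the middle equality of \eqref{funct portf sf cond}.

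It remains to identify $\bbar V^\pi(t,S,(A,B^\pi))$ with $V^\pi(t)$, for which I apply the functional It\^o formula to $F$ itself along $\log S$ and $A$, obtaining
\[
F(t,\log S,A)-F(0,\log S,A)=\int_0^t\pi\,\ud\log S+\tfrac12\sum_{i,j=1}^d\int_0^t(\partial_{ij}F)\,\ud[\log S_i,\log S_j]+\sum_{k=0}^m\int_0^t(\mathscr D_kF)\,\ud A_k.
\]
Subtracting $B^\pi(t)$ cancels the $\mathscr D_kF$ and $\partial_{ij}F$ contributions and leaves precisely $\int_0^t\pi\,\ud\log S+\tfrac12\sum_{i,j}\int_0^t(\delta_{ij}\pi_i-\pi_i\pi_j)\,\ud[\log S_i,\log S_j]$, which is $\log V^\pi(t)$ by the first line of \eqref{functional portfolio value}. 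I expect the genuine obstacle to be the regularity step: showing that the composition stays in $\mathbb C^{1,2}_c$ and rigorously justifying both invocations of the functional It\^o formula (which require $S,\log S\in QV^d$, guaranteed by Lemma~\ref{log lemma}), along with the careful handling of the boundary term $F(0,\log X,A)$ at $t=0$. Once the regularity is secured, the remaining cancellations are a bookkeeping exercise driven by the very definition of $B^\pi$.
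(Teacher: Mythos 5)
Your proposal is correct and follows essentially the same route as the paper: compute $\nabla_X\bbar V^\pi$, $\partial_{ij}\bbar V^\pi$, and the horizontal derivatives via the chain rule of \cite[Lemma 3.3]{Vol15}, apply the functional It\^o formula to $\bbar V^\pi$ along $S$ with the finite-variation path $(A,B^\pi)$ so that the definition of $B^\pi$ forces the cancellation of all non-It\^o-integral terms (after converting $\ud[S_i,S_j]$ to $\ud[\log S_i,\log S_j]$ via \eqref{log covariation id} and using associativity), and identify the result with $V^\pi$ by applying the It\^o formula to $F$ along $\log S$. Your treatment of the term $F(0,\log X,A)$ and of the regularity step is, if anything, more explicit than the paper's.
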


We will also need the following functional extension of the market portfolio.
For $X\in D([0,T],\mathbb R^d)$, let
\begin{equation}\label{functional market portfolio}
\bbar\mu_i(t,X):=\frac{e^{X_i(t)}}{e^{X_1(t)}+\cdots+e^{X_d(t)}}, \qquad i=1,\dots, d.
\end{equation}
Then $\bbar \mu(t,X)\in\Delta^d_+$, and $\bbar\mu(t,\log S)$ is equal to the market portfolio $\mu(t)$ from Lemma~\ref{market portfolio lemma}. 

\begin{lemma}\label{functional market portfolio lemma} The functional $\bbar\mu$ is a functional portfolio, and
$$\bbar V^{\bbar\mu}(t,X,B^{\bbar\mu})=\frac{e^{X_1(t)}+\cdots+e^{X_d(t)}}{e^{X_1(0)}+\cdots+e^{X_d(0)}},\qquad X\in D([0,T],\mathbb R^d).
$$
\end{lemma}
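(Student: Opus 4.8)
The plan is to produce an explicit generating functional for $\bbar\mu$, check the three defining properties of a functional portfolio directly, and then obtain $B^{\bbar\mu}$ and the value functional by specializing Lemma~\ref{functional portfolio value lemma}. First I would set $F^{\bbar\mu}(t,X,A):=\log\big(e^{X_1(t)}+\cdots+e^{X_d(t)}\big)$, which depends neither on the past of $X$ nor on $A$, and take $m=1$, $W=\mathbb R$, and $A\equiv0\in\CBV([0,T],W)$. Differentiating the smooth map $h(y):=\log\sum_k e^{y_k}$ gives $h_{y_i}(y)=e^{y_i}/\sum_k e^{y_k}$, so that the vertical derivative is $\partial_iF^{\bbar\mu}(t,X,A)=\bbar\mu_i(t,X)$ and hence $\nabla_XF^{\bbar\mu}=\bbar\mu$ as in \eqref{functional market portfolio}. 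Since $F^{\bbar\mu}$ is the smooth function $h$ evaluated at the current value $X(t)$, it lies in $\mathbb C^{1,2}_c(\mathbb R^d,W)$, so $\bbar\mu$ is an admissible functional integrand, and $\sum_i\bbar\mu_i(t,X)=1$ is immediate.

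The crux is the quadratic-variation requirement \eqref{path portolio quadr var eq}, which the text stresses is not automatic under path dependence. Here I would exploit that $\bbar\mu$ depends on $X$ only through its current value: with $g:=\nabla h$ we have $\bbar\mu(s,\log S)=g(\log S(s))$, so the functional It\^o integral reduces to the ordinary F\"ollmer integral $\int_0^t g(\log S(s))\,\ud\log S(s)$. Because $\log S\in QV^d$ by Lemma~\ref{log lemma}, F\"ollmer's pathwise It\^o formula applied to $h$ yields $\int_0^t\bbar\mu(s,\log S)\,\ud\log S(s)=h(\log S(t))-h(\log S(0))-R(t)$, where $R(t):=\tfrac12\sum_{i,j}\int_0^t h_{y_iy_j}(\log S)\,\ud[\log S_i,\log S_j]$ is continuous and of bounded variation. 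A continuous path of bounded variation has vanishing quadratic variation and vanishing covariation with every $QV$ path, so the quadratic variation of the integral equals $[h(\log S)]$, and the pathwise chain rule for quadratic variations gives $[h(\log S)](t)=\sum_{i,j}\int_0^t h_{y_i}(\log S)h_{y_j}(\log S)\,\ud[\log S_i,\log S_j]$. Since $h_{y_i}=\bbar\mu_i$, this is exactly \eqref{path portolio quadr var eq}, so $\bbar\mu$ is a functional portfolio. It is precisely the existence of such an $h$ whose F\"ollmer integral reproduces $\int\bbar\mu\,\ud\log S$ up to a finite-variation term that makes the argument work; for a genuinely path-dependent integrand no such $h$ exists, which is why the identity must be postulated in general.

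Finally I would invoke Lemma~\ref{functional portfolio value lemma} with $F=F^{\bbar\mu}$ and $A\equiv0$ and evaluate $B^{\bbar\mu}$ from \eqref{Bpi eq}. The first sum vanishes because $F^{\bbar\mu}$ carries no explicit time- or $A$-dependence, so $\mathscr D_iF^{\bbar\mu}\equiv0$, and because $A\equiv0$. For the second sum, the Hessian of $h$ is $h_{y_iy_j}=\delta_{ij}\bbar\mu_i-\bbar\mu_i\bbar\mu_j$, so the integrand $(\partial_{ij}F^{\bbar\mu})(s,\log S,A)+\bbar\mu_i\bbar\mu_j-\delta_{ij}\bbar\mu_i$ is identically zero. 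Hence $B^{\bbar\mu}\equiv0$, the definition of $\bbar V^{\bbar\mu}$ collapses to the exponential of the increment $F^{\bbar\mu}(t,\cdot)-F^{\bbar\mu}(0,\cdot)=\log\big(\sum_i e^{X_i(t)}\big)-\log\big(\sum_i e^{X_i(0)}\big)$, and this is the asserted ratio; substituting the log-price path and using Lemma~\ref{market portfolio lemma} recovers $\bbar V^{\bbar\mu}(t,\log S,B^{\bbar\mu})=V^\mu(t)$ as a consistency check. I expect the quadratic-variation step to be the only genuine obstacle, the remaining work being the routine Hessian computation and the justification of the reduction to F\"ollmer's calculus for the current-value integrand $\bbar\mu$.
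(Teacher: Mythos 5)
Your proposal is correct and follows essentially the same route as the paper: the same generating functional $H(t,X)=\log(e^{X_1(t)}+\cdots+e^{X_d(t)})$, the same Hessian identity $\partial_{ij}H=\delta_{ij}\bbar\mu_i-\bbar\mu_i\bbar\mu_j$ forcing $B^{\bbar\mu}\equiv0$, and the value formula via Lemma~\ref{functional portfolio value lemma}. The one place where the paper is more explicit is the identification $\int_0^t\bbar\mu(s,\log S)\ud\log S(s)=\int_0^t\mu(s)\ud\log S(s)$, which you dispatch by saying the functional integral ``reduces to'' the F\"ollmer integral for a current-value integrand: the paper flags this as not obvious (the two integrals are limits of different ``Riemann sums'') and instead derives it by comparing the functional It\^o formula for $H$ with the classical one for $h$; your shortcut is nonetheless valid because $X^{n,s-}(s)=X(s)$ in \eqref{Itointcont}, so the two families of approximating sums coincide term by term for integrands depending only on the current value, but that observation should be stated rather than left implicit.
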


The following lemma extends Lemma~\ref{G generated portfolio lemma} to our present functional setting.

\begin{lemma}\label{pathdependent functionally generated portfolio}Let $U\supset \Delta^d_+$ and $W\in\mathbb R^m$ be open sets. For strictly positive $G\in\mathbb C^{1,2}_c(U,W)$ and  $A\in\CBV([0,T],W)$ we let 
\begin{equation}
g_i(t,X):=\partial_i\log G(t,X,A).
\end{equation}
Then
\begin{equation}\label{functional portfolio generated by G eq}
\pi_i(t,X):=\bbar\mu_i(t,X)\Big(1+g_i(t,\bbar\mu(\cdot,X))-\sum_{j=1}^d\bbar\mu_j(t,X)g_j(t,\bbar\mu(\cdot,X))\Big)
\end{equation}
is an admissible integrand on $\mathbb R^d$ satisfying $\pi_1+\cdots+\pi_d=1$.
\end{lemma}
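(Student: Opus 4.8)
The plan is to realize $\pi$ as the vertical derivative of an explicit generating functional, which reduces the admissibility claim to a regularity verification, and to settle the budget constraint by a one-line computation. The natural candidate, the functional analogue of the object $\log(S_1+\cdots+S_d)+\log G$ underlying Lemma~\ref{G generated portfolio lemma}, is
$$F(t,X,A):=\log\Big(\sum_{j=1}^d e^{X_j(t)}\Big)+\log G\big(t,\bbar\mu(\cdot,X),A\big),\qquad X\in D([0,T],\mathbb R^d).$$
The first summand is precisely the generating functional of the market portfolio from Lemma~\ref{functional market portfolio lemma} (indeed $\partial_i\log(\sum_j e^{X_j(t)})=\bbar\mu_i(t,X)$), while in the second summand $\bbar\mu(\cdot,X)$ denotes the $\Delta^d_+$-valued path $s\mapsto\bbar\mu(s,X)$, which is $U$-valued because $U\supset\Delta^d_+$, so that $G(t,\bbar\mu(\cdot,X),A)$ is well defined and strictly positive. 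The budget constraint is immediate from $\sum_i\bbar\mu_i(t,X)=1$:
$$\sum_{i=1}^d\pi_i=\sum_i\bbar\mu_i+\sum_i\bbar\mu_i g_i-\Big(\sum_i\bbar\mu_i\Big)\sum_j\bbar\mu_j g_j=1+\sum_i\bbar\mu_i g_i-\sum_j\bbar\mu_j g_j=1.$$

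The core step is the computation $\nabla_X F=\pi$. A vertical bump of $X$ at time $t$ in direction $e_i$ alters the path $\bbar\mu(\cdot,X)$ only through its time-$t$ value, with
$$\frac{\partial}{\partial X_i(t)}\bbar\mu_k(t,X)=\bbar\mu_k(t,X)\big(\delta_{ki}-\bbar\mu_i(t,X)\big).$$
Applying the functional chain rule to $\log G(t,\bbar\mu(\cdot,X),A)$, whose vertical derivative in the $k$-th path direction is $g_k(t,\bbar\mu(\cdot,X))$, yields
$$\partial_i\log G(t,\bbar\mu(\cdot,X),A)=\sum_{k=1}^d g_k\,\bbar\mu_k(\delta_{ki}-\bbar\mu_i)=\bbar\mu_i\Big(g_i-\sum_{k}\bbar\mu_k g_k\Big),$$
and adding the contribution $\bbar\mu_i$ coming from the market term recovers exactly \eqref{functional portfolio generated by G eq}. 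Thus $\pi=\nabla_X F$, so that once $F$ is shown to lie in $\mathbb C^{1,2}_c(\mathbb R^d,W)$, Definition~\ref{locadmcont} (applied with $U=\mathbb R^d$ and the given $A$, $W$) gives that $\pi$ is an admissible integrand on $\mathbb R^d$.

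The main obstacle is the regularity claim $F\in\mathbb C^{1,2}_c(\mathbb R^d,W)$: I must show that the horizontal derivatives $\mathscr D_\ell F$ and the first and second vertical derivatives $\partial_i F,\partial_{ij}F$ exist, are non-anticipative, and satisfy the local boundedness and joint continuity conditions built into $\mathbb C^{1,2}_c$. For the market term this is elementary, since $\log(\sum_j e^{X_j(t)})$ depends only on the current value $X(t)$ and is smooth with derivatives bounded on the simplex. The delicate part is the composite term $G(t,\bbar\mu(\cdot,X),A)$, where I would establish a functional chain rule showing that precomposing the $\mathbb C^{1,2}_c$ functional $G$ with the pointwise-smooth, value-dependent map $X\mapsto\bbar\mu(\cdot,X)$ again yields a $\mathbb C^{1,2}_c$ functional, expressing $\mathscr D_\ell F$ and $\partial_{ij}F$ through the corresponding derivatives of $G$ together with the first and second $X$-derivatives of $\bbar\mu_k$ (which are smooth and bounded on $\Delta^d_+$). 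Continuity of $G$ and its derivatives in the metric on path space, the smoothness of $\bbar\mu$, and the strict positivity of $G$ (needed to pass to $\log G$) then transfer to $F$; this is where I expect to rely on the composition and chain-rule machinery of the functional It\^o calculus recalled in the Appendix and on the arguments of \cite{CF,Vol15}.
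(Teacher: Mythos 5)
Your proposal is correct and follows essentially the same route as the paper: the paper sets $\Gamma(t,X,A):=\log G(t,\bbar\mu(\cdot,X),A)$, invokes the functional chain rule of \cite[Lemma 3.3]{Vol15} together with $\partial_i\bbar\mu_j=\delta_{ij}\bbar\mu_i-\bbar\mu_i\bbar\mu_j$ to get $\partial_i\Gamma=\pi_i-\bbar\mu_i$, and then adds the market-portfolio generator $H(t,X)=\log(e^{X_1(t)}+\cdots+e^{X_d(t)})$ from Lemma~\ref{functional market portfolio lemma}, which is exactly your $F=H+\Gamma$ with $\nabla_XF=\pi$. The regularity of the composite functional, which you flag as the remaining obstacle, is precisely what the cited chain rule supplies, so no gap remains.
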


If $\pi$ defined by \eqref{functional portfolio generated by G eq} satisfies  \eqref{path portolio quadr var eq}, then it is a functional portfolio and will be called the \emph{functional portfolio generated by $G$.}
Now we can state the path-dependent version of Theorem~\ref{MasterFormulaCadlag}.

\begin{thm}\label{Path-dependent pathwise master formula}{\bf(Path-dependent pathwise master formula)} Let $G$ and $\pi$ be as in Lemma~\ref{pathdependent functionally generated portfolio} and assume that \eqref{path portolio quadr var eq} holds. Let furthermore  $V^\pi$ be as in \eqref{functional portfolio value} and $V^{\mu}$ be the portfolio value \eqref{market portfolio value eq} of the market portfolio $\mu$. Then 
\begin{equation*} 
\log \left( \dfrac{V^{\pi}(T)}{V^{\mu}(T)} \right) = \log \left( \dfrac{G( T,\mu,A)}{G( 0,\mu,A)} \right) + \mathfrak{g}([0,T])+ \mathfrak{h}([0,T])  , \quad 0 \leq T < \infty,
\end{equation*}
where, for $A_0(t):=t$, 
\begin{equation*}
\mathfrak{g}(\mathrm{d}t) := -\dfrac{1}{2} \sum_{i,j=1}^{d} \frac{ \partial_{ij}G( t,\mu,A)}{G(t, \mu,A)}  \ud [\mu_i,\mu_j](t)\quad\text{and}\quad \mathfrak{h}(\mathrm{d}t) := -\sum_{k=0}^m \dfrac{\mathscr{D}_kG(t,\mu,A)}{G(t,\mu,A)}    \ud A_k(t).\end{equation*}
\end{thm}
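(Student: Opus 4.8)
The plan is to avoid rerunning the pathwise computation behind Theorem~\ref{MasterFormulaCadlag} and instead to exploit the functional representation of $V^\pi$ supplied by Lemma~\ref{functional portfolio value lemma}. The key observation is that the portfolio $\pi$ generated by $G$ possesses the explicit generating functional
\[
F(t,X,A):=\log\Big(\sum_{i=1}^d e^{X_i(t)}\Big)+\log G\big(t,\bbar\mu(\cdot,X),A\big).
\]
Indeed, $\nabla_X\log(\sum_i e^{X_i})=\bbar\mu$, and a chain rule through the softmax map $\bbar\mu_m(t,X)=e^{X_m(t)}/\sum_j e^{X_j(t)}$ gives $\partial_k\log G(t,\bbar\mu(\cdot,X),A)=\bbar\mu_k\big(g_k-\sum_j\bbar\mu_j g_j\big)$, so that $\nabla_XF=\pi$ by \eqref{functional portfolio generated by G eq}. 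First I would check that $F\in\mathbb C^{1,2}_c(\mathbb R^d,W)$, which amounts to the stability of $\mathbb C^{1,2}_c$ under composition with the smooth current-state map $\bbar\mu$ and with $\log$ (using $G>0$); then $(F,A)$ is an admissible representing pair for $\pi$.

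With this choice of $F$, Lemma~\ref{functional portfolio value lemma} gives $V^\pi(t)=\exp\big(F(t,\log S,A)-F(0,\log S,A)-B^\pi(t)\big)$. Since $F$ is non-anticipative, $F(0,\log S,A)=\log\Sigma(0)+\log G(0,\mu,A)$ and $F(t,\log S,A)=\log\Sigma(t)+\log G(t,\mu,A)$, where $\Sigma:=S_1+\cdots+S_d$; and $\log\Sigma(t)-\log\Sigma(0)=\log V^\mu(t)$ by \eqref{market portfolio value eq}. Subtracting $\log V^\mu$ therefore yields at once
\[
\log\frac{V^\pi(t)}{V^\mu(t)}=\log\frac{G(t,\mu,A)}{G(0,\mu,A)}-B^\pi(t),
\]
so the whole theorem collapses to the single identity $-B^\pi(t)=\mathfrak g([0,t])+\mathfrak h([0,t])$, which I would establish by treating the two sums in \eqref{Bpi eq} separately. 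The horizontal sum is immediate: the market-sum term $\log\sum_i e^{X_i}$ carries neither explicit $t$- nor $A$-dependence, so $\mathscr D_kF=\mathscr D_kG/G$ for $k=0,\dots,m$, whence $\sum_{k=0}^m\int_0^t\mathscr D_kF(\cdot,\log S,A)\,\ud A_k=-\mathfrak h([0,t])$.

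The genuine work, and the main obstacle, lies in the vertical sum, where I must prove
\[
\sum_{i,j=1}^d\big(\partial_{ij}F+\pi_i\pi_j-\delta_{ij}\pi_i\big)\,\ud[\log S_i,\log S_j]=\sum_{i,j=1}^d\frac{\partial_{ij}G}{G}\,\ud[\mu_i,\mu_j].
\]
Here I would first record that, since $\mu_i=\bbar\mu_i(\log S)$ for the smooth softmax, F\"ollmer's pathwise It\^o formula gives $\ud[\mu_i,\mu_j]=\sum_{k,l}\sigma_{ik}\sigma_{jl}\,\ud[\log S_k,\log S_l]$ with $\sigma_{ik}:=\delta_{ik}\mu_i-\mu_i\mu_k$. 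To evaluate the left-hand side I would compute $\partial_{ij}F=\sigma_{ij}+\partial_{ij}\log G(\bbar\mu)$ by a second-order chain rule, the second term producing both a \lq\lq Hessian\rq\rq\ piece $\sum_{m,n}\partial_{mn}\log G\,\sigma_{mi}\sigma_{nj}$ and curvature terms $\sum_m g_m\,\partial_j\sigma_{mi}$ arising from differentiating the softmax itself.

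Writing $\pi_i=\mu_i+h_i$ with $h_i:=\sum_m g_m\sigma_{mi}=\mu_i\big(g_i-\sum_j\mu_jg_j\big)$, one finds that the curvature terms cancel \emph{exactly} against $\sigma_{ij}+\pi_i\pi_j-\delta_{ij}\pi_i$ up to the quadratic remainder $h_ih_j$; this termwise cancellation is the one laborious step of the proof. What is left, namely $\sum_{m,n}\partial_{mn}\log G\,\sigma_{mi}\sigma_{nj}+h_ih_j$, reassembles after contracting with $\ud[\log S_i,\log S_j]$ and using $\partial_{mn}\log G=\partial_{mn}G/G-g_mg_n$ (the $g_mg_n$ absorbing exactly the $h_ih_j$ contribution) into $\sum_{m,n}(\partial_{mn}G/G)\,\ud[\mu_m,\mu_n]$, which is the desired right-hand side. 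Throughout, the change-of-variables and associativity rules for functional It\^o integrals recalled in Remark~\ref{associativity remark} (and \cite{Vol15}) guarantee that these formal differential manipulations are legitimate identities between F\"ollmer and functional integrals. Combining the two sums yields $-B^\pi=\mathfrak g+\mathfrak h$ and hence the master formula.
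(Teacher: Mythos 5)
Your proposal is correct and follows essentially the same route as the paper: the same generating functional $F=H+\Gamma$ with $H(t,X)=\log\sum_ie^{X_i(t)}$ and $\Gamma(t,X,A)=\log G(t,\bbar\mu(\cdot,X),A)$, the same reduction via Lemma~\ref{functional portfolio value lemma} to the identity $-B^\pi=\mathfrak g+\mathfrak h$, and the same second-order chain-rule computation of $\partial_{ij}\Gamma$ (which the paper isolates as Lemma~\ref{Gamma second derivatives lemma}) followed by the contraction $\ud[\mu_i,\mu_j]=\mu_i\mu_j\tau^\mu_{ij}(\ud t)$ from Lemma~\ref{market portfolio dynamics lemma}. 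The termwise cancellation you flag as the laborious step is exactly the content of the paper's formula \eqref{Gamma second derivatives eq}, and your bookkeeping of the $g_mg_n$ versus $h_ih_j$ terms matches it.
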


\section{Examples}\label{Examples section}

In this section, we will discuss a class of examples, along with an empirical analysis. The general idea is to use portfolio-generating functions of a convex combination of the market portfolio $\mu(t)$ and its moving average defined by
$$\alpha(t)=\frac1\theta\int_{t-\theta}^t\mu(0\vee s)\ud s,
$$
where $\theta>0$ is given. The portfolio-generating function is then of the form $\varphi (\lambda \mu(t)+(1-\lambda)\alpha(t))$, where $\varphi $ is a strictly positive and twice continuously differentiable function defined on an open and convex neighborhood $U$ of $\Delta^d_+$. It can be considered within the contexts of both Section~\ref{non-pathdep setting section} and Section~\ref{path-dependent master formula section}. In the context of the first section, $\alpha(t)$ may be be regarded as an additional component of bounded variation, and so we may work with the function
\begin{equation}\label{non-pathdep ex}
G(\mu(t),\alpha(t))=\varphi \big(\lambda \mu(t)+(1-\lambda)\alpha(t)\big).
\end{equation}
In the context of Section~\ref{path-dependent master formula section}, however, we may consider a non-anticipative functional $\wh G$ that depends on the \emph{path} of $\mu$ via $\alpha$. To make this latter idea precise, we let for $X\in D([0,T],U)$,
$$\wh G(t,X):=\varphi \Big(\lambda X(t)+\frac{1-\lambda}\theta\int_{t-\theta}^tX(0\vee s)\ud s\Big).
$$
Then $\wh G\in \mathbb C^{1,2}_c(U,\emptyset)$ and
\begin{equation}\label{pathdep ex}
G(t,\mu(t))=\wh G(t,\mu).
\end{equation}
The following proposition states in particular that the two descriptions \eqref{non-pathdep ex} and \eqref{pathdep ex} lead to the same results.

\begin{prop}\label{Examples eq prop} Let $\pi(t)$ be the portfolio generated by the function $G$ in \eqref{non-pathdep ex} and $\wh\pi(t,X)$ as in Lemma~\ref{pathdependent functionally generated portfolio} for $\wh G$ from \eqref{pathdep ex}. Then, for
$$g_i(t):=\frac{\lambda \varphi _{x_i}\big(\lambda \mu(t)+(1-\lambda)\alpha(t)\big)}{\varphi \big(\lambda \mu(t)+(1-\lambda)\alpha(t)\big)},\qquad i=1,\dots, d,
$$
we have
\begin{equation}\label{examples general portfolio eq}
\wh\pi_i(t,\log S)=\pi_i(t)=\mu_i(t)\Big(1+g_i(t)-\sum_{j=1}^d\mu_j(t)g_j(t)\Big).
\end{equation}
Moreover, $\wh\pi$ satisfies \eqref{path portolio quadr var eq} and is hence a functional portfolio. Furthermore, we have  $V^\pi(T)=V^{\wh \pi}(T)$ and
\begin{equation}\label{master eq ex eq}
\log\bigg(\frac{V^\pi(T)}{V^\mu(T)}\bigg)=\frac{\varphi \big(\lambda \mu(T)+(1-\lambda)\alpha(T)\big)}{\varphi \big(\lambda \mu(0)+(1-\lambda)\alpha(0)\big)}+ \mathfrak{g}([0,T])+ \mathfrak{h}([0,T]),
\end{equation}
where 
\begin{equation}
\begin{split}
\mathfrak{h}([0,T])&=-\frac{1-\lambda}{ \lambda}\sum_{i=1}^d\int_0^Tg_i(t)\alpha_i'(t)\ud t\\
\mathfrak{g}([0,T])&=- \frac{\lambda^2}2\sum_{i,j=1}^{d}  \int_0^T\frac{\varphi_{x_i,x_j} (\lambda \mu(t)+(1-\lambda)\alpha(t))}{ \varphi (\lambda \mu(t)+(1-\lambda)\alpha(t))} \ud [\mu_i,\mu_j](t)
\end{split}
\end{equation}
\end{prop}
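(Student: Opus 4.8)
The plan is to reduce the whole statement to the non-path-dependent theory of Section~\ref{non-pathdep setting section} by first showing that the two portfolios coincide as paths, and then reading off the master equation from Theorem~\ref{MasterFormulaCadlag}. First I would establish the identity \eqref{examples general portfolio eq} by computing both portfolios explicitly. For the portfolio generated via \eqref{portfolioGeneratedByG} from $G(\mu,\alpha)=\varphi(\lambda\mu+(1-\lambda)\alpha)$, the chain rule gives $\frac{\partial}{\partial x_i}\log G(\mu(t),\alpha(t))=g_i(t)$, and substituting into \eqref{portfolioGeneratedByG} yields $\pi_i(t)=\mu_i(t)\big(1+g_i(t)-\sum_j\mu_j(t)g_j(t)\big)$. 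For the functional portfolio \eqref{functional portfolio generated by G eq} of $\wh G$, the key observation is that the vertical derivative $\partial_i$ perturbs only the current value $X(t)$, and such a one-point perturbation leaves the moving-average integral $\frac1\theta\int_{t-\theta}^tX(0\vee s)\,\ud s$ unchanged; hence $\partial_i\log\wh G(t,X)=\lambda\varphi_{x_i}(\cdots)/\varphi(\cdots)$, so that $g_i(t,\bbar\mu(\cdot,\log S))=g_i(t)$. Since $\bbar\mu(\cdot,\log S)=\mu$, formula \eqref{functional portfolio generated by G eq} collapses to the same expression, proving $\wh\pi_i(t,\log S)=\pi_i(t)$.

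The second and most delicate step is to verify the quadratic-variation requirement \eqref{path portolio quadr var eq} for $\wh\pi$. Here I would exploit that, by the previous step, the functional integrand $\wh\pi(\cdot,\log S)$ coincides \emph{as a path} with the genuine non-path-dependent admissible integrand $\pi$ of Lemma~\ref{G generated portfolio lemma}. Since $\wh\pi(t,X)$ depends on $X$ only through $X(t)$ and a functional of the \emph{past} of $X$, freezing the future in the functional Riemann sums \eqref{Itointcont} does not alter the integrand, and the piecewise-constant path approximations converge to $\wh\pi(t,\log S)=\pi(t)$ by continuity of $\varphi$ and of the moving-average functional. Consequently the functional integral reduces to the Föllmer integral \eqref{Foellmer integral Riemann sums} of $\pi$ against $\log S$ (cf.\ \cite[Corollary 2.1]{Vol15}, and \cite[Theorem 2.1]{Ananova} for sufficient conditions), whose quadratic variation is the stated sum by the standard pathwise property; thus \eqref{path portolio quadr var eq} holds and $\wh\pi$ is a functional portfolio. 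I expect this reduction to be the main obstacle, precisely because \eqref{path portolio quadr var eq} is \emph{not} automatic in the path-dependent framework and must be inherited from the elementary setting rather than re-proved inside the functional calculus. The equality $V^\pi(T)=V^{\wh\pi}(T)$ then follows by comparing the exponential representations \eqref{ItoDiffEqLSGcadlag} and \eqref{functional portfolio value}: the bond term in \eqref{ItoDiffEqLSGcadlag} vanishes because $\sum_i\pi_i=1$, the integrands agree by the first step, and the two integrals coincide by the reduction just described.

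Finally, for the master equation I would simply invoke Theorem~\ref{MasterFormulaCadlag}, since $\pi$ is exactly the non-path-dependent portfolio generated by $G$ from \eqref{non-pathdep ex} with the bounded-variation path $A=\alpha$ (which is $C^1$ in $t$, hence of finite variation, with $\ud A_k=\alpha_k'(t)\,\ud t$). Differentiating $G=\varphi(\lambda\mu+(1-\lambda)\alpha)$ gives $G_{x_i,x_j}=\lambda^2\varphi_{x_i,x_j}(\cdots)$ and $G_{a_k}=(1-\lambda)\varphi_{x_k}(\cdots)$. Substituting into the two Radon measures of Theorem~\ref{MasterFormulaCadlag} and using $\varphi_{x_k}/\varphi=g_k/\lambda$ produces $\mathfrak g(\ud t)=-\tfrac{\lambda^2}2\sum_{i,j}\tfrac{\varphi_{x_i,x_j}(\cdots)}{\varphi(\cdots)}\,\ud[\mu_i,\mu_j](t)$ and $\mathfrak h(\ud t)=-\tfrac{1-\lambda}{\lambda}\sum_i g_i(t)\alpha_i'(t)\,\ud t$, exactly as claimed, while the boundary term of Theorem~\ref{MasterFormulaCadlag} gives $\log\big(G(T,\mu,A)/G(0,\mu,A)\big)$; together these yield \eqref{master eq ex eq}.
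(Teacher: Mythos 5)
Your first step (the explicit computation of the vertical derivative of $\log\wh G$, the observation that the vertical perturbation $X\mapsto X+ve_i\Ind{[t,T]}$ changes the moving-average integral only on a Lebesgue-null set, and the resulting identity \eqref{examples general portfolio eq}) and your last step (substituting $G_{x_i,x_j}=\lambda^2\varphi_{x_i,x_j}$ and $G_{a_k}=(1-\lambda)\varphi_{x_k}$ into Theorem~\ref{MasterFormulaCadlag}) agree with the paper. The gap is in the middle step, which you yourself flag as the main obstacle. You argue that because the piecewise-constant approximations $X^{n,s-}$ of $\log S$ converge to $\log S$, the integrands $\wh\pi(s,X^{n,s-})$ in the functional Riemann sums \eqref{Itointcont} converge to $\pi(s)$, and \lq\lq consequently\rq\rq\ the functional It\^o integral reduces to the F\"ollmer integral \eqref{Foellmer integral Riemann sums}. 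This implication is not valid: pointwise, or even uniform, convergence of the integrands does not force the two sequences of It\^o-type Riemann sums to have the same limit, because the error $\sum_{s}\big(\wh\pi(s,X^{n,s-})-\pi(s)\big)\cdot\big(\log S(s')-\log S(s)\big)$ involves increments that are not absolutely summable; controlling it requires quantitative (H\"older-type) regularity of the integrator together with a Lipschitz-type vertical regularity of the integrand. That is precisely the content of the hypotheses of \cite[Theorem 3.2]{Ananova}, which you cite but do not verify, and which are \emph{not} available here, where $S$ is only assumed to lie in $QV^d_+$.

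The paper closes this gap by a different mechanism. Writing $\wh\pi=\nabla_X(\Gamma+H)$ with $\Gamma(t,X)=\log\wh G(t,\bbar\mu(\cdot,X))$ and setting $\gamma(x,a)=\log G(\nabla h(x),a)$, it applies the functional It\^o formula (Theorem~\ref{cvfcont}) to $\Gamma$ and the ordinary pathwise It\^o formula to $\gamma(\log S(t),\alpha(t))$, and checks that all non-integral terms in the two formulas coincide: $\Gamma(t,\log S)=\gamma(\log S(t),\alpha(t))$, $\partial_{ij}\Gamma(t,\log S)=\gamma_{x_i,x_j}(\log S(t),\alpha(t))$ by Lemma~\ref{Gamma second derivatives lemma}, and $\mathscr D_0\Gamma(t,\log S)=\frac{1-\lambda}{\lambda}\sum_i g_i(t)\alpha_i'(t)$ matches the $\ud\alpha$-term of the finite-dimensional formula. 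Since both integrals exist by the respective change-of-variable formulas, equality of all the remaining terms forces \eqref{functional ordinary pi Ito eq}, and hence \eqref{path portolio quadr var eq} and $V^\pi(T)=V^{\wh\pi}(T)$, without any extra regularity of $S$; the concluding remark of Section~\ref{Proofs section} emphasizes exactly this point. To repair your proof you would need either to adopt this comparison-of-It\^o-formulas argument or to add the Ananova--Cont regularity hypotheses to the proposition, which would strictly weaken the statement.
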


In the following examples, we analyze  empirically the situation for particular choices for $\varphi$. To this end, it will be convenient to use the shorthand notation
$$\widetilde\mu_i(t)=\lambda\mu_i(t)+(1-\lambda)\alpha_i(t)\qquad i=1,\dots, d.
$$

\begin{ex}[\bf{Geometric average}]\label{GeoMeanExample}
Consider $\varphi(x)=\prod_{i=1}^dx_i^{1/d}$, 
 which generates the portfolio 
\begin{align}
\pi_{i}(t) &= \left(1+ \dfrac{\lambda}{d \widetilde{\mu}_{i}(t)}  - \sum_{j = 1}^{d} \dfrac{\lambda \mu_{j}(t)}{d \widetilde{\mu}_{j}(t)}  \right) \mu_{i}(t),\quad i=1,\dots, d.  
\label{NL4-weights}
\end{align}
 By Proposition~\ref{Examples eq prop}, we have, with $\delta_{ij}$ denoting again the Kronecker delta,
\begin{align*}
\mathfrak{g}([0,T])&=\dfrac{\lambda^{2}}{2d}\sum_{i,j=1}^d  \int_0^T\left(\sum_{i=1}^{d} \dfrac{\delta_{ij}}{\left( \widetilde{\mu}_{i}(t) \right)^{2}}  -  \dfrac{1}{d\widetilde{\mu}_{i}(t)\widetilde{\mu}_{j}(t) } \right)\ud[\mu_i,\mu_j](t)\\
\mathfrak{h}([0,T])&=-\dfrac{1-\lambda}{d }\sum\limits_{i=1}^{d}\int_0^T\frac{\alpha_i'(t)}{  \widetilde{\mu}_{i}(t) }\ud t.
\end{align*}
In Figures~\ref{fig:NL4_1} and~\ref{fig:NL4_22}, we display the results of an empirical analysis of such a geometrically weighted portfolio with the parameters $\theta = 60$ days and  $\lambda = 0,7$. We used the stock data base from Reuters Datastream and  considered those 30  stocks that were the constituents of the DAX in May 2015. Our 30 time series represent daily closing prices for those stocks during the period January 31, 2005 -- May 15, 2015,  and we work along the finite partition formed by the corresponding time points. 
In Figure~\ref{fig:NL4_1} we see the relative performance  of the portfolio \eqref{NL4-weights} with respect to this stock index. 
In  Figure~\ref{fig:NL4_22} we see the decomposition of the curve(s) in the left-hand panel 
according to the master equation \eqref{master eq ex eq}. The blue curve is the change in the generating functional, while the red and the green ones are the respective drift terms.  Each curve shows the cumulative value of the  daily changes induced in the corresponding quantities by capital gains and losses. As can be seen, the cumulative second-order drift term  was the dominant part over the period, with a total contribution of about 15 percentage points to the relative return.
The second-order drift term was quite stable over the considered period, with an  exception during the period around the financial crisis of 2008. 

       \begin{minipage}{8.5cm}
          \begin{figure}[H]
               \includegraphics[width=8.5cm]{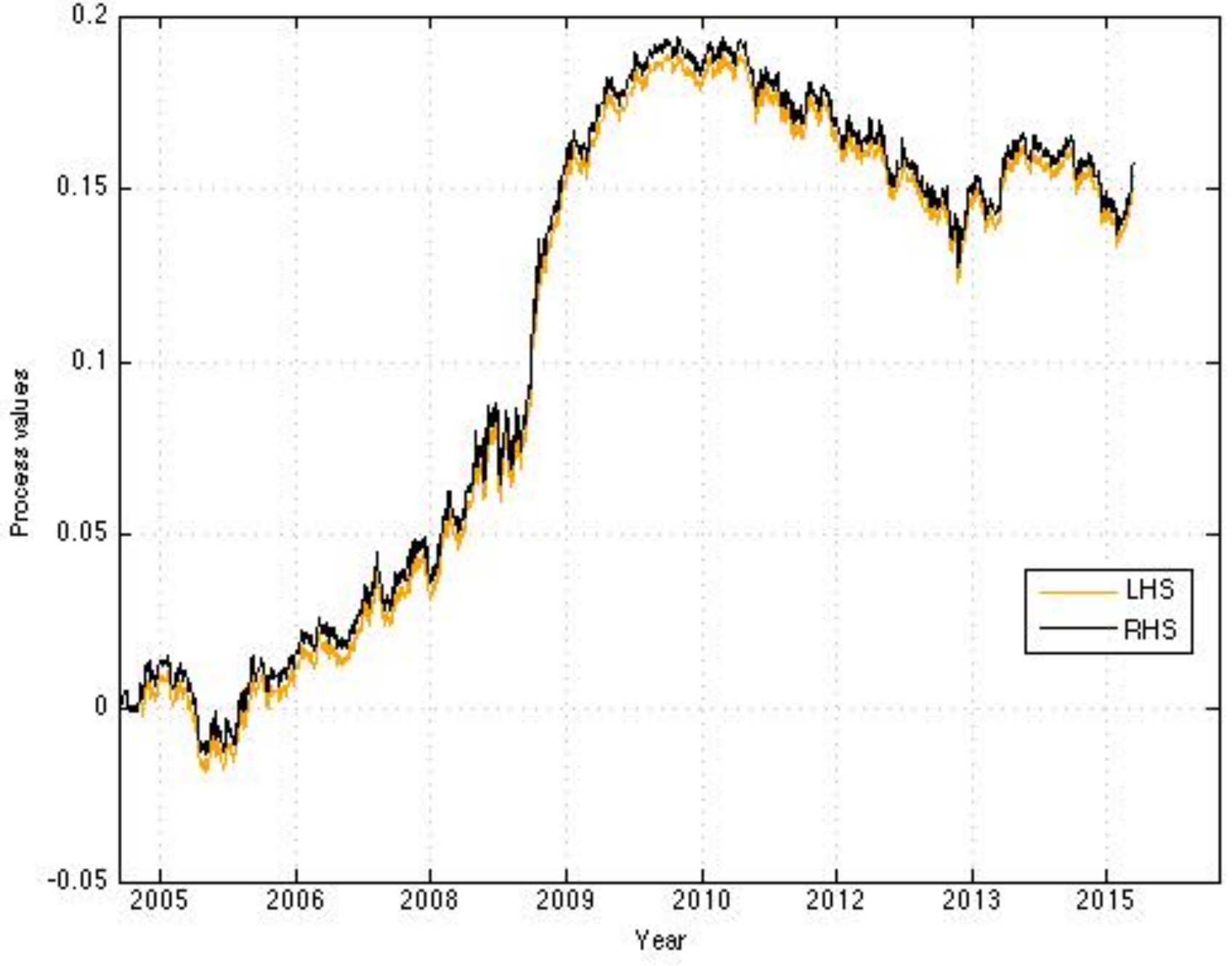} 
                \caption{LHS vs.~RHS of the master formula \eqref{master eq ex eq} for geometric average} 
                \label{fig:NL4_1} 
          \end{figure}
      \end{minipage}
      \hspace{0.05\linewidth}
      \begin{minipage}{8.5cm}
          \begin{figure}[H]
                \includegraphics[width=8.5cm]{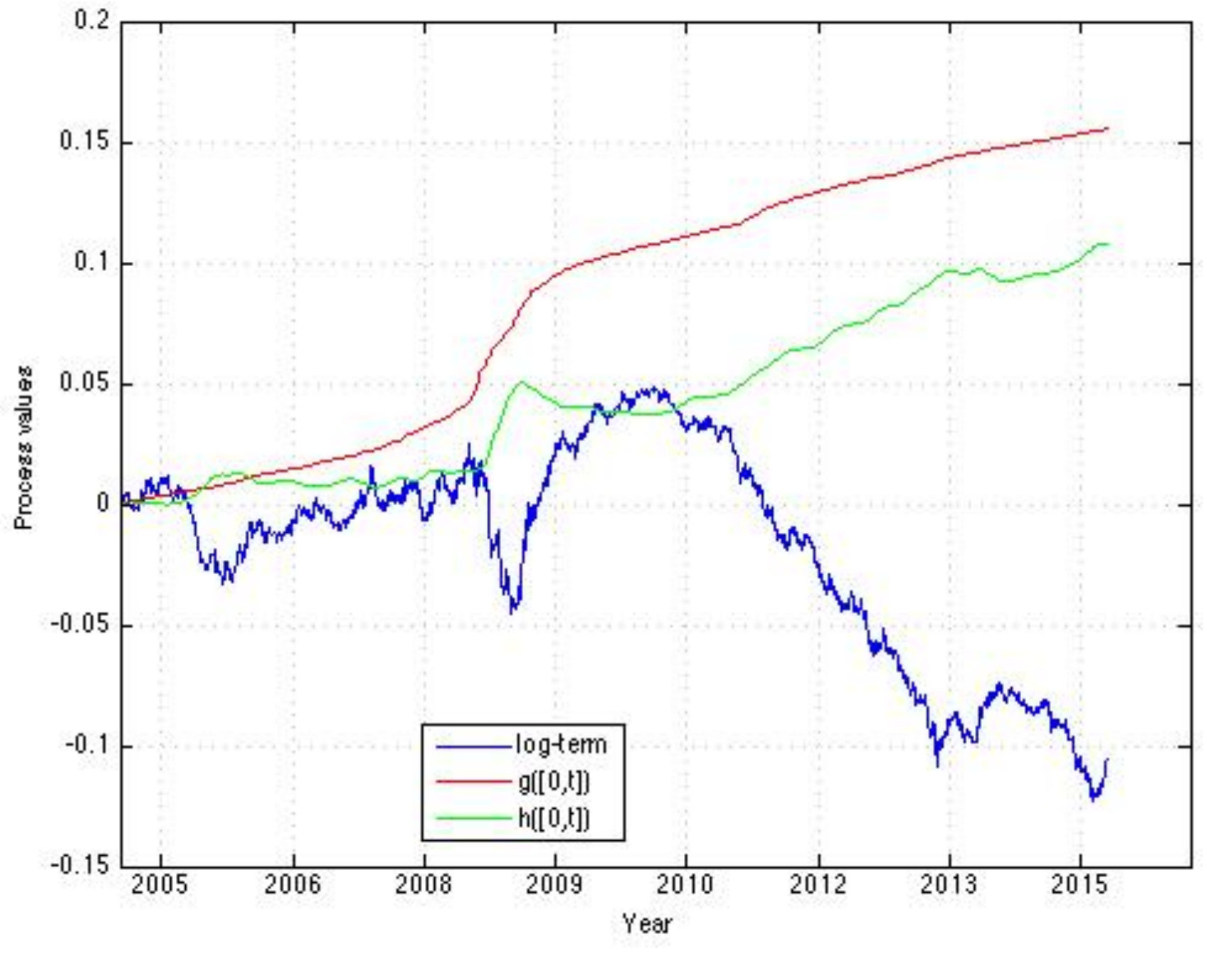} 
                \caption{Componentwise representation of the RHS of  \eqref{master eq ex eq} for geometric average} 
                \label{fig:NL4_22} 
          \end{figure}
      \end{minipage}
\end{ex}

\begin{ex}[\bf{Functional diversity weighting}]\label{DiversityWeightingExample}
Here we take $p\in(0,1)$ and
$\varphi(x)=\big(\sum_{i=1}^dx_i^p\big)^{1/p}$,
which generates the portfolio with weights
\begin{align}
\pi_{i}(t) &= \left(1+ \dfrac{\lambda \left(\widetilde{\mu}_{i}(t) \right)^{p-1} }{\sum_{k=1}^{d} \left(\widetilde{\mu}_{k}(t) \right)^{p}}  - \sum_{j = 1}^{d} \dfrac{\lambda \mu_{j}(t) \left(\widetilde{\mu}_{j}(t) \right)^{p-1}}{\sum_{k=1}^{d} \left(\widetilde{\mu}_{k}(t) \right)^{p}}  \right)\mu_{i}(t),\qquad i=1,\dots, d.   
\label{Dalpha_1-weights}\end{align}
By Proposition~\ref{Examples eq prop}, we have
\begin{align*}
\mathfrak{g}([0,T])&=\dfrac{\lambda^2 (1-p)}{2} \sum_{i,j=1}^d\int_0^T\bigg( \dfrac{ \delta_{ij}\left(\widetilde{\mu}_i(t)\right)^{p-2}}{\sum_{k=1}^{d} \left(\widetilde{\mu}_k(t)\right)^{p}} - \dfrac{  \left(\widetilde{\mu}_{i}(t) \right)^{p-1} \left(\widetilde{\mu}_{j}(t) \right)^{p-1}}{\left(\sum_{k=1}^{d} \left(\widetilde{\mu}_{k}(t) \right)^{p}\right)^{2}}  \bigg) \ud[\mu_i,\mu_j](t)\\
\mathfrak{h}([0,T])&=-\int_0^T \dfrac{1-\lambda}{ \sum_{k=1}^{d} \left(\widetilde{\mu}_{k}(t)\right)^{p}} \sum\limits_{i=1}^{d}\left( \widetilde{\mu}_{i}(t) \right)^{p-1}\alpha_i'(t)\ud t  .
\end{align*}
For our empirical analysis for the diversity-weighted portfolio, we used again Reuters Datastream to obtain our data base;  we considered 207 fixed stocks that were among the constituents of the S\&P 500 index in May 2015. Our 207 time series represent monthly average prices for the period February 2, 1973 -- April 2, 2015. 
The results of a simulation of the portfolio  \eqref{Dalpha_1-weights} using the parameters $\theta = 12$ months, $\lambda = 0,6$, and $p = 0,1$ are presented  below. Figure~\ref{fig:Dalpha_1} shows the relative performance of this portfolio with respect to this filtered index, and Figure~\ref{fig:Dalpha_22} shows its decomposition  in the three components according to the master equation \eqref{master eq ex eq}. Each curve   represents the cumulative value of the monthly changes induced in the corresponding quantities by capital gains and losses, but in contrast to Example~\ref{GeoMeanExample}, it is now the cumulative change in the generating functional  that yields the dominant part over the period, with a total contribution of about 70 percentage points to the relative return. The second-order drift term was quite stable over the period with a  total contribution of about 30 percentage points, whereas the horizontal drift term has a negative contribution.

\noindent \begin{minipage}{\linewidth}
      \centering
      \begin{minipage}{0.45\linewidth}
          \begin{figure}[H]
                \centering 
                 \includegraphics[width=8.5cm]{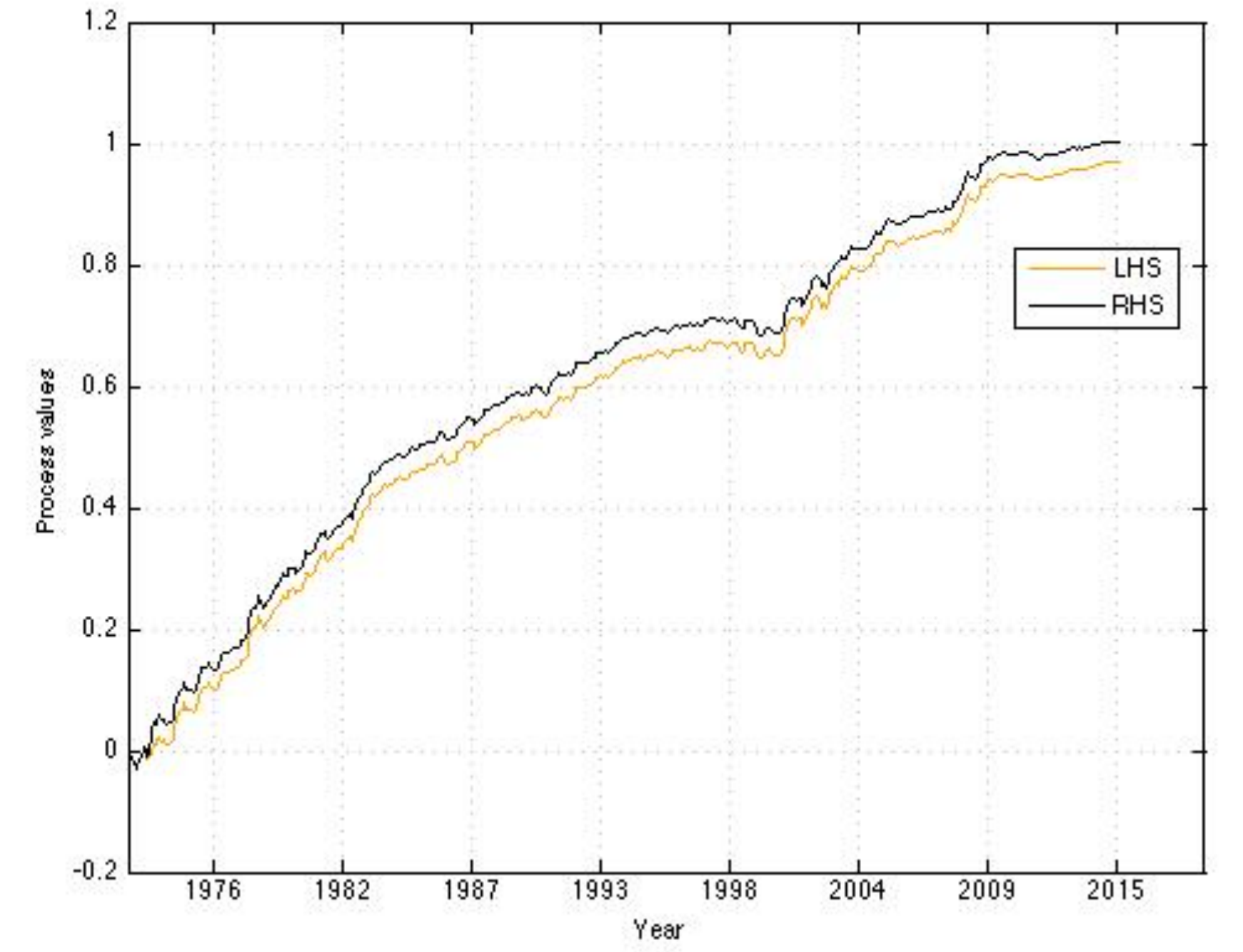} 
                \caption{LHS vs. RHS of the master formula  \eqref{master eq ex eq} for diversity weighting} 
                \label{fig:Dalpha_1} 
          \end{figure}
      \end{minipage}
      \hspace{0.05\linewidth}
      \begin{minipage}{0.45\linewidth}
          \begin{figure}[H]
          	    \centering 
                \includegraphics[width=8.5cm]{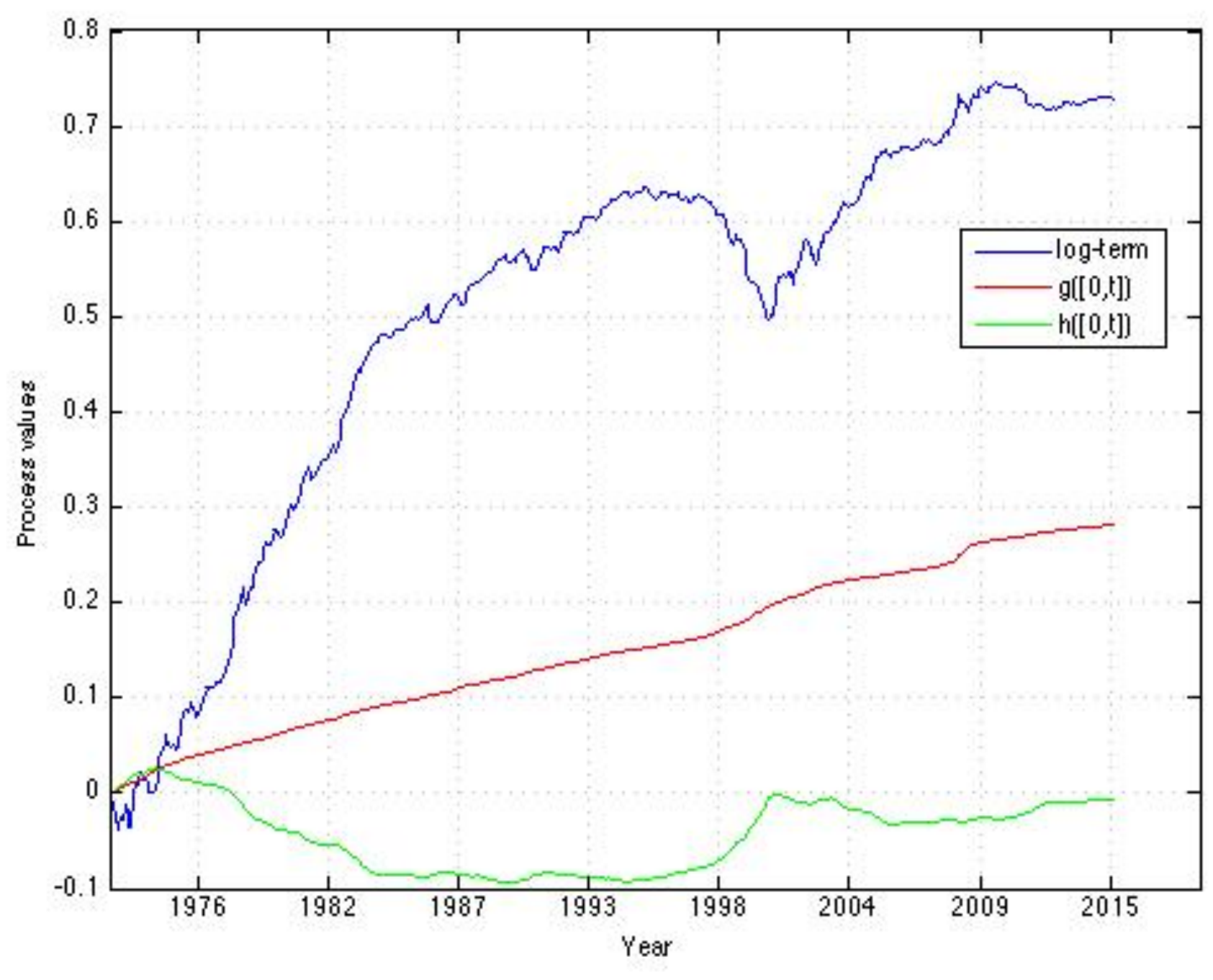} 
                \caption{Componentwise representation of the RHS of  \eqref{master eq ex eq} for diversity weighting} 
                \label{fig:Dalpha_22} 
          \end{figure}
      \end{minipage}
  \end{minipage}
\end{ex}

\begin{ex}[\bf{Functional entropy-weighting}]\label{Entro} 
Here we take $\varphi(x)=-\sum_{i=1}^dx_i\log x_i$, 
which generates the portfolio with weights
\begin{align}
\pi_{i}(t) &= \left(1- \dfrac{\lambda \log \left(\widetilde{\mu}_{i}(t) \right)}{\varphi( \widetilde{\mu}(t))}+ \sum_{j = 1}^{d} \dfrac{\lambda \mu_{j}(t) \log \left(\widetilde{\mu}_{j}(t) \right)}{\varphi( \widetilde{\mu}(t) )}  \right) \mu_{i}(t),   
\label{minus-Halpha_1-weights}
\end{align}
\noindent and associated drift rates
\begin{align*}
\mathfrak{g}([0,T])&=-\lambda^2 \sum_{i=1}^{d} \int_0^T \dfrac{1}{ \varphi(\wt\mu(t))\widetilde{\mu}_{i}(t)} \ud[\mu_i,\mu_i](t),\\
\mathfrak{h}([0,T])&=(1-\lambda)\sum_{i=1}^d\int_0^T\frac{1+\log\wt\mu_i(t)}{\varphi(\wt\mu(t))}\alpha_i'(t)\ud t.
\end{align*}
Using the same data set as in Example~\ref{DiversityWeightingExample}, we conducted an empirical analysis of the portfolio \eqref{minus-Halpha_1-weights} with respect to the filtered index taking the parameters $\theta = 6$ months and  $\lambda = 0,9,$ which is presented in Figure~\ref{fig:Halpha_1} and Figure~\ref{fig:Halpha_22}, respectively.
Note that compared to the entropy-weighted portfolio in the classical Fernholz' setting  (see \cite[Example11.1]{FK08})
 the additional drift term $\mathfrak{h}([0,T])$  can be either positive or negative.
In this sense,  the horizontal drift term $\mathfrak{h}([0,T])$  may be regarded as quantifying the {trade-off} between outperforming the market and the greater flexibility within the more general framework.
This is also supported by real market data, as can be seen in Figure~\ref{fig:Halpha_1}  and Figure~\ref{fig:Halpha_22}. Indeed,   the cumulative second-order drift term  is continually increasing. Moreover,  the horizontal drift term does not seem to have a large influence on the relative performance of the entropy-weighted portfolio, with a total contribution of less than 1 percentage point. Thus, entropy-weighting should significantly outperform the market on the considered time interval, which is confirmed in the case study of Figure~\ref{fig:Halpha_1}.

\noindent \begin{minipage}{\linewidth}
      \centering
      \begin{minipage}{0.45\linewidth}
          \begin{figure}[H]
                \centering 
              \includegraphics[width=8.5cm]{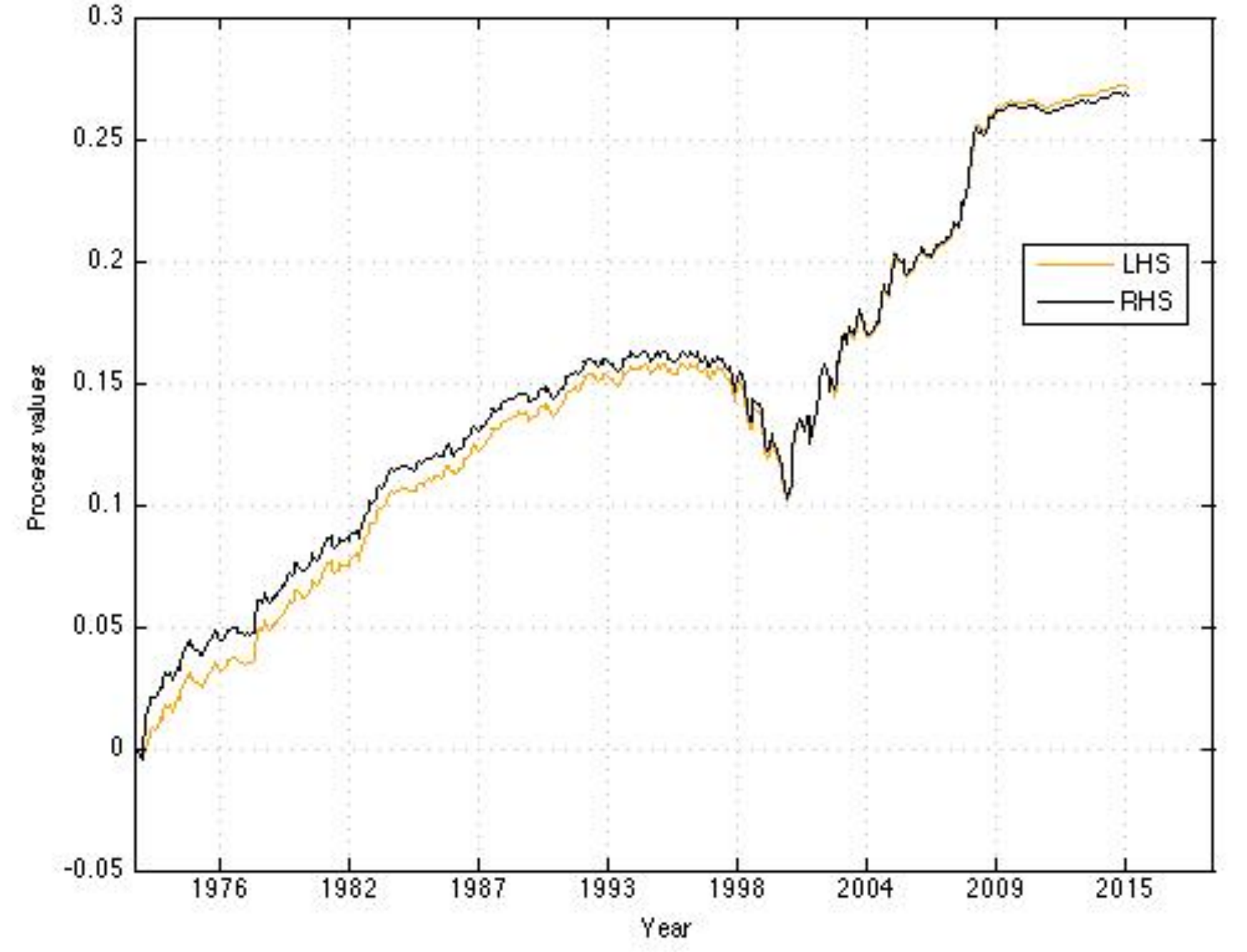} 
                \caption{LHS vs. RHS of the master formula  \eqref{master eq ex eq} for entropy weighting} 
                \label{fig:Halpha_1} 
          \end{figure}
      \end{minipage}
      \hspace{0.05\linewidth}
      \begin{minipage}{0.45\linewidth}
          \begin{figure}[H]
          	    \centering 
              \includegraphics[width=8.5cm]{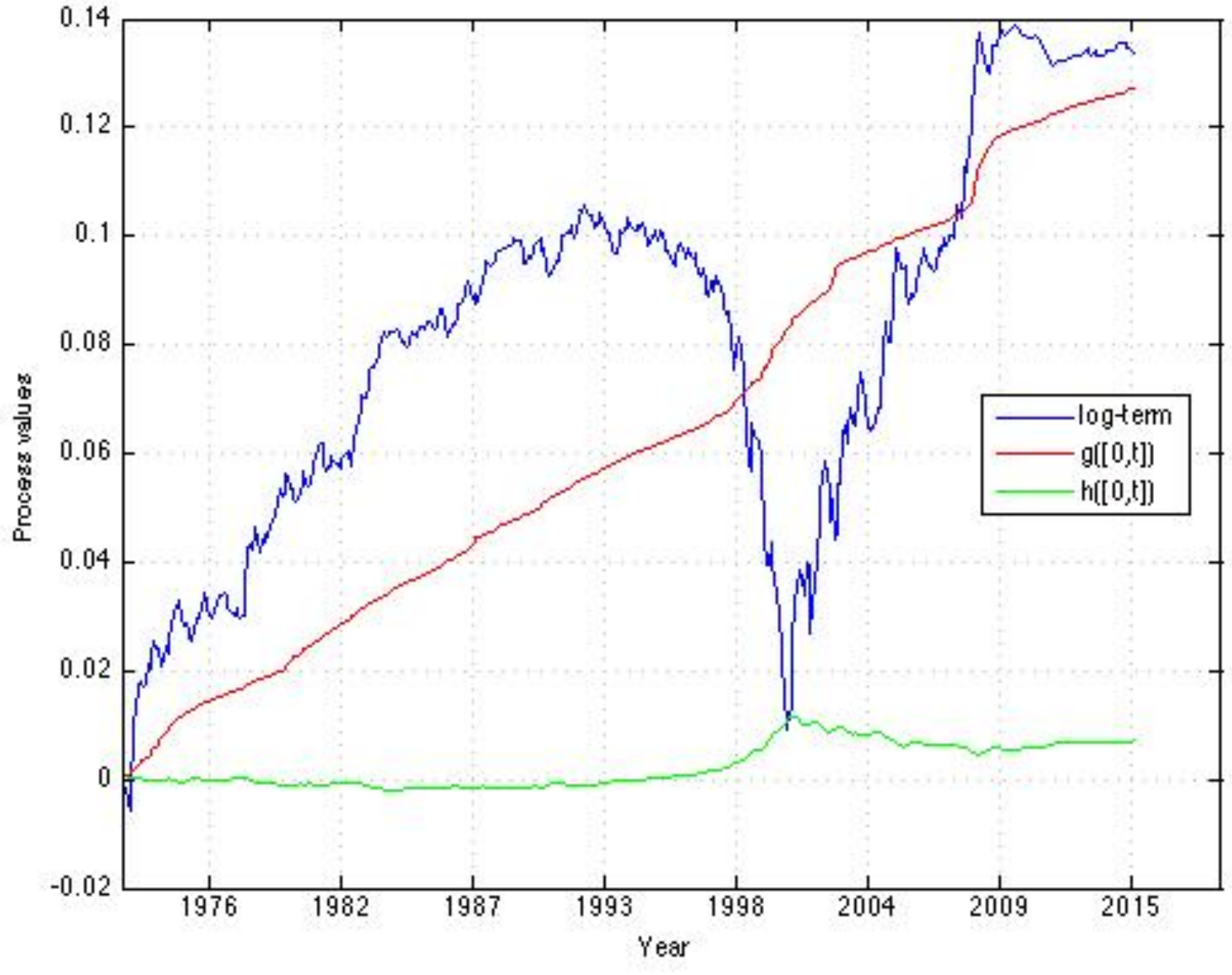} 
                \caption{Componentwise representation of the RHS of  \eqref{master eq ex eq} for entropy weighting} 
                \label{fig:Halpha_22} 
          \end{figure}
      \end{minipage}
  \end{minipage}
\vspace*{10pt}\\

\ignore{\red{\sout{ Note also that the above example is valid in a very general context, since we have not imposed in the above discussion \emph{any} assumption on the volatility structure of the market model   beyond the absolutely minimal condition --  that the stock price vector $S$ should admit continuous quadratic variation.}}}
 \end{ex}

 \section{Proofs}\label{Proofs section}
\subsection{ Proofs of the lemmas from Section~\ref{non-pathdep setting section}} \label{Sec3}

 \begin{proof}[Proof of Lemma~\ref{log lemma}] We first  let $Z:[0,\infty)\to\R_+$ be any continuous path. It follows from \cite[Proposition 2.2.10]{Sondermann} that $Z\in QV^1_+$ if and only $\log Z\in QV^1$ and that, in this case,
\begin{equation}\label{simple log identity}
[\log Z](t)=\int_0^t\frac1{Z(s)^2}\ud [Z](s).
\end{equation}
Now suppose that  $S\in QV^d_+$. By the polarization identity \eqref{polarization identity}, we may conclude that $\log S\in QV^d$, if we can show that $f(S_i(t),S_j(t))$ belongs to $QV^1$ for all $i,j$, where $f(x_1,x_2):=\log x_1+\log x_2$. The pathwise It\^o formula from~\cite{Ito_F}
applied to $f(S_i(t),S_j(t))$ yields that
\begin{align*}
f(S_i(t),S_j(t))&=f(S_i(0),S_j(0))+\int_0^t\nabla f(S_i(s),S_j(s))\ud{S_i(s)\choose S_j(s)}\\&\quad+\frac12\sum_{k,\ell=1}^2\int_0^t f_{x_k,x_\ell}(S_i(s),S_j(s))\ud[S_k,S_\ell](s).
\end{align*}
 Remark 8 and Proposition 12 from~\cite{Schied13} now imply that $f(S_i,S_j)\in QV^1$ with quadratic variation
\begin{align*}[f(S_i,S_j)](t)&=\sum_{k,\ell=1}^2\int_0^tf_{x_k}(S_i(s),S_j(s))f_{x_\ell}(S_i(s),S_j(s))\ud [S_k,S_\ell](s)\\
&=\int_0^t\frac1{S_i(s)^2}\ud[S_i](s)+\int_0^t\frac1{S_j(s)^2}\ud[S_j](s)+2\int_0^t\frac1{S_i(s)S_j(s)}\ud [S_i,S_j](s).
\end{align*}
Therefore, $\log S\in QV^d$ and \eqref{log covariation id}
 now follow from \eqref{polarization identity} and \eqref{simple log identity}. That $\log S\in QV^d$ implies $S\in QV^d_+$ follows by an analogous argument in which the logarithm is replaced by the exponential function. 
\end{proof}

\begin{proof}[Proof of Lemma~\ref{pi lemma}]Let $X:=\log S$ and write $\exp(X)$ for the path with components $e^{X_1},\dots, e^{X_d}$. That is, $\exp(X)=S$. Now suppose that $\pi$ is an admissible integrand for $X$ and $T>0$ is given. Then there exist  $n\in\bN$,  open sets $U\subset\mathbb R^d$ and   $V\subset\mathbb R^n$,  a function $f\in C^{2,1}(U,V)$, and  $ A\in\CBV([0,T],V)$  such that $X(t)\in U$  and  $\xi(t)=\nabla_{  x}f(X(t),A(t))$ for $0\le t\le T$. We let $g(z,a):=f(\log z,a)$. Then $g\in C^{2,1}(\widetilde U,V)$ for the open set $\widetilde U=\{\exp(x)\,|\,x\in U\}$ and  $g_{z_i}(z,a)=f_{x_i}(\log z,a)/z_i$. Thus, $g_{z_i}(S(t),A(t))=\pi_i(t)/S_i(t)$, $i=1,\dots, d$, are the components of an admissible integrand for $S$. The converse assertion follows analogously. To prove the formula \eqref{log S change eq}, we use F\"ollmer's pathwise It\^o formula,  which yields 
\begin{equation}\label{log S Ito eq}
\ud \log S_i(t)=\frac1{S_i(t)}\ud S_i(t)-\frac1{2 S_i(t)^2}\ud[S_i](t).
\end{equation}
Thus, the associativity rule   \cite[Theorem 13]{Schied13} implies \eqref{log S change eq}.
\end{proof}

\begin{proof}[Proof of Lemma~\ref{portfolio value lemma}] It is clear from Lemma~\ref{pi lemma} that the It\^o integral in \eqref{ItoDiffEqLSGcadlag} is well defined. Moreover,   the two rightmost integrals in \eqref{ItoDiffEqLSGcadlag} exist as Riemann--Stieltjes integrals, because $\pi(t)/S(t)$ as admissible integrand for $S$ is in particular a continuous function of $t$.
Therefore, 
$V^\pi$ is well defined and indeed the wealth of $(\xi,\eta)$, since $V^\pi(t)=\xi(t)\cdot S(t)+\eta(t)B(t)$ by \eqref{handelsstrXIETA}. Letting $Z(t):=\int_0^t\frac{\pi(s)}{S(s)}\ud S(s)$ and $R(t):=\int_0^t(1-\sum_{i=1}^d\pi(s))r(s)\ud s$, \eqref{ItoDiffEqLSGcadlag}  can be re-written as
$V^\pi(t)=\exp\big(Z(t)-\frac12[Z](t)+R(t)\big)$. Applying the pathwise It\^o formula thus gives
$\ud V^\pi(t)=V^\pi(t)\ud Z(t)+V^\pi(t)\ud R(t)$. The associativity rule \cite[Theorem 13]{Schied13}  thus implies that $V^\pi(t)\pi(t)/S(t)=\xi(t)$ is an admissible integrand for $S$ and that $V^\pi(t)\ud Z(t)=\xi(t)\ud S(t)$. Moreover, the identity $V^\pi(t)\ud R(t)=\eta(t)\ud B(t)$ follows from the associativity of the Riemann--Stieltjes integral (see, e.g., \cite[Theorem I.6b]{Widder}).
\end{proof}

\begin{proof}[Proof of Lemma~\ref{market portfolio lemma}] 
For $x=(x_1,\dots, x_d)$ let $h(x):=\log (e^{x_1}+\cdots+e^{x_d})$. Then $h_{x_i}(x)=e^{x_i-h(x)}$ and so $\nabla h(\log S(t))=\mu(t)$. Thus, $\mu$ is an admissible integrand for $\log S$. Moreover, the fact that $\mu_1(t)+\cdots+\mu_d(t)=1$ is obvious. Therefore, $\mu$ is a portfolio in the sense of Definition~\ref{portfolio}.

To prove the formula for $V^\mu$, let $g(x_1,\dots, x_d):=\log(x_1+\cdots+x_d)$ for $x_i>0$.  The pathwise It\^o formula  yields
$$g(S(t))-g(S(0))=\int_0^t\frac{\mu(s)}{S(s)}\ud S(s)-\dfrac{1}{2} \sum\limits_{i,j = 1}^{d} \int_0^t \dfrac{\mu_{i}(s) \mu_{j}(s)}{S_{i}(s)S_{j}(s)} \,\mathrm{d} [ S_{i}, S_{j} ] (s).
$$
By Lemma~\ref{portfolio value lemma}, the right-hand side is equal to $\log V^\mu(t)-\log V^\mu(0)$, which proves the claim.
\end{proof}

\begin{proof}[Proof of Lemma~\ref{G generated portfolio lemma}]  We clearly have $\pi_1(t)+\cdots+\pi_d(t)=1$. We  show next that $\pi$ is an admissible integrand for $\log S$ and, hence, a portfolio for $S$. To this end, let $h$ be as in the proof of Lemma~\ref{market portfolio lemma}, so that $\mu(t)=\nabla h(X(t))$, where again $X(t)=\log S(t)$. Observe that $\nabla_xh(x)\in\Delta^d_+$ for all $x\in\mathbb R^d$. Therefore, we may define
$$\gamma (x,a):=\log G(\nabla h(x),a),\qquad x\in\mathbb R^d,\, a\in W.
$$
Then
$$\gamma _{x_i}(x,a)=\sum_{j=1}^d\Big(\frac{\partial}{\partial{x_j}}\log G\Big)(\nabla h(x),a)h_{x_i,x_j}(x).
$$
Since
$$h_{x_i,x_j}(X(t))=\delta_{ij}e^{{X_i(t)}-h(X(t))}-e^{X_i(t)+X_j(t)-2h(X(t))}=\delta_{ij}\mu_i(t)-\mu_i(t)\mu_j(t),$$
we get that
\begin{equation}\label{widetilde pi grad identity}
\widetilde\pi(t):=\gamma _{x_i}(X(t),A(t))=\mu_i (t)\frac{\partial}{\partial{x_i}} \log G(\mu(t),A(t))  - \sum_{j = 1}^{d} \mu_i(t)\mu_{j}(t) \frac{\partial}{\partial{x_j}}\log G(\mu(t), A(t))
\end{equation}
 is an admissible integrand for $\log S$. With Lemma~\ref{market portfolio lemma} we can conclude now that $\pi=\mu+\widetilde\pi$ is an admissible integrand for $\log S$.

Now we prove that $\pi$ is an admissible integrand for $\log\mu$. To this end, observe first that $h(\log\mu(t))=\log\sum_{i}\mu_i(t)=0$. Hence, $h_{x_i}(\log\mu(t))=e^{\log\mu_i(t)-h(\log\mu(t))}=\mu_i(t)$, and it follows that  $\nabla h(\log S(t))=\mu(t)=\nabla h(\log \mu(t))$. Therefore, the identity \eqref{widetilde pi grad identity} also holds for $X:=\log\mu$, and it follows that $\widetilde\pi$ is an admissible integrand also for $\log\mu$. 
Hence, it only remains to  
prove that $\mu$ is an admissible integrand for $\log\mu$. To this end, let $f(x):=e^{h(x)}$ and note that $f_{x_i}(x)=e^{x_i}$. Thus, $\nabla f(\log\mu(t))=\mu(t)$, and so $\mu$ is indeed an admissible integrand for $\log\mu$.  \end{proof}

\subsection{Pathwise portfolio dynamics without path dependence}

In this section, we analyze the dynamics of portfolios and portfolio values. In standard stochastic portfolio theory, many of these results are well known (see, e.g.,~\cite{FK08}), but occasionally additional care is needed  in our pathwise setting. The  results in this section will serve as preparation for the proof of the functional master formula, Theorem~\ref{MasterFormulaCadlag}, but they are also of independent interest.
 
 Throughout this section, $\pi$ will be a portfolio for $S\in QV^d_+$ in the sense of Definition~\ref{portfolio} and $V^\pi$ the corresponding portfolio value with unit initial wealth  as given in  \eqref{ItoDiffEqLSGcadlag}. Due to  the condition that $\pi^1(t)+\cdots+\pi^d(t)=1$, formula  \eqref{ItoDiffEqLSGcadlag} simplifies to
 \begin{equation}\label{simple portfolio value eq}
 V^{\pi} (t)=\exp \left(   \int_0^t \dfrac{\pi(s)}{S(s)} \,\mathrm{d} S(s) - \dfrac{1}{2} \sum\limits_{i,j = 1}^{d} \int_0^t \dfrac{\pi_{i}(s) \pi_{j}(s)}{S_{i}(s)S_{j}(s)} \,\mathrm{d} [ S_{i}, S_{j} ] (s)\right).
 \end{equation}
 In our model-free version of portfolio theory, we do not wish to make assumptions on the structure of the covariations $[S_i, S_j]$ apart from their existence  \eqref{xin}. In particular, we do not assume that $[S_i, S_j](t)$ is absolutely continuous in $t$.
Growth rates and covariances, which in~\cite{FK08} can be taken as functions, therefore need to be modeled as measures.

 \begin{definition}
The \emph{covariance}  of the stocks in the market is described by the  positive semidefinite matrix-valued Radon measure $a = (a_{ij})_{1\leq i,j \leq d}$   
defined as
$$
 a_{ij}(\mathrm{d}t) := \mathrm{d} [ \log S_{i}, \log S_{j} ] (t)= \dfrac{1}{S_{i}(t) S_{j}(t)} \,\mathrm{d} [  S_{i},  S_{j} ] (t), 
 \quad i,j = 1,\ldots,d.
$$
  The \emph{excess growth rate} of a portfolio $\pi$ is defined as the signed Radon measure 
$$
  \gamma_{\pi}^{*}(\mathrm{d}t) := \dfrac{1}{2} \left( \sum\limits_{i=1}^{d} \pi_{i}(t) a_{ii}(\mathrm{d}t) -  \sum\limits_{i,j=1}^{d} \pi_{i}(t) \pi_{j}(t) a_{ij}(\mathrm{d}t) \right). 
$$
  For any portfolio $\pi$, 
we define the \emph{covariances}  of the individual stocks \emph{relative to the portfolio} $\pi$ as follows for $i,j = 1,\ldots,d$,
\begin{align}
\tau_{ij}^{\pi} (\mathrm{d}t) :&= (\pi(t) - e_{i})^\top a(\mathrm{d}t)(\pi(t) - e_{j}). \label{taupi1} 
 \end{align}
\end{definition}

 \begin{lemma}\label{Vpiloglemma} We have
$$
 \log V^{\pi}(t) =\int_0^t \pi(s) \,\mathrm{d}  \log S(s) + \gamma^{*}_{\pi}([0,t]) .$$
\end{lemma}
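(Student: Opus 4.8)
The plan is to argue by direct computation, starting from the defining formula \eqref{simple portfolio value eq} for the portfolio value and simplifying its logarithm by means of the change-of-variables identity already established in Lemma~\ref{pi lemma}.

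First I would take the logarithm of \eqref{simple portfolio value eq}, obtaining $\log V^\pi(t)$ as the sum of the F\"ollmer integral $\int_0^t \frac{\pi(s)}{S(s)}\,\mathrm{d}S(s)$ and the negative half double-sum $-\frac12\sum_{i,j}\int_0^t \frac{\pi_i\pi_j}{S_iS_j}\,\mathrm{d}[S_i,S_j]$. Since $\pi$ is a portfolio, it is in particular an admissible integrand for $\log S$, so Lemma~\ref{pi lemma} applies: it guarantees that $\pi/S$ is an admissible integrand for $S$ and supplies the identity \eqref{log S change eq}. I would use the latter to replace $\int_0^t \frac{\pi(s)}{S(s)}\,\mathrm{d}S(s)$ by $\int_0^t \pi(s)\,\mathrm{d}\log S(s)$ plus the diagonal correction $\frac12\sum_i \int_0^t \frac{\pi_i(s)}{S_i(s)^2}\,\mathrm{d}[S_i](s)$.

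Second, I would substitute this into the log-expression and collect the two remaining drift terms: the diagonal single sum just produced and the full double sum carried over (with a minus sign) from \eqref{simple portfolio value eq}. Rewriting both in terms of the covariance measure $a_{ij}(\mathrm{d}t)=\frac{1}{S_i(t)S_j(t)}\,\mathrm{d}[S_i,S_j](t)$, so that in particular $\frac{1}{S_i^2}\,\mathrm{d}[S_i]=a_{ii}(\mathrm{d}t)$, these two terms become $\frac12\big(\sum_i \pi_i\,a_{ii}-\sum_{i,j}\pi_i\pi_j\,a_{ij}\big)$ integrated over $[0,t]$, which is precisely $\gamma_\pi^*([0,t])$ by the definition of the excess growth rate. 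This yields the asserted identity.

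I do not expect any serious analytic obstacle: all integrals in question exist (the F\"ollmer integral by admissibility of $\pi/S$, the drift integrals as Riemann--Stieltjes integrals by continuity of $\pi/S$ and of the covariation paths $[S_i,S_j]$), so the argument is purely algebraic once \eqref{log S change eq} is in hand. The one point demanding care is the index bookkeeping: one must verify that the single-sum diagonal correction furnished by Lemma~\ref{pi lemma} combines with the double-sum term from \eqref{simple portfolio value eq} into the exact asymmetric expression defining $\gamma_\pi^*$, rather than, say, cancelling the full double sum.
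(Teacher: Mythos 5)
Your proposal is correct and matches the paper's own proof: both take the logarithm of \eqref{simple portfolio value eq}, convert $\int_0^t \frac{\pi(s)}{S(s)}\,\mathrm{d}S(s)$ into $\int_0^t \pi(s)\,\mathrm{d}\log S(s)$ plus the diagonal correction via \eqref{log S change eq}, and then identify the remaining drift terms with $\gamma^*_\pi([0,t])$ using Lemma~\ref{log lemma} (i.e.\ $a_{ij}(\mathrm{d}t)=\mathrm{d}[\log S_i,\log S_j](t)$) and the associativity of the Stieltjes and F\"ollmer integrals. No gaps.
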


\begin{proof}  Using \eqref{log S Ito eq}, Lemma~\ref{log lemma}, the fact that $V^\pi(0)=1$, and the associativity of the Stieltjes and F\"ollmer integrals  from~\cite[Theorem I.6 b]{Widder} and \cite[Theorem 13]{Schied13}, respectively, we get
\begin{align}
  \log V^\pi(t) &= \int_0^t\pi(s)\ud  \log S(s) +\frac{1}{2}\sum_{i=1}^d\int_0^t \pi_i(s)\ud[\log S_i ](s)-\frac{1}{2}\sum_{i,j=1}^d\int_0^t\pi_i(s)\pi_j(s)\ud [\log S_i, \log S_j ](s),
\notag\end{align}
which implies the assertion via the definition of $\gamma^\ast_\pi$. \end{proof}

Applying the preceding lemma to the market portfolio yields the following results.

\begin{lemma}\label{market portfolio dynamics lemma}We have the following formulas for the dynamics of the market weights $\mu_i$.
\begin{enumerate}
\item $\displaystyle
\ud  \log \mu_i(t) = \big(e_i-\mu(t)\big)\ud  \log S(t)-\gamma^{*}_{\mu}(\mathrm{d}t) .
$
\item $\displaystyle
\tau_{ij}^\mu([0,t])=  [\log \mu_i, \log \mu_j ] (t)$ and $\ud[\mu_i,\mu_j](t)=\mu_i(t)\mu_j(t)\tau_{ij}^\mu(\ud t)$.
\item 
$\displaystyle
\ud \mu_i(t)=\mu_i(t)\big(e_i-\mu(t)\big)\ud \log S(t)-\mu_i(t)\gamma^{*}_{\mu}(\ud t) +\frac{1}{2}\mu_i(t) \tau_{ii}^\mu(\ud t)$. \end{enumerate}
\end{lemma}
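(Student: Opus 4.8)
The plan is to apply Lemma~\ref{Vpiloglemma} to the market portfolio $\mu$ and then differentiate through the defining relations, exploiting the special structure $\pi=\mu$ together with the identities already established. Throughout I write $X=\log S$ and use $e_i$ for the $i$-th standard basis vector.

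For part (a), I would start from Lemma~\ref{Vpiloglemma} specialized to $\pi=\mu$, which gives $\log V^\mu(t)=\int_0^t\mu(s)\ud\log S(s)+\gamma^*_\mu([0,t])$. On the other hand, Lemma~\ref{market portfolio lemma} gives the explicit form $V^\mu(t)=(S_1(t)+\cdots+S_d(t))/(S_1(0)+\cdots+S_d(0))$, so $\log V^\mu(t)=\log\sum_i S_i(t)-\log\sum_i S_i(0)=h(X(t))-h(X(0))$ with $h$ as in the proof of Lemma~\ref{market portfolio lemma}. Since $\mu_i(t)=S_i(t)/\sum_k S_k(t)=e^{X_i(t)}/\sum_k e^{X_k(t)}=\bbar\mu_i(t,X)$, I have $\log\mu_i(t)=X_i(t)-h(X(t))$. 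Applying the associativity/linearity of the F\"ollmer integral (in the form of \cite[Theorem 13]{Schied13}) to $\ud\log\mu_i(t)=\ud X_i(t)-\ud h(X(t))=\big(e_i\cdot\ud X(t)\big)-\ud\log V^\mu(t)$ and substituting the expression for $\ud\log V^\mu$ from Lemma~\ref{Vpiloglemma} yields $\ud\log\mu_i(t)=e_i\cdot\ud\log S(t)-\mu(t)\ud\log S(t)-\gamma^*_\mu(\ud t)=(e_i-\mu(t))\ud\log S(t)-\gamma^*_\mu(\ud t)$, which is (a).

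For part (b), the quadratic variation identity $\tau^\mu_{ij}([0,t])=[\log\mu_i,\log\mu_j](t)$ should follow from (a): since $\log\mu_i$ is expressed as the F\"ollmer integral $\int_0^t(e_i-\mu(s))\ud\log S(s)$ plus the finite-variation term $-\gamma^*_\mu([0,t])$, its covariation with $\log\mu_j$ is governed solely by the integral parts (the bounded-variation term contributes nothing to covariation). Computing the covariation of these two It\^o integrals against $\log S$ gives $\int_0^t(e_i-\mu(s))^\top a(\ud s)(e_j-\mu(s))$, which is precisely $\tau^\mu_{ij}(\ud t)$ integrated by definition \eqref{taupi1}. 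For this step I would invoke the covariation rule for F\"ollmer integrals (Remark~8 and Proposition~12 of \cite{Schied13}, as used in the proof of Lemma~\ref{log lemma}). The second identity $\ud[\mu_i,\mu_j](t)=\mu_i(t)\mu_j(t)\tau^\mu_{ij}(\ud t)$ then follows by combining this with Lemma~\ref{log lemma}, namely $\ud[\log\mu_i,\log\mu_j](t)=\frac1{\mu_i(t)\mu_j(t)}\ud[\mu_i,\mu_j](t)$, and rearranging.

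For part (c), I would apply the pathwise It\^o formula to $\mu_i(t)=e^{\log\mu_i(t)}=F(\log\mu(t))$, or more directly differentiate $\mu_i=\exp(\log\mu_i)$ using $\ud\mu_i(t)=\mu_i(t)\ud\log\mu_i(t)+\tfrac12\mu_i(t)\ud[\log\mu_i](t)$. Substituting (a) for $\ud\log\mu_i$ and (b) for $\ud[\log\mu_i]=[\log\mu_i,\log\mu_i]$, which equals $\tau^\mu_{ii}(\ud t)$, gives $\ud\mu_i(t)=\mu_i(t)(e_i-\mu(t))\ud\log S(t)-\mu_i(t)\gamma^*_\mu(\ud t)+\tfrac12\mu_i(t)\tau^\mu_{ii}(\ud t)$, as claimed. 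The main obstacle I anticipate is the covariation computation in part (b): I must argue carefully that the finite-variation drift term $\gamma^*_\mu$ does not contribute to the pathwise covariation and that the covariation of the two F\"ollmer integrals reproduces the bilinear form $(e_i-\mu)^\top a(e_j-\mu)$ defining $\tau^\mu_{ij}$. This requires the pathwise covariation rule for It\^o integrals rather than a naive It\^o-isometry argument, since we are working strictly path by path with no probability measure available.
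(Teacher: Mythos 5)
Your proposal is correct and follows essentially the same route as the paper: part (a) by differentiating $\log\mu_i=\log S_i-\log V^\mu-\mathrm{const}$ via Lemma~\ref{Vpiloglemma}, part (b) by discarding the bounded-variation term and invoking the quadratic-variation rule for F\"ollmer integrals from \cite[Remark 8 and Proposition 12]{Schied13} together with Lemma~\ref{log lemma}, and part (c) by applying the pathwise It\^o formula to the exponential representation of $\mu_i$ and the associativity rules. The only cosmetic difference is that the paper computes the diagonal quadratic variation $[\log\mu_i]$ first and then polarizes, whereas you compute the covariation directly, but this is the same argument.
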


\begin{proof}(a) The definition of $\mu_i$ and the formula for $V^\mu$ from Lemma~\ref{market portfolio lemma} yield that
\begin{equation}\label{log mui eq}
\begin{split}
\log \mu_i(t)&=\log S_i(t)-\log(S_1(t)+\cdots+S_d(t))\\
&=\log S_i(t)-\log V^\mu(t)-\log(S_1(0)+\cdots+S_d(0)).
\end{split}
\end{equation}
 Taking differentials and using Lemma~\ref{Vpiloglemma}  proves the claim.

(b) First, it follows, e.g., from \eqref{log mui eq} that $\log\mu\in QV^d$. Next, since $t\mapsto \gamma_\mu^*([0,t])$ is continuous and of bounded variation, it has vanishing quadratic variation. Hence,  (a) and  \cite[Remark 8 and Proposition 12]{Schied13} imply that 
\begin{align}
[\log \mu_i ] (t)
&=\Big[ \int_0^\cdot\big(e_i-\mu(s)\big)\ud \log S(s)\Big](t)\notag\\
&=\sum_{k,l=1}^d  \int_0^t\big((e_i)_k-\mu_k(s)\big)\big((e_i)_l-\mu_l(s)\big)\ud[\log S_k,\log S_l ] (t)\label{63}\\
&=\int_0^t\big(\mu(t)-e_i\big)^\top a(\ud t)\big(\mu(t)-e_i\big)=\tau_{ii}^\mu([0,t]).
\notag\end{align}
The polarization identity \eqref{polarization identity}  now yields the first claim in (b). The second one then follows with Lemma~\ref{log lemma}.

(c) Setting 
$$I(t):=\int_0^t\big(e_i-\mu(s)\big)\ud \log S(s)=\log S_i(t)-\log S_i(0)-\int_0^t\mu(s)\ud \log S(s)$$ and integrating (a) gives 
$
\mu_i(t)=\mu_i(0)\exp\left(I(t) -    \gamma^{*}_{\mu}([0,t])\right)$.
Using that $t\mapsto  \gamma^{*}_{\mu}([0,t])$ is of bounded variation and hence has vanishing quadratic variation, the pathwise It\^o formula  gives
$$
\mu_i(t)=\mu_i(0) + \int_0^t\mu_i(s)\ud I(s)+ \frac{1}{2}\int_0^t\mu_i(s)\ud [I](s)- \int_0^t\mu_i(s)\gamma^*_\mu(\ud s).$$
To deal with the first integral on the right-hand side, the associativity rule  of  \cite[Theorem 13]{Schied13} yields that 
$ \int_0^t\mu_i(s)\ud I(s)= \int_0^t\mu_i(s)\big(e_i-\mu(s)\big)\ud \log S(s)$. For the second integral, \eqref{63}  and  the associativity of the Stieltjes integral \cite[Theorem I.6 b]{Widder} imply that $\int_0^t\mu_i(s)\ud [I](s)=\int_0^t\mu_i(s)\tau_{ii}^\mu(\ud s)$. This yields (c).
\end{proof}

The proof  of the following lemma is left to the reader, since it   follows straightforwardly by adapting the  proof from \cite[Lemma 3.3]{FK08}.

\begin{lemma} \label{invarianceProp}
For any pair of portfolios $\pi$ and $\rho$ we have the following \emph{num{\'e}raire invariance property}  
\begin{equation}
\gamma^{*}_{\pi}(\mathrm{d}t) = \dfrac{1}{2} \left( \sum\limits_{i =1}^{d} \pi_{i}(t) \tau_{ii}^{\rho}(\mathrm{d} t) - \sum\limits_{i = 1}^{d} \sum\limits_{j =1}^{d} \pi_{i}(t) \pi_{j}(t) \tau^{\rho}_{ij}(\mathrm{d}t) \right).\notag
\end{equation}
\end{lemma}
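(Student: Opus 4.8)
The plan is to prove the numéraire invariance property by reducing the excess growth rate $\gamma_\pi^*$, which is defined in terms of the market covariance measure $a$, to an expression involving the relative covariances $\tau_{ij}^\rho$ of a second, arbitrary portfolio $\rho$. The key algebraic fact driving everything is the definition $\tau_{ij}^\rho(\ud t)=(\rho(t)-e_i)^\top a(\ud t)(\rho(t)-e_j)$, which I would expand bilinearly into four terms. First I would compute the right-hand side of the claimed identity by substituting this expansion and collecting terms.

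Concretely, writing $a$ for the matrix-valued measure and suppressing the $\ud t$ and the time argument, I expand
\begin{equation*}
\tau_{ij}^\rho=e_i^\top a\, e_j-e_i^\top a\,\rho-\rho^\top a\,e_j+\rho^\top a\,\rho=a_{ij}-(a\rho)_i-(a\rho)_j+\rho^\top a\,\rho.
\end{equation*}
The plan is then to insert this into the right-hand side of the asserted formula,
\begin{equation*}
\tfrac12\Big(\sum_i\pi_i\tau_{ii}^\rho-\sum_{i,j}\pi_i\pi_j\tau_{ij}^\rho\Big),
\end{equation*}
and simplify using the portfolio constraint $\sum_i\pi_i=1$ (and likewise $\sum_i\rho_i=1$, though I expect the latter to drop out). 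The diagonal sum contributes $\sum_i\pi_i a_{ii}-2\sum_i\pi_i(a\rho)_i+\rho^\top a\,\rho$, since $\sum_i\pi_i=1$ makes the last term survive with coefficient one. The double sum contributes $\sum_{i,j}\pi_i\pi_j a_{ij}-2\sum_{i,j}\pi_i\pi_j(a\rho)_j+\rho^\top a\,\rho$, again using $\sum_i\pi_i=1$ on the final term.

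Subtracting, the two copies of $\rho^\top a\,\rho$ cancel exactly, and the remaining $\rho$-dependent cross terms are $-2\sum_i\pi_i(a\rho)_i$ from the first piece and $+2\sum_{i,j}\pi_i\pi_j(a\rho)_j$ from the second; since $\sum_i\pi_i=1$ gives $\sum_{i,j}\pi_i\pi_j(a\rho)_j=\sum_j\pi_j(a\rho)_j$, these two cross terms cancel as well. What is left is precisely $\tfrac12\big(\sum_i\pi_i a_{ii}-\sum_{i,j}\pi_i\pi_j a_{ij}\big)$, which is the definition of $\gamma_\pi^*(\ud t)$. This establishes the identity.

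The only real obstacle is bookkeeping: one must be careful that all four bilinear terms are tracked through both the single and double sums and that the portfolio normalizations $\sum_i\pi_i=1$ are invoked in exactly the right places so that the $\rho$-dependent terms vanish. Since these are identities of signed Radon measures rather than functions, I would note once that every manipulation is a pointwise-in-the-measure linear operation (the coefficients $\pi_i(t)$, $\rho_i(t)$ are bounded measurable integrands against the finite-variation measures $a_{ij}$), so the algebra is legitimate measure-theoretically; no analytic subtlety beyond that is needed. This is exactly why the authors deem the proof routine and leave it to the reader, referring to the corresponding computation in \cite[Lemma 3.3]{FK08}.
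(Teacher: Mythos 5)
Your computation is correct and is exactly the routine bilinear expansion the paper has in mind when it leaves the proof to the reader with a pointer to \cite[Lemma 3.3]{FK08}: expanding $\tau^{\rho}_{ij}(\mathrm{d}t)=(\rho-e_i)^\top a\,(\rho-e_j)$, invoking $\sum_i\pi_i=1$ in both the single and double sums, and watching the $\rho$-dependent terms cancel recovers the definition of $\gamma^*_\pi(\mathrm{d}t)$. Nothing further is needed.
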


The following lemma is valid if $\pi$ is a portfolio for $S$  and in addition an admissible integrand for $\log\mu$. Lemma~\ref{pi lemma} (applied with $\log\mu$ in place of $\log S$) states that the latter requirement is equivalent to $\frac\pi\mu$ being an admissible integrand for $\mu$. By Lemma~\ref{G generated portfolio lemma}, this requirement is satisfied for the functionally generated portfolio from \eqref{portfolioGeneratedByG}.

\begin{lemma}Suppose that $\pi$ is both a portfolio for $S$  and an admissible integrand for $\log\mu$. Then
\begin{align} \label{relReturn}
  \log\left(\dfrac{V^{\pi}(t)}{V^{\mu}(t)}\right) &= \int_0^t\dfrac{\pi(s)}{\mu(s)} \ud \mu(s) - \dfrac{1}{2}\sum_{i,j=1}^d\int_0^t \pi_{i}(s)\pi_{j}(s)\tau^{\mu}_{ij}(\mathrm{d}s).  
\end{align}
\end{lemma}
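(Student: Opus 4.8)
The plan is to express the left-hand side as a difference of two applications of Lemma~\ref{Vpiloglemma}, one for $\pi$ and one for the market portfolio $\mu$, and then to rewrite the resulting F\"ollmer integrals against $\log S$ as an integral against $\mu$ plus excess-growth-rate corrections. First I would write, using Lemma~\ref{Vpiloglemma} twice,
\begin{equation*}
\log\left(\frac{V^\pi(t)}{V^\mu(t)}\right)=\int_0^t\pi(s)\ud\log S(s)+\gamma^*_\pi([0,t])-\int_0^t\mu(s)\ud\log S(s)-\gamma^*_\mu([0,t]).
\end{equation*}
The two stochastic integrals combine into $\int_0^t(\pi(s)-\mu(s))\ud\log S(s)$. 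The key reduction is to recognize this as an integral driven by the market weights: from Lemma~\ref{market portfolio dynamics lemma}(a) we have $\ud\log\mu_i(s)=(e_i-\mu(s))\ud\log S(s)-\gamma^*_\mu(\ud s)$, and since $\pi$ is a portfolio the components $\pi_i$ sum to one, so $\sum_i\pi_i(s)(e_i-\mu(s))=\pi(s)-\mu(s)$ as a vector acting on $\ud\log S$. Summing the relation from (a) weighted by $\pi_i$ and invoking the associativity of the F\"ollmer integral (\cite[Theorem 13]{Schied13}) together with $\sum_i\pi_i=1$ gives
\begin{equation*}
\int_0^t\sum_{i=1}^d\pi_i(s)\ud\log\mu_i(s)=\int_0^t(\pi(s)-\mu(s))\ud\log S(s)-\gamma^*_\mu([0,t]).
\end{equation*}

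Next I would convert $\int_0^t\pi(s)\ud\log\mu(s)$ into $\int_0^t\frac{\pi(s)}{\mu(s)}\ud\mu(s)$ minus the usual It\^o correction. Here the hypothesis that $\pi$ is an admissible integrand for $\log\mu$ is exactly what makes Lemma~\ref{pi lemma} (applied with $\log\mu$ in place of $\log S$) available, yielding
\begin{equation*}
\int_0^t\pi(s)\ud\log\mu(s)=\int_0^t\frac{\pi(s)}{\mu(s)}\ud\mu(s)-\frac12\sum_{i=1}^d\int_0^t\frac{\pi_i(s)}{\mu_i(s)^2}\ud[\mu_i](s).
\end{equation*}
By Lemma~\ref{market portfolio dynamics lemma}(b), $\ud[\mu_i](s)=\mu_i(s)^2\tau^\mu_{ii}(\ud s)$, so the correction term collapses to $-\frac12\sum_i\int_0^t\pi_i(s)\tau^\mu_{ii}(\ud s)$. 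Assembling the pieces, the $\gamma^*_\mu$ terms cancel and I am left with
\begin{equation*}
\log\left(\frac{V^\pi(t)}{V^\mu(t)}\right)=\int_0^t\frac{\pi(s)}{\mu(s)}\ud\mu(s)-\frac12\sum_{i=1}^d\int_0^t\pi_i(s)\tau^\mu_{ii}(\ud s)+\gamma^*_\pi([0,t]).
\end{equation*}

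The final step is purely algebraic: I must show the last two terms equal $-\frac12\sum_{i,j}\int_0^t\pi_i\pi_j\tau^\mu_{ij}(\ud s)$. This is precisely the num\'eraire invariance property of Lemma~\ref{invarianceProp} taken with $\rho=\mu$, which expands $\gamma^*_\pi$ as $\frac12(\sum_i\pi_i\tau^\mu_{ii}-\sum_{i,j}\pi_i\pi_j\tau^\mu_{ij})$; the $+\frac12\sum_i\pi_i\tau^\mu_{ii}$ inside $\gamma^*_\pi$ then cancels the standalone $-\frac12\sum_i\pi_i\tau^\mu_{ii}$, leaving exactly the desired double sum. The main obstacle I anticipate is bookkeeping with the associativity rules: each step that moves a continuous integrand inside or combines two F\"ollmer/Riemann--Stieltjes integrals requires the associativity results of \cite[Theorem 13]{Schied13} and \cite[Theorem I.6b]{Widder}, and one must keep careful track of which integrator ($\log S$, $\log\mu$, or $\mu$) is in play and that each integrand is genuinely admissible for it — the admissibility of $\pi$ for $\log\mu$ being the crucial standing hypothesis that legitimizes the $\mu$-integral on the right-hand side.
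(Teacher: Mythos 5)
Your proof is correct and follows essentially the same route as the paper: both rest on Lemma~\ref{Vpiloglemma}, Lemma~\ref{market portfolio dynamics lemma}, the associativity rules, and the num\'eraire invariance property of Lemma~\ref{invarianceProp} with $\rho=\mu$. The only cosmetic difference is that you pass through $\int_0^t\pi(s)\ud\log\mu(s)$ using part (a) together with Lemma~\ref{pi lemma} applied to $\log\mu$, whereas the paper expands $\frac{\pi(t)}{\mu(t)}\ud\mu(t)$ directly via part (c) --- which is itself derived from part (a) by the same change of variables --- so the two computations are interchangeable.
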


\begin{proof}
First, as noted above, $\frac\pi\mu$ is an admissible integrand for $\mu$.  Using Lemma~\ref{market portfolio dynamics lemma}  and the associativity of the Stieltjes integral  from~\cite[Theorem I.6 b]{Widder} together with the associativity of the Föllmer integral \cite[Theorem 13]{Schied13} yields that
\begin{align*}
 \frac{\pi(t)}{\mu(t)}\ud^{}\mu(t)&= \big(\pi(t)-\mu(t)\big)\ud^{}\log S(t)-\gamma_\mu^\ast(\ud t)+\frac{1}{2}\sum_{i=1}^d\pi_i(t)\tau_{ii}^\mu(\ud t),
\end{align*} 
due to the fact that the portfolio weights sum up to one.
Furthermore, applying the num$\acute{\text{e}}$raire invariance property  from Lemma~\ref{invarianceProp} gives us
\begin{align*} 
& \frac{\pi(t)}{\mu(t)}\ud  \mu(t)=\big(\pi(t)-\mu(t)\big)\ud  \log S(t)-\gamma_\mu^\ast(\ud t)+\dfrac{1}{2} \left( \sum_{i,j=1}^d\pi_{i}(t)\pi_{j}(t)\tau^{\mu}_{ij}(\mathrm{d}t)\right)
+\gamma_\pi^\ast(\ud t).
\end{align*}
On the other hand,  Lemma~\ref{Vpiloglemma} yields that 
 \begin{align*}
\ud \log\left(\dfrac{V^{\pi}(t)}{V^{\mu}(t)}\right) &= (\pi(t) - \mu(t)) \ud  \log S(t) + (\gamma^{*}_{\pi} - \gamma^{*}_{\mu})(\mathrm{d}t).
\end{align*}
Subtracting these formulas from each other now yields the assertion.
\end{proof}

As a preparation for the proof of Theorem~\ref{MasterFormulaCadlag}, the following lemma further calculates the  right-hand side of \eqref{relReturn} in case $\pi$ is  the functionally generated portfolio from \eqref{portfolioGeneratedByG}.

\begin{lemma}\label{aux int lemma}Let $G$ be as in Lemma~\ref{G generated portfolio lemma}, $\pi$ be the portfolio generated by $G$, and let us write 
\begin{equation}\label{nabla log G shorthand notation}
g(t)=(g_1(t),\dots,g_d(t))^\top:=\nabla_x\log G(\mu(t),A(t)).
\end{equation}
 Then
$$ \log\left(\dfrac{V^{\pi}(t)}{V^{\mu}(t)}\right)=\int_0^tg(s)\ud\mu(s)
-\frac12\sum_{i,j=1}^d\int_0^t\mu_i(s)\mu_j(s)g_i(s)g_j(s)\,\tau_{ij}^\mu(\ud s).$$
\end{lemma}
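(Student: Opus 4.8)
The plan is to start from the relative-return formula \eqref{relReturn} of the preceding lemma, which applies here because Lemma~\ref{G generated portfolio lemma} guarantees that the portfolio $\pi$ generated by $G$ is both a portfolio for $S$ and an admissible integrand for $\log\mu$ (equivalently, $\tfrac\pi\mu$ is admissible for $\mu$). Abbreviating $\bar g(t):=\sum_{j=1}^d\mu_j(t)g_j(t)$, Lemma~\ref{G generated portfolio lemma} together with the shorthand \eqref{nabla log G shorthand notation} gives the explicit representation
$$\pi_i(t)=\mu_i(t)\bigl(1+g_i(t)-\bar g(t)\bigr),\qquad\text{so that}\qquad\frac{\pi_i(t)}{\mu_i(t)}=1+g_i(t)-\bar g(t).$$
I would substitute this into the two integrals on the right-hand side of \eqref{relReturn} and treat each separately.

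For the first integral I would note that $g=\nabla_x\log G(\mu,A)$ is itself an admissible integrand for $\mu$ (here $\log G\in C^{2,1}$ since $G$ is strictly positive and $C^{2,1}$), so that $\int_0^t g\,\ud\mu$ exists and may be split componentwise. Since the difference $\tfrac{\pi_i}{\mu_i}-g_i=1-\bar g(s)$ does not depend on $i$, I obtain
$$\int_0^t\frac{\pi(s)}{\mu(s)}\,\ud\mu(s)-\int_0^t g(s)\,\ud\mu(s)=\sum_{i=1}^d\int_0^t\bigl(1-\bar g(s)\bigr)\,\ud\mu_i(s).$$
Passing to the defining Riemann sums of the F\"ollmer integral and interchanging the finite sum over $i$ with the sum over the partition, the increments collapse via $\sum_{i=1}^d\bigl(\mu_i(s')-\mu_i(s)\bigr)=0$, using $\mu_1+\cdots+\mu_d\equiv1$; hence this difference vanishes and the first integral reduces to $\int_0^t g\,\ud\mu$, the first term of the claim.

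For the second integral the key computational fact is the orthogonality relation
$$\sum_{j=1}^d\mu_j(t)\,\tau^\mu_{ij}(\ud t)=0\qquad(i=1,\dots,d),$$
together with its symmetric counterpart $\sum_{i}\mu_i\tau^\mu_{ij}=0$. Both follow immediately from the definition $\tau^\mu_{ij}(\ud t)=(\mu(t)-e_i)^\top a(\ud t)(\mu(t)-e_j)$ in \eqref{taupi1} and the identity $\sum_{j}\mu_j(\mu-e_j)=\mu\sum_j\mu_j-\sum_j\mu_je_j=\mu-\mu=0$. I would then expand $\pi_i\pi_j=\mu_i\mu_j\bigl(1+g_i-\bar g\bigr)\bigl(1+g_j-\bar g\bigr)$ and sum the products $\pi_i\pi_j\tau^\mu_{ij}$ over $i,j$. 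Every resulting term other than $\mu_i\mu_j g_ig_j$ retains a free factor $\sum_j\mu_j\tau^\mu_{ij}$ or $\sum_i\mu_i\tau^\mu_{ij}$ (the terms carrying $1$, $-2\bar g$, $\bar g^2$, $g_i$, $g_j$, or $(g_i+g_j)\bar g$ all have this structure) and therefore vanishes by the orthogonality relation; only $\sum_{i,j}\mu_i\mu_j g_ig_j\,\tau^\mu_{ij}$ survives. Since the $g_i$ and $\mu_i$ are continuous and the $\tau^\mu_{ij}$ are signed Radon measures, interchanging the finite sums with the integrals is harmless, and I recover exactly $-\tfrac12\sum_{i,j}\int_0^t\mu_i\mu_j g_ig_j\,\tau^\mu_{ij}(\ud s)$.

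The argument is essentially algebraic, and the orthogonality identity $\sum_j\mu_j\tau^\mu_{ij}=0$ is its single nontrivial ingredient: it is precisely what collapses the drift term from a full quadratic form in the portfolio weights $\pi$ to the clean quadratic form in $g$ with weights $\mu_i\mu_j$. The only point demanding genuine care is bookkeeping which integrals are F\"ollmer integrals (against $\mu_i$, which carries only quadratic variation) and which are Stieltjes integrals against the measures $\tau^\mu_{ij}$, so that linearity in the integrand and in the integrator is invoked in the correct sense; I do not anticipate a substantive obstacle beyond this.
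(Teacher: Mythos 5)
Your proposal is correct and follows essentially the same route as the paper's proof: starting from \eqref{relReturn}, reducing the F\"ollmer integral of $\tfrac{\pi}{\mu}$ to that of $g$ via the vanishing of the Riemann sums for the $(1-\mu^\top g)\bm 1$ part, and collapsing the quadratic form $\sum_{i,j}\pi_i\pi_j\tau^\mu_{ij}$ to $\sum_{i,j}\mu_i\mu_jg_ig_j\tau^\mu_{ij}$ by means of the orthogonality relation $\sum_j\mu_j(t)\tau^\mu_{ij}(\ud t)=0$. No gaps.
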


\begin{proof}First, we deal with the It\^o integral in \eqref{relReturn}. With the shorthand notation \eqref{nabla log G shorthand notation}, 
the definition \eqref{portfolioGeneratedByG} of $\pi$ becomes 
$\pi_i=\mu_i(t)(g_i(t)+1-\mu(t)^\top g(t))$, and so $\frac{\pi(t)}{\mu(t)}=g(t)+(1-\mu(t)^\top g(t)){\bm 1}$, where 
${\bm 1}:=(1,\dots,1)^\top\in\R^d$ denotes the vector whose entries are all $1$. Since both $\frac\pi\mu$ and $g$ are admissible integrands for $\mu$, the function $\frac\pi\mu-g=(1-\mu ^\top g){\bm 1}$ must also be an admissible integrand for $\mu$. But $\bm 1^\top(\mu(t)-\mu(s))=0$ for all $s$ and $t$, and therefore  the Riemann sums in the approximation \eqref{Foellmer integral Riemann sums} of the Föllmer integral $\int_0^t(1-\mu(s)^\top g(s)){\bm 1}\ud\mu(s)$ must all vanish. It follows that this  integral is zero, and hence that $ \int_0^t\frac{\pi(s)}{\mu(s)}  \mu(\mathrm{d}s)=\int_0^tg(s)\mu(\mathrm{d}s)$.

To deal with the rightmost integral in \eqref{relReturn}, we first note  that
 \begin{align*}
\tau_{ij}^{\mu} (\mathrm{d}t) &= a_{ij}(\mathrm{d}t) -\sum\limits_{\ell =1}^{d} \mu_{\ell}(t) a_{i\ell}(\mathrm{d}t) - \sum\limits_{k =1}^{d} \mu_{k}(t) a_{kj}(\mathrm{d}t) + \sum\limits_{k,\ell = 1}^{d} \mu_{k}(t) \mu_{\ell}(t) a_{k\ell}(\mathrm{d}t). \label{taupi2}\end{align*}
Thus, using the fact that the components of $\mu$ sum up to 1 we get 
  \begin{equation}\label{taupieq}
\sum\limits_{j =1 }^{d} \mu_{j}(t) \tau^{\mu}_{ij}(\mathrm{d}t) = 0, \quad i = 1,\ldots,d.
\end{equation}
Using that $\pi_i=\mu_i(t)(g_i(t)+1-\mu(t)^\top g(t))$, the preceding identity yields that
$$
\sum_{i,j=1}^d\pi_i(t)\pi_j(t)\tau^\mu_{ij}(\ud t)=\sum_{i,j=1}^dg_i(t)g_j(t)\mu_i(t)\mu_j(t)\tau^\mu_{ij}(\ud t).
$$
This proves the lemma.\end{proof}

\subsection{Proof of Theorem~\ref{MasterFormulaCadlag}}

Letting $g$ be as in \eqref{nabla log G shorthand notation} and using the fact that
$$\big(\log G)_{x_i,x_j}=\frac{G_{x_i,x_j}}{G}-g_ig_j
$$ 
the pathwise It\^ o formula and Lemma~\ref{market portfolio dynamics lemma} (b) yield
 \begin{align*}
\log \bigg(\frac{G(T,\mu(T),A(T))}{G(0,\mu(0),A(0)}\bigg)&= \int_0^Tg(t)\ud \mu(t)   + \sum_{k=0}^m\int_0^T\frac{\partial}{\partial{a_k}} \log G(t,\mu(t),A(t))  \ud A_k(t)\\&\qquad + \dfrac{1}{2} \sum_{i,j=1}^{d}\int_0^T\left( \dfrac{\partial_{ij}^{2}G(t,\mu(t),A(t))}{G(t,\mu(t),A(t))} - g_{i}(t) g_{j}(t) \right) \mu_{i}(t)\mu_{j}(t)\tau_{ij}^{\mu}(\mathrm{d}t).     \end{align*} 
 Comparing this formula with Lemma~\ref{aux int lemma} now gives the assertion.\hfill
\qedsymbol

\subsection{Proofs of the results from Section~\ref{path-dependent master formula section}
}

\begin{proof}[Proof of Lemma~\ref{functional portfolio value lemma}]
The chain rule \cite[Lemma 3.3]{Vol15} gives that  $\bbar V^\pi\in\mathbb C^{1,2}_c(\mathbb R^d_+,W\times\mathbb R)$ and that $\nabla_X \bbar V^\pi(t,X,(A,B^\pi))=\xi(t,X)$. Therefore, $\xi$ is indeed an admissible integrand on $\mathbb R^d_+$. 
Next, applying once again the chain rule \cite[Lemma 3.3]{Vol15} gives
\begin{align*}\lefteqn{\partial_{ij}\bbar V^\pi(t,X,(A,B^\pi))=\partial_i\xi_j(t,X)}\\
&=\frac{\bbar V^\pi(t,X,(A,B^\pi))}{X_i(t)X_j(t)}\Big((\partial_{ij}F)(t,\log X,A)+\pi_i(t,\log X)\pi_j(t,\log X)-\delta_{ij}\pi_i(t,\log X)\Big)
\end{align*}
and
\begin{align*}
\mathscr{D}_i\bbar V^\pi(t,X,(A,B^\pi))&=\bbar V^\pi(t,X,(A,B^\pi))\mathscr{D}_iF(t,\log X,A),\qquad i=0,\dots, m,\\
\mathscr{D}_{m+1}\bbar V^\pi(t,X,(A,B^\pi))&=-\bbar V^\pi(t,X,(A,B^\pi)).
\end{align*}
The functional It\^o formula, the formula \eqref{log covariation id}, and the associativity of the Stieltjes integral now  give  \eqref{funct portf sf cond}.
\end{proof}

\begin{proof}[Proof of Lemma~\ref{functional market portfolio lemma}] The functional  $H(t,X):=\log(e^{X_1(t)}+\cdots+e^{X_d(t)})$ clearly belongs to $\mathbb C^{1,2}_c(\mathbb R^d,\emptyset)$, and we have $\bbar\mu(t,X)=\nabla_XH(t,X)$. Therefore, $\bbar\mu$ is indeed an admissible functional integrand. It is also clear that $\bbar\mu_1+\cdots+\bbar\mu_d=1$. Next, we argue that 
\begin{equation}\label{market portfolio and f market portfolio id}
\int_0^t\bbar\mu(s,\log S)\ud\log S(s)=\int_0^t\mu(s)\ud\log S(s).
\end{equation}
That is, the functional It\^ o integral $\int_0^t\bbar\mu(s,\log S)\ud\log S(s)$ coincides with the F\"ollmer integral of the market portfolio $\mu(t)$ with respect to $\log S$. (This fact is not entirely obvious, because both integrals are defined as the respective limits of different \lq\lq Riemann sums''). 
To prove \eqref{market portfolio and f market portfolio id}, note that   $\mathscr{D}_0H(t,X)=0$ and 
\begin{equation}\label{H second derivatives}
\partial_{ij}H(t,X)=\partial_i\mu_j(t,X)=\delta_{ij}\bbar\mu_i(t,X)-\bbar\mu_i(t,X)\bbar\mu_j(t,X),
\end{equation}
where $\delta_{ij}$ is again the Kronecker delta. Hence, 
\begin{align*}
\lefteqn{\int_0^t\bbar\mu(s,\log S)\ud\log S(s)}\\
&=H(t,\log S)-H(0,\log S)-\frac12\sum_{i,j=1}^d\int_0^t\Big(\delta_{ij}\mu_i(s)- \mu_i(s) \mu_j(s)\Big)\ud [\log S_i,\log S_j](s)\\
&=\int_0^t \mu(s)\ud\log S(s),
\end{align*}
which gives \eqref{market portfolio and f market portfolio id}. The identity \eqref{market portfolio and f market portfolio id} and \cite[Proposition 12]{Schied13}  imply in turn that 
\begin{align*}
\bigg[\int_0\bbar\mu(s,\log S)\ud\log S(s)\bigg](t)&=\sum_{i,j=1}^d\int_0^t\mu_i(s)\mu_j(s)\ud [\log S_i,\log S_j](s)\\
&=\sum_{i,j=1}^d\int_0^t\bbar\mu_i(s,\log S)\bbar\mu_j(s,\log S)\ud [\log S_i,\log S_j](s),
\end{align*}
and this completes the proof that $\bbar\mu$ is a functional portfolio. Finally, it follows from $\mathscr{D}_0H(t,X)=0$  and \eqref{H second derivatives} that $B^{\bbar\mu}$ as defined in Lemma~\ref{functional portfolio value lemma} with $F:=H$ vanishes identically. Thus, that lemma yields the claimed formula for $\bbar V^{\bbar\mu}$.
\end{proof}

\begin{proof}[Proof of Lemma~\ref{pathdependent functionally generated portfolio}]
 Let $\Gamma (t,X,A):=\log G(t,\bbar\mu(\cdot,X),A)$. The chain rule \cite[Lemma 3.3]{Vol15} implies that $\Gamma \in\mathbb C^{1,2}_c(\mathbb R^d,W)$ and that its $i^{\text{th}}$ partial vertical derivative is given by
\begin{align}
\partial_i\Gamma (t,X,A)&=\sum_{j=1}^dg_j(t,\bbar\mu(\cdot,X))\partial_i\bbar\mu_j(t,X)\label{Gamma first vert der}\\
&=\bbar\mu_i(t,X)\Big(g_i(t,\bbar\mu(\cdot,X))-\sum_{j=1}^d\bbar\mu_j(t,X)g_j(t,\bbar\mu(\cdot,X))\Big)\nonumber\\
&=\pi_i(t,X)-\bbar\mu_i(t,X),\nonumber
\end{align}
where we have used \eqref{H second derivatives} in the second step.
Therefore, $\pi-\bbar\mu$ is an admissible functional integrand on $\mathbb R^d$. Finally, by Lemma~\ref{functional market portfolio lemma},  $\bbar\mu$, and in turn $\pi$, are admissible functional integrands on $\mathbb R^d$.
\end{proof}

We will also need the second vertical derivatives of the functional $\Gamma$ introduced in the proof of the preceding lemma.

\begin{lemma}\label{Gamma second derivatives lemma}For $G$ as in Lemma~\ref{pathdependent functionally generated portfolio} and $\Gamma (t,X,A)=\log G(t,\bbar\mu(\cdot,X),A)$, we have 
\begin{equation}\label{Gamma second derivatives eq}
\begin{split}
\partial_{ij}\Gamma &=\sum_{k,\ell=1}^d \frac{\partial_{\ell k}G}G\bbar\mu_\ell\bbar\mu_k\big(\bbar\mu_j-\delta_{j\ell}\big)\big(\bbar\mu_i-\delta_{ik}\big)-\pi_i\pi_j+\bbar\mu_i\bbar\mu_j+\delta_{ij}(\pi_i-\bbar\mu_i).
\end{split}
\end{equation}
\end{lemma}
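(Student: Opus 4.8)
The plan is to compute the second vertical derivative $\partial_{ij}\Gamma$ by differentiating the first-derivative formula \eqref{Gamma first vert der} obtained in the proof of Lemma~\ref{pathdependent functionally generated portfolio}. Recall from that proof that
$$
\partial_i\Gamma(t,X,A)=\sum_{k=1}^d g_k(t,\bbar\mu(\cdot,X))\,\partial_i\bbar\mu_k(t,X),
$$
where $g_k=\partial_k\log G$. I would take the vertical derivative $\partial_j$ of this expression via the product rule: one term where $\partial_j$ hits the factor $g_k(t,\bbar\mu(\cdot,X))$ and another where it hits $\partial_i\bbar\mu_k(t,X)$. This is the natural route because it reuses exactly the chain-rule machinery \cite[Lemma 3.3]{Vol15} already invoked for the first derivative, and all the building blocks $\partial_i\bbar\mu_k$ are known in closed form from \eqref{H second derivatives}.

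For the first piece, I would write $\partial_j\big(g_k(t,\bbar\mu(\cdot,X))\big)=\sum_{\ell=1}^d(\partial_\ell g_k)(t,\bbar\mu)\,\partial_j\bbar\mu_\ell$, and use that $\partial_\ell g_k=\partial_{\ell k}\log G=\frac{\partial_{\ell k}G}{G}-g_\ell g_k$ — the same identity already used in the proof of Theorem~\ref{MasterFormulaCadlag}. Substituting $\partial_i\bbar\mu_k=\bbar\mu_i(\delta_{ik}-\bbar\mu_k)$ and $\partial_j\bbar\mu_\ell=\bbar\mu_j(\delta_{j\ell}-\bbar\mu_\ell)$ from \eqref{H second derivatives}, and collecting the $\frac{\partial_{\ell k}G}{G}$ contribution, one obtains the double sum $\sum_{k,\ell}\frac{\partial_{\ell k}G}{G}\bbar\mu_\ell\bbar\mu_k(\bbar\mu_j-\delta_{j\ell})(\bbar\mu_i-\delta_{ik})$ that appears as the leading term in \eqref{Gamma second derivatives eq}, after noting the sign flip $(\delta_{ik}-\bbar\mu_k)(\delta_{j\ell}-\bbar\mu_\ell)=(\bbar\mu_k-\delta_{ik})(\bbar\mu_\ell-\delta_{j\ell})$.

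For the second piece, I would compute $\partial_j\big(\partial_i\bbar\mu_k\big)=\partial_{ij}\bbar\mu_k$, the second vertical derivative of the functional market portfolio, and contract against $g_k$. The remaining terms of \eqref{Gamma second derivatives eq}, namely $-\pi_i\pi_j+\bbar\mu_i\bbar\mu_j+\delta_{ij}(\pi_i-\bbar\mu_i)$, should emerge from combining this second-derivative-of-$\bbar\mu$ contribution with the $-g_\ell g_k$ part of the first piece, after using the identity $\pi_i-\bbar\mu_i=\bbar\mu_i(g_i-\sum_j\bbar\mu_j g_j)$ from \eqref{functional portfolio generated by G eq} to re-express things in terms of $\pi$ and $\bbar\mu$.

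The main obstacle I expect is purely organizational rather than conceptual: the bookkeeping of the four-index sums and the repeated Kronecker-delta collapses. In particular, correctly computing $\partial_{ij}\bbar\mu_k$ (a third-order object in the softmax-type functional $\bbar\mu$) and then algebraically reorganizing the mixture of $g$-quadratic terms and the second-derivative-of-$\bbar\mu$ terms into the compact closed form $-\pi_i\pi_j+\bbar\mu_i\bbar\mu_j+\delta_{ij}(\pi_i-\bbar\mu_i)$ is where sign and index errors are most likely. I would manage this by first reducing everything to expressions in $\bbar\mu$, $g$, and $\frac{\partial_{\ell k}G}{G}$ alone, verifying the purely second-order-$G$ part separately from the first-order part, and only at the end substituting back the definition of $\pi_i$ to package the first-order terms.
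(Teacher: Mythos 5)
Your proposal is correct and follows essentially the same route as the paper's proof: product rule on $\partial_i\Gamma=\sum_k g_k(\bbar\mu)\,\partial_i\bbar\mu_k$, the chain rule from \cite[Lemma 3.3]{Vol15}, the identity $\partial_\ell g_k=\frac{\partial_{\ell k}G}{G}-g_\ell g_k$, the closed forms for $\partial_i\bbar\mu_k$ and $\partial_{ij}\bbar\mu_k$, and the final repackaging via $\pi_i-\bbar\mu_i=\bbar\mu_i\bigl(g_i-\sum_j\bbar\mu_j g_j\bigr)$. The paper carries out exactly the bookkeeping you anticipate, so nothing further is needed.
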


\begin{proof}Using \eqref{Gamma first vert der} and the chain rule \cite[Lemma 3.3]{Vol15} , we find that
\begin{align*}
\partial_{ij}\Gamma (t,X,A)&=\partial_j\sum_{k=1}^dg_k(t,\bbar\mu(\cdot,X))\partial_i\bbar\mu_k(t,X)\\
&=\sum_{k=1}^d\bigg(\sum_{\ell=1}^d\partial_\ell g_k(t,\bbar\mu(\cdot,X))\partial_j\bbar\mu_\ell(t,X)\partial_i\bbar\mu_k(t,X)+g_k(t,\bbar\mu(\cdot,X))\partial_{ij}\bbar\mu_k(t,X)\bigg).
\end{align*}
By \eqref{H second derivatives}, we have  $\partial_j\bbar\mu_k=\delta_{jk}\bbar\mu_k-\bbar\mu_k\bbar\mu_j$ and hence
$$\partial_{ij}\bbar\mu_k=\mu_k\delta_{ik}\delta_{jk}-\mu_{k}\mu_i\delta_{k j}-\mu_j\mu_k\delta_{k i}-\mu_k\mu_i\delta_{ij}+2\mu_i\mu_j\mu_k.
$$
Therefore, when letting $\wt{\pi}=\pi-\bbar\mu$,
\begin{align*}
\sum_{k=1}^dg_k\partial_{ij}\bbar\mu_k=\Big(\delta_{ij}\mu_i-\bbar\mu_i\bbar\mu_j\Big)\Big(g_i-\sum_{k=1}^dg_k\bbar\mu_k\Big)-\bbar\mu_i\bbar\mu_j\Big(g_j-\sum_{k=1}^dg_k\bbar\mu_k\Big)=\wt\pi_i\delta_{ij}-\bbar\mu_j\wt\pi_i-\bbar\mu_i\wt\pi_j.
\end{align*}
 Moreover, 
$\partial_\ell g_k=\frac{\partial_{\ell k}G}G-g_\ell g_k$,
and so
\begin{align*}
\sum_{k,\ell=1}^d\partial_\ell g_k\partial_j\bbar\mu_\ell \partial_i\bbar\mu_k=\sum_{k,\ell=1}^d \frac{\partial_{\ell k}G}G\big(\delta_{j\ell}\bbar\mu_\ell-\bbar\mu_\ell\bbar\mu_j\big)\big(\delta_{ik}\bbar\mu_k-\bbar\mu_k\bbar\mu_i\big)-\bbar\mu_i\bbar\mu_j\Big(g_i-\sum_{k=1}^dg_k\bbar\mu_k\Big)\Big(g_j-\sum_{k=1}^dg_k\bbar\mu_k\Big).\end{align*}
Putting everything together yields the assertion after a short computation.
\end{proof}

\begin{proof}[Proof of Theorem~\ref{Path-dependent pathwise master formula}] 
As in Lemma~\ref{functional portfolio value lemma}, we consider the functionals $\bbar V^\pi$ and $\bbar V^\mu$. It was shown in Lemma~\ref{pathdependent functionally generated portfolio} that $B^{\bbar\mu}=0$ and 
$$
\bbar V^\mu(t,X)=\exp\big(H(t,\log X)-H(0,\log X)\big),
$$
where $H(t,X)=\log(e^{X_1(t)}+\cdots+e^{X_d(t)})$. Moreover, it was shown in the proof of Lemma~\ref{pathdependent functionally generated portfolio} that $\pi=\nabla_X(\Gamma+H)$, where $\Gamma$ is as in Lemma~\ref{Gamma second derivatives lemma}. Hence, by Lemma~\ref{functional portfolio value lemma},
$$\log \left( \dfrac{V^{\pi}(T)}{V^{\mu}(T)} \right) =\Gamma(T,\log S,A)-\Gamma(0,\log S,A)-B^\pi(T)=\log \left( \dfrac{G( T,\mu,A)}{G( 0,\mu,A)} \right)-B^\pi(T).
$$
It thus remains to compute $B^\pi(T)$. By \eqref{Bpi eq}, \eqref{H second derivatives}, and \eqref{Gamma second derivatives eq}, we have
\begin{align*}
\lefteqn{\ud B^\pi(t)+ \mathfrak{h}(\ud t)}\\
&=  \frac{1}{2}\sum_{i,j=1}^d\bigg( (\partial_{ij} \Gamma)(t, \log S  ,A)+(\partial_{ij} H)(t, \log S)\\
&\qquad\qquad\qquad+\pi_i(t,\log S)\pi_j(t,\log S)-\delta_{ij}\pi_i(t,\log S)\bigg) \ud [\log S_i,\log S_j](t)\\
&= \frac{1}{2}\sum_{i,j,k,\ell=1}^d \bigg(\frac{(\partial_{\ell k}G)(t, \log S  ,A)}{G(t, \log S  ,A)}\big(\mu_\ell(t)\mu_j(t)-\delta_{j\ell}\mu_\ell(t)\big)\big(\mu_k(t)\mu_i(t)-\delta_{ik}\mu_k(t)\big)\bigg)\ud [\log S_i,\log S_j](t).
\end{align*}
Recalling the notation $a_{ij}(\ud t)=\ud [\log S_i,\log S_j](t)$ and $\tau^\mu_{\ell k}(\ud t)=\sum_{i,j=1}^d(\mu_j-\delta_{j\ell })a_{ij}(\mu_i-\delta_{ik})$ as well as the fact that  $\ud [\mu_\ell,\mu_k](t)=\mu_\ell(t)\mu_k(t)\tau_{\ell k}(\ud t)$ by Lemma~\ref{market portfolio dynamics lemma}, we finally arrive at the desired identity $\ud B^\pi(t)+ \mathfrak{h}(\ud t)=-\mathfrak{g}(\ud t)$.
 \end{proof}

\subsection{Proof of Proposition~\ref{Examples eq prop}}
By the  chain rule for functional derivatives, \cite[Lemma 3.3]{Vol15}, $\log\wh G\in\mathbb C_c^{1,2}(U,\emptyset)$, and its vertical derivative  is given by
\begin{equation}
\partial_i\log\wh G(t,X)=\frac\lambda{\wh G(t,X)}
{ \varphi _{x_i}\Big(\lambda X(t)+\frac{1-\lambda}\theta\int_{t-\theta}^tX(0\vee s)\ud s\Big)}.
\end{equation}
When evaluating this expression at $X=\mu$, it becomes equal to $g_i(t)$. Since we know that $\mu(t)=\bbar\mu(t,\log S)$, the identity \eqref{examples general portfolio eq} follows. Later on in this proof, we will also need the horizontal derivative $\mathscr{D}_0\log\wh G(t,X)$. By \eqref{horizontal derivative 0}, it is given by
\begin{align*}
\mathscr{D}_0\log\wh G(t,X)=\frac{1-\lambda}{\theta \wh G(t,X)}\sum_{i=1}^d\varphi _{x_i}\Big(\lambda X(t)+\frac{1-\lambda}\theta\int_{t-\theta}^tX(0\vee s)\ud s\Big)\big(X_i(t-)-X_i(0\vee(t-\theta)-)\big),
\end{align*}
where we put $X(0-)=X(0)$. Therefore,
\begin{equation}\label{horizontal der log wh G}
\mathscr{D}_0\log\wh G(t,\mu)=\frac{1-\lambda}{ \lambda}\sum_{i=1}^dg_i(t)\alpha_i'(t),
\end{equation}

In analogy to the proof of Lemma~\ref{functional market portfolio lemma}, it suffices to establish the identity
\begin{equation}\label{functional ordinary pi Ito eq}
\int_0^t\wh\pi(s,\log S)\ud\log S(s)=\int_0^t\pi(s)\ud\log S(s)
\end{equation}
so as to conclude that $\wh\pi$ satisfies \eqref{path portolio quadr var eq} and is hence a functional portfolio. To this end, recall from the proof of Lemma~\ref{pathdependent functionally generated portfolio} that $\wh\pi(t,X)=\nabla_X\Gamma(t,X)+\nabla_XH(t,X)$, where $\Gamma(t,X)=\log\wh G(t,\bbar\mu(\cdot,X))$ and $H(t,X)=h(X(t))$ for $h(x)=\log (e^{x_1}+\cdots+e^{x_d})$. Since we already know from the proof of Lemma~\ref{functional market portfolio lemma} that $\int_0^t\nabla_XH(s,\log S)\ud\log S(s)=\int_0^t\mu(s)\ud\log S(s)$, it will be enough to show that 
\begin{equation}\label{Gamma gamma id}
\int_0^t\nabla_X\Gamma(s,\log S)\ud\log S(s)=\int_0^t\nabla_x\gamma(\log S(t),\alpha(t))\ud\log S(s),
\end{equation}
where $\gamma(x,a)=\log G(\nabla_xh(x),a)$ is as in the proof of Lemma~\ref{G generated portfolio lemma}. The functional It\^o formula implies that the left-hand side of \eqref{Gamma gamma id} is given by
\begin{align*}
\Gamma(t,\log S)-\Gamma(0,\log S)-\int_0^t\mathscr D_0\Gamma(s,\log S)\ud s-\frac12\sum_{i,j=1}^d\int_0^t\partial_{ij}\Gamma(s,\log S)\ud[\log S_i,\log S_j](s).
\end{align*}
We clearly have $\Gamma(t,\log S)=\gamma(\log S(t),\alpha(t))$. 
  Moreover, Lemma~\ref{Gamma second derivatives lemma} implies that $\partial_{ij}\Gamma(t,\log S)=\gamma_{x_i,x_j}(\log S(t),\alpha(t))$. Next, the chain rule for functional derivatives \cite[Lemma 3.3]{Vol15} yields that 
  \begin{align*}
  \mathscr D_0\Gamma(t,X)= (\mathscr D_0\log \wh G)(t,\bbar\mu(\cdot,X))+\sum_{i=1}^d(\partial_i\log \wh G)(t,\bbar\mu(\cdot,X))\mathscr D_0\bbar\mu_i(t,X)= (\mathscr D_0\log \wh G)(t,\bbar\mu(\cdot,X)),
  \end{align*}
since $\mathscr D_0\bbar\mu_i=0$. Together with \eqref{horizontal der log wh G} we thus obtain
\begin{equation}
 \mathscr D_0\Gamma(t,\log S)=\frac{1-\lambda}{ \lambda}\sum_{i=1}^dg_i(t)\alpha_i'(t).
\end{equation}
Re-writing the F\"ollmer integral on the right-hand side of \eqref{Gamma gamma id} by means of the It\^o formula for $\gamma(\log S(t),\alpha(t))$ and putting everything together thus yields \eqref{Gamma gamma id}.

Next, by the computations we have already completed in this proof, it is clear that
\begin{align*}
\mathfrak{h}([0,T])&=-\int_0^T\frac{G_a(\mu(t),\alpha(t))}{G(\mu(t),\alpha(t))}\ud \alpha(t)=-\frac{1-\lambda}{ \lambda}\sum_{i=1}^d\int_0^Tg_i(t)\alpha_i'(t)\ud t=-\int_0^T\frac{\mathscr D_0\wh G(t,\mu)}{G(t,\mu)}\ud t,\\
\mathfrak{g}([0,T])&= - \frac12\sum_{i,j=1}^{d}  \int_0^T\frac{G_{x_i,x_j}( \mu(t),\alpha(t))}{ G( \mu(t),\alpha(t))} \ud [\mu_i,\mu_j](t)= - \frac12\sum_{i,j=1}^{d}  \int_0^T\frac{\partial_{ij}\wh G( \mu(t),\alpha(t))}{ \wh G( \mu(t),\alpha(t))} \ud [\mu_i,\mu_j](t)\\
&= - \frac{\lambda^2}2\sum_{i,j=1}^{d}  \int_0^T\frac{\varphi_{x_i,x_j} (\lambda \mu(t)+(1-\lambda)\alpha(t))}{ \varphi (\lambda \mu(t)+(1-\lambda)\alpha(t))} \ud [\mu_i,\mu_j](t)
\end{align*}Since, moreover, $G(\mu(t),\alpha(t))=\wh G(t,\mu)$, Theorems~\ref{MasterFormulaCadlag} and~\ref{Path-dependent pathwise master formula} imply that $V^\pi(T)=V^{\wh\pi}(T)$.
\hfill\qedsymbol

\begin{rem}As mentioned before, a main difficulty in dealing with \emph{functional} pathwise It\^o calculus is the fact that  here  the pathwise It\^o integral is then no longer the limit of ordinary Riemann sums as in 
\eqref{Foellmer integral Riemann sums}. Instead, the integrands in the approximating \lq\lq Riemann sums'' \eqref{Itointcont} involve approximations of the integrator path. Ananova and Cont \cite[Theorem 3.2]{Ananova} provide strong regularity assumptions on both the integrand and the integrator to guarantee that  \eqref{Itointcont}  can be replaced by ordinary Riemann sums. In this context, it is interesting to note that  in the special cases \eqref{market portfolio and f market portfolio id} and \eqref{functional ordinary pi Ito eq} we could obtain similar results   without the regularity conditions on the integrators required in~\cite{Ananova}.
\end{rem}


\appendix  \label{A}
\section{Appendix on pathwise It\^o calculus}\label{Appendix} 

For the convenience of the reader, we     give here a short overview of the definitions and notations of  pathwise functional It\^o calculus as developed by Dupire~\cite{Dupire} and Cont and Fournier~\cite{CF,CF13}. Our presentation and notation is close to~\cite{Vol15}. 
In the sequel, we fix $T>0$ and  open sets $U\subset  \mathbb{R}^d$  and $V \subset \mathbb{R}^m$.  The Skorokhod  space $D([0,T],U)$ will be equipped with the supremum norm $\|X\|_\infty=\sup_{u\in[0,T]}\arrowvert X(u)\arrowvert$.
 For $X\in D([0,T],U)$ and $t\in [0,T]$, we let $X^t=(X(t\wedge s))_{s\in[0,T]}$  denote the  path stopped in $t$.  A functional $F:[0,T]\times D([0,T],U)\times   \text{\sl CBV}([0,T],V)  \mapsto \mathbb{R}$ is called \emph{non-anticipative} if 
$
 F(t, X, A)=F(t, X^t, A^t)
$
for all $(t,X, A)\in [0,T]\times D([0,T],U)\times   \text{\sl CBV}([0,T],V)$. We now recall several regularity properties for non-anticipative (and possibly vector-valued) functionals $F$. 
\begin{itemize}
\item $F$ 
is called \emph{boundedness-preserving}   if for every $A\in \text{\sl CBV}([0,T],V)$ and any compact subset $K\subset U$  there exist a constant $C$ such that 
$\arrowvert F(t,X,A)\arrowvert \le  C$ for all $t\in [0,T]$ and  $X\in D([0,T],K)$.
\item  $F$ is called \emph{continuous at fixed times}, if for all $\eps>0$, $t\in[0,T]$, $X\in  D([0,T],U)$, and $A\in \text{\sl CBV}([0,T],V)$, there exists $\eta>0$ such that $|F(t,X,A)-F(t,Y,A)|<\eps$ for all $Y\in D([0,T],U)$ for which $\|X^t-Y^t\|_\infty<\eta$.
\item  $F$ 
is called \emph{left-continuous}  if for all $t\in(0,T]$, $\eps>0$, $X\in D([0,T],U)$, and $A\in \text{\sl CBV}([0,T],V)$, there exists $\eta>0$ such that 
$
|F(t,X,A)-F(t-h,Y,A)|<\eps
$ for all $h\in[0,\eta)$ and $Y\in D([0,T],U)$ for which
$\|X^t-Y^{t-h}\|_\infty<\eta$. 
\item $F$  is called \emph{continuous in $X$  locally uniformly in $t$,}  if for all $\eps>0$ and $(t,X,A) \in [0,T]\times D([0,T],U)\times\text{\sl CBV}([0,T],S)$ there is some $\eta>0$ such that 
$|F(u,X,A)-F(u,Y,A)|<\eps$
for all $(u,Y) \in [0,T]\times D([0,T],U)$ for which
$\|X -Y \|_\infty<\eta$ and $  |t-u|<\eta$.
\end{itemize}

Next, we recall the notions of horizontal and vertical derivatives, which are also called Dupire derivatives and which were proposed in~\cite{Dupire,CF}. The following notion of a horizontal derivative extends  the one from~\cite{Dupire,CF} and was proposed in~\cite{Vol15}.  
We say that $F$ is \emph{horizontally  differentiable}, 
if there exist  non-anticipative and boundedness preserving functionals $\mathscr{D} _0F, \mathscr{D} _{1}F,\dots, \mathscr{D} _{m}F$ on $[0,T]\times  D([0,T],U)\times \text{\sl CBV}([0,T],V) $ such that for $0\le s<t\le T$ and 
$(X,A)\in  D([0,T],U)\times \text{\sl CBV}([0,T],V) $, the functions $[s,t]\ni r\mapsto \mathscr{D} _iF(r,X^s,A)$ are Borel measurable and
\begin{align}\label{horizontal derivatives}
 F(t,X^s,A)-F(s,X^s,A)=\sum_{i=0}^m\int_s^t\mathscr{D} _iF(r,X^s,A)\, A_i(dr),
\end{align}
where we put $A_0(r):=r$.
As discussed in \cite[Remark 2.2]{Vol15},  $F$ will be horizontally differentiable with horizontal derivative $\mathscr{D}F=\left(\mathscr{D} _0F, \mathscr{D} _{1}F,\dots, \mathscr{D} _{m}F\right) $, if the following limits exist for all $(t,X,A)$ and if they give rise to locally bounded and non-anticipative functionals  on $[0,T]\times  D([0,T],U)\times \text{\sl CBV}([0,T],V) $ satisfying the above measurability requirement,
\begin{align}
\mathscr{D} _0F(t,X^t,A^t)&=\lim_{h\downarrow 0}\frac{F(t+h,X^{t},A^{t})-F(t,X^{t},A^{t})}{h}\label{horizontal derivative 0}\\
 \mathscr{D} _{k}F(t, X^{t}, A^{t})&=\lim_{h\downarrow 0}\frac{F(t,X^{t},A_1^{t},\dots, A_k^{t+h},\dots,A_m^{t})-F(t,X^{t},A^{t})}{A_k(t+h)-A_k(t)} \Ind{\{A_k(t+h)\neq A_k(t)\}},\nonumber
 \end{align}
for $ k=1,\;\dots, m$. 

A non-anticipative functional $F$ 
is said to be \emph{vertically differentiable} at $(t,X,A) $ if the map
$
\mathbb{R}^d\ni v\to F(t, X+v\Ind{[t,T]}, A^t)
$
is differentiable at $0$. The \emph{vertical derivative} of $F$ at $(t,X,A) $ will then be the gradient of that map at $v=0$. It will be denoted by
\begin{equation}\label{vertical derivative}
\nabla_X F(t,X,A)=\left( \partial _i F(t,X,A)\right)_{i=1,\dots, d},
\end{equation}
where $\partial _i F(t,X,A)$ is the $i^{\text{th}}$ partial vertical derivative,
$$ \partial _i F(t,X,A)=\lim_{h\to0}\frac{F(t,X+he_i\Ind{[t,T]},A)-F(t,X,A)}{h}.
$$

If the functional $F$ admits  horizontal and vertical derivatives $ \mathscr{D} F$ and $ \nabla_X F$, we may iterate the corresponding operations so as to define higher order horizontal and vertical derivatives. 
We  denote by $\mathbb{C}^{1,2}_b(U,V)$ the set of all non-anticipative functionals $F
$ on $[0,T]\times D([0,T],U)\times\text{\sl CBV}([0,T],V)$ such that $F$ is left-continuous, horizontally differentiable, and twice vertically differentiable;  the horizontal derivative $\mathscr DF$ is  continuous at fixed times; the vertical derivatives $\nabla_XF$ and $\nabla^2_XF$ are left-continuous and boundedness-preserving.  With $\mathbb C^{1,2}_c(U,V)$ we denote the class of all functionals $F\in \mathbb C^{1,2}_b(U,V)$ that are continuous in $X$  locally uniformly in $t$ and boundedness preserving. Functionals in $\mathbb{C}^{1,2}_b(U,V)$ satisfy the following pathwise It\^o formula, which is taken from~\cite{Vol15} and which slightly extends the ones from~\cite{Dupire, CF}.

\begin{thm} \label{cvfcont} Let us fix a path $X\in C([0,T],U)$ with continuous quadratic variation, a path $A\in\text{\sl CBV}([0,T],V)$, and  a functional $F\in \mathbb{C}^{1,2}_b(U,V)$.
For $n\in\mathbb N$, define the approximating path $X^n\in D([0,T],U)$ by
 \begin{equation} 
X^n(t):=\sum_{s\in \mathbb{T}_n}X(s')\mathbbm{1}_{[s,s')}(t)+X(T)\Ind{\{T\}}(t),\qquad 0\leq t\leq T,
\label{Xn1}\end{equation}
and let $X^{n, s-}:=\lim_{r\uparrow s}X^{n, r}$. Then  the \emph{pathwise It\^o integral} along $\left(\mathbb{T}_n\right)$, \begin{equation} 
\int_0^T \nabla_XF(s, X, A)\ud X(s):=\lim_{n\uparrow\infty} \sum_{s\in \mathbb{T}_n}\nabla_XF(s,X^{n,s-}, A)\cdot\left(X(s')-X(s)\right),
\label{Itointcont}\end{equation}
 exists and, with $A_0(t)=t$,
 \begin{equation}\label{Ito formula}
\begin{split} 
F(T,X ,A )-F(0,X ,A )&=\int_0^T \nabla_XF(s,X , A )\ud X(s)+\sum_{i=0}^m\int_0^T\mathscr{D}_iF(s,X,A)\ud A_i( s)\\
&\qquad+ \frac{1}{2}\sum_{i,j=1}^d\int_0^T\partial_{ij} F(s, X  ,A)\ud [X_i,X_j](s).
\end{split}
 \end{equation}
\end{thm}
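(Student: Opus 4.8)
The plan is to adapt the partition-based proof of the functional It\^o formula due to Dupire \cite{Dupire} and Cont and Fourni\'e \cite{CF,CF13}, carrying along the additional finite-variation directions $A_1,\dots,A_m$ as in \cite{Vol15}. Fix $n$ and recall the approximating paths $X^n$ from \eqref{Xn1}. The starting point is to express $F(T,X^n,A)-F(0,X^n,A)$ as a telescoping sum over the successive grid points $s\in\mathbb T_n\cap[0,T]$ and to split each summand by inserting an intermediate evaluation in which time and $A$ have advanced to the next grid point while the path value is still frozen at its current level --- the frozen path being precisely $X^{n,s-}$. Each summand thereby splits into a \emph{horizontal increment}, in which only $t$ and $A$ change, and a \emph{vertical increment}, in which only the terminal value is perturbed by the one-step increment $X(s')-X(s)$. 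The two boundary terms converge, $F(T,X^n,A)\to F(T,X,A)$ and $F(0,X^n,A)\to F(0,X,A)$, because $X^n\to X$ uniformly (as $X$ is continuous and the mesh tends to $0$) and $F$ is continuous at fixed times.

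The horizontal increments require no expansion: by the very definition of horizontal differentiability \eqref{horizontal derivatives}, each one equals $\sum_{i=0}^m\int_s^{s'}\mathscr D_iF(r,X^{n,s-},A)\,\ud A_i(r)$, where $A_0(r)=r$. Summing over the grid and letting $n\uparrow\infty$, I would use that the frozen approximating paths converge uniformly to $X$ together with the continuity at fixed times and the boundedness-preserving property of $\mathscr DF$, so that dominated convergence against the finite total-variation measures $\ud|A_i|$ yields the term $\sum_{i=0}^m\int_0^T\mathscr D_iF(r,X,A)\,\ud A_i(r)$ of \eqref{Ito formula}.

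The vertical increments are expanded to second order in the vertical variable, so that each becomes $\nabla_XF(s,X^{n,s-},A)\cdot(X(s')-X(s))+\tfrac12\sum_{i,j=1}^d\partial_{ij}F(s,X^{n,s-},A)\big(X_i(s')-X_i(s)\big)\big(X_j(s')-X_j(s)\big)+\rho_s$, with a remainder $\rho_s$. Summed over the grid, the first-order terms are exactly the Riemann sums in \eqref{Itointcont}; their convergence --- which simultaneously establishes that the pathwise It\^o integral in \eqref{Itointcont} exists --- comes out a posteriori, once the other groups of terms are shown to converge, because the full left-hand side telescopes to the convergent quantity $F(T,X,A)-F(0,X,A)$. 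The second-order terms will be shown to converge to $\tfrac12\sum_{i,j=1}^d\int_0^T\partial_{ij}F(s,X,A)\,\ud[X_i,X_j](s)$, and collecting the three limits yields \eqref{Ito formula}.

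The crux of the argument, and the step I expect to be the main obstacle, is the treatment of the second-order terms and the remainder. Two things have to be controlled at once. First, one must show $\sum_s|\rho_s|\to0$; this rests on the uniform continuity of $\nabla^2_XF$ on the range of the approximating paths --- which is relatively compact by the boundedness-preserving assumption --- combined with the bound $\sum_s|X(s')-X(s)|^2\le C$ furnished by the existence of the quadratic variations $X\in QV^d$. Second, one must identify the limit of the second-order sums with a Riemann--Stieltjes integral against $\ud[X_i,X_j]$, a measure that need \emph{not} be absolutely continuous. The natural route is to note that the discrete covariation measures placing mass $\big(X_i(s')-X_i(s)\big)\big(X_j(s')-X_j(s)\big)$ at each grid point converge vaguely to $\ud[X_i,X_j]$ on $[0,T]$ --- this follows from $X\in QV^d$, since the corresponding cumulative sums converge pointwise to the continuous function $[X_i,X_j]$ --- while the integrand $\partial_{ij}F(\cdot,X^{n,\cdot-},A)$ converges uniformly to $\partial_{ij}F(\cdot,X,A)$ by the left-continuity of $\partial_{ij}F$. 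Matching the left-continuous integrand to the increments of the integrator, exactly as in the scalar F\"ollmer calculus, then produces the claimed Stieltjes integral and completes the proof.
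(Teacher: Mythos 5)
The paper does not actually prove Theorem~\ref{cvfcont}: it is imported from \cite{Vol15} (and slightly extends the versions in \cite{Dupire,CF}), so there is no in-paper proof to compare against. Your outline reconstructs precisely the standard argument behind those references: telescope $F(\cdot,X^n,A)$ over the grid; split each increment into a horizontal part, which needs no expansion because horizontal differentiability \eqref{horizontal derivatives} is \emph{defined} by the integral identity (this is exactly why the extension to the directions $A_1,\dots,A_m$ goes through); expand the vertical part to second order; identify the limit of the second-order sums with the Stieltjes integral against $\ud[X_i,X_j]$ via weak convergence of the discrete covariation measures (polarization handling the signed off-diagonal case); and obtain the existence of the integral \eqref{Itointcont} a posteriori from the convergence of all other groups of terms. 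This is the right route and the structure is sound.

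One justification, however, needs repair. You claim the remainder estimate follows from ``uniform continuity of $\nabla^2_XF$ on the range of the approximating paths, which is relatively compact by the boundedness-preserving assumption.'' Boundedness-preserving yields only a uniform bound, not compactness, and the relevant family $\{X^{n,s-}+v\Ind{[s,T]}\,:\,s\in[0,T],\ n\ \text{large},\ |v|\ \text{small}\}$ is a sup-norm neighbourhood of the compact set $\{X^s: s\in[0,T]\}$; in the infinite-dimensional space $(D([0,T],U),\|\cdot\|_\infty)$ such a neighbourhood is \emph{not} relatively compact. The uniform vertical modulus of continuity you need for $\sum_s|\rho_s|\to0$ must instead be extracted from the assumed left-continuity of $\nabla^2_XF$ combined with the compactness of the set of stopped paths $\{(s,X^s)\}$; this is the content of the uniform-continuity lemma in Cont and Fourni\'e, and it is the one genuinely technical ingredient your sketch glosses over. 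With that lemma in place, the remaining steps (the bound $\sum_s|X(s')-X(s)|^2\le C$ from the existence of the quadratic variation, the vanishing of the remainders, and the weak-convergence identification of the second-order limit against the possibly non-absolutely-continuous measure $\ud[X_i,X_j]$) are correct as you describe them.
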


 \parskip-0.5em\renewcommand{\baselinestretch}{0.9}\small

\end{document}